\documentclass{article}
\usepackage{amsmath,hyperref}\usepackage{color}
\usepackage{amsfonts,amssymb, bm, bbm}
\usepackage{latexsym}
\usepackage{hyperref}
\usepackage{graphicx}
\usepackage{amsthm}
\usepackage{color}
\usepackage{tikz}
\usepackage{xcolor}
\usepackage{MnSymbol}
\usepackage{subcaption}
\newcommand{\ds}{\displaystyle}

\newcommand{\ba}{\begin{array}}
\newcommand{\ea}{\end{array}}

\renewcommand{\r}{\right}
\newcommand{\be}{\begin{equation}}
\newcommand{\ee}{\end{equation}}
\newcommand{\eps}{\varepsilon}

\newcommand{\mc}{\mathcal}

\newcommand{\1}{\mathbbm{1}}
\newcommand{\E}{\mathbb{E}}

\newcommand{\R}{\mathbb{R}}

\renewcommand{\P}{\mathbb{P}}

\newcommand{\se}{\text{ if }}

\def\E{\mathbb{E}}
\def\R{\mathbb{R}}

\def\P{\mathbb{P}}

\def\qed{\hfill \vrule height 7pt width 7pt depth 0pt\medskip}
\def\beq{\begin{equation}}
\def\eeq{\end{equation}}

\newtheorem{theorem}{Theorem}
\newtheorem{definition}{Definition}
\newtheorem{proposition}{Proposition}
\newtheorem{lemma}{Lemma}
\newtheorem{corollary}{Corollary}

\begin{document}
\author{Giacomo Como\footnote{\textit{\ 
 Department of Mathematical Sciences G. Lagrange, Politecnico di
Torino.}},
Fabio Fagnani\footnote{\textit{\ 
 Department of Mathematical Sciences G. Lagrange, Politecnico di
Torino.}}, 
 Elisa Luciano\footnote{\textit{
Corresponding author: Elisa Luciano, ESOMAS Department and Collegio Carlo Alberto, Universit\`a di Torino. Email: elisa.luciano@unito.it}},\\
Alessandro Milazzo\footnote{\textit{\ 
ESOMAS Department, Universit\`a di
Torino.}}, 
Marco Scarsini\footnote{\textit{\ 
 Department of Economics and Finance, LUISS, Roma.}}
\footnote{The Authors thank Luigi Bocola, Alessandro Dovis and  Alireza Tazbah-Salehi for helpful early-stage comments, as well as Alessandro Lavia, Patrizia Semeraro and Francesca Zucchi for later comments. Luca Damonte provided research assistantship. Any errors are ours. Financing from PRIN PNRR P2022XT8C8-Supply chain disruptions, financial losses and their prevention – NextGenerationEU – CUP:  D53D23017960001 is gratefully acknowledged.}
}

\title{Rigidity and default  in production networks}
\date{\today}

\maketitle
\begin{abstract}
    This paper studies the transmission of productivity shocks in general equilibrium production networks, when firms in different sectors operate under informational rigidity and rely on external debt.  Rigidity breaks the Modigliani-Miller irrelevance of leverage and may generate default following shocks, even in equilibrium. 
    
    The economy consists of firms, banks, and consumers. Under proportional shock transmission, we prove that a unique Walrasian rigid equilibrium exists and provide explicit expressions for equilibrium quantities, prices, and interest rates.
    We show that, on the one hand, Hulten's theorem fails under rigidity, even without leverage. 
    On the other hand, we prove that welfare is smaller than in the first best if and only if both leverage and rigidity exist. The latter increase the total cost of debt  
    and have inflationary effects on the levered sectors, which propagate downstream, and shift consumption and labor upstream.  
    
    The occurrence of default depends solely on real shocks and the network structure, while  the magnitude of the losses depends also on  the connectedness of the economy and  the cost of debt of the connected sectors. We provide conditions for default cascades to occur and study two examples of default  propagation. \\\
\begin{keywords}
 production networks with leverage, supply chain disruptions, production shocks, default, Walrasian equilibrium, rigidity
\end{keywords}

\begin{classcode}
 JEL Codes: D21, D24, G33
\end{classcode}
\end{abstract}


\section{Introduction}
Supply chain disruptions and other productivity shocks due, e.g., to pandemics, major natural disasters, wars or civil unrest, have recently surged and represent a serious threat to economic stability. The collapse of production we have witnessed during episodes of this type has generated firms' defaults  and  inability to pay debts back to banks and other financiers. Losses are no longer transmitted from the financial sector to the production sector; instead, they propagate in the opposite direction.

The consequences of real shocks would be less disruptive if producers could fully adapt to the shocks themselves. In particular, if they could perfectly anticipate all possible shocks and adjust their business decisions ---such as input of intermediate goods and labor--- as well as their financing decisions accordingly. They would simply operate at the first-best allocation under shocks and, in a competitive economy, still reach zero profits. As a consequence, they would still be able to pay back debts, under any shock realization, and would not harm their  creditors. This ideal scenario would produce greater welfare and would never lead to default.

In practice, what makes the interaction between the real and financial side inherently fragile is the presence of imperfect information.  In the real world, as emphasized by Pellet and Tahbaz-Salehi \cite{pellet2023rigid}, firms take decisions based on an imperfect knowledge of the possible shocks and,  as soon as a specific disruption occurs, they cannot  change their investment and production decisions at no cost and instantaneously. Production requires advance planning and organization of specific production lines that cannot be expanded or shrunk instantaneously.  Orders of intermediate goods, especially capital, take time to produce, ship, deliver, and disruptions of supply chains typically make this time longer, as we witnessed over the last years, due to wars and pandemics. Imperfect information therefore makes firms unable to take state-contingent decisions. Consistently with the literature, we call this inability \emph{rigidity}. 

Rigidity renders leverage relevant. By the time shocks realize, firms'  decisions are sunk and cannot be changed. Thus, firms  may suffer losses. They may even \emph{default} and, as a consequence, their financiers may be affected, because they recover less than due. This undermines the irrelevance of finance, or the separation of finance and operations, which holds in perfect capital markets à la Modigliani and Miller. 

The aim of this paper is to explore the consequences and transmission of real shocks in the presence of default. We examine whether and when these consequences are amplified by leverage, both before shocks occur and after. In the aftermath of real shocks occurrence, financiers may be hit. 

We start from  the observation that shocks propagate through firms' networks, and we distinguish between primitive and network-induced shocks. The former are the ones which originally hit a specific sector, the latter are transmitted from suppliers through input-output links. Sectors are made by homogeneous firms. They are also connected to financiers, or banks, who endogenously set the interest rate based on expected recovery. Default costs exist.

In this set-up, the paper explores the consequences in general equilibrium (GE) of the interaction between the decisions of the financiers and of the sectors, under rigidity.  We prove that, under an assumption of proportional shock transmission, a unique Walrasian rigid equilibrium exists and we explicitly provide its form.  

We show that Hulten's theorem (see Hulten \cite{hulten1978growth}) no longer holds in this setting, even in the absence of leverage: the effect of the shock in a sector is not related to its GDP share. As shocks take place, a sector with a small GDP share can have a disruptive effect on the overall welfare, as we witnessed over the last years. We prove that this is more likely to happen when the total shocks to a sector are large, in comparison with the shocks to other sectors, and its consequences are much more disruptive, the more levered the economy is.

Moreover, we prove that, due to rigidity, together with leverage, the first best is lost. Welfare in the rigid equilibrium is smaller than in the first best.  We describe the equilibrium outcome,  and how it compares to the first best, both in terms of prices and quantities. Prices increase and quantities decrease with the total cost of debt of a sector, which ---due to GE effects--- equals a weighted sum of the debt costs of the sector and its suppliers. The weights are given by the Leontief matrix, reflecting the lower importance of more distant sectors.

When a sector becomes levered, the prices of the goods it produces  increase and such  inflationary effects  propagate downstream, while optimal consumption and labor increase upstream. Thus, finance has both nominal and real consequences, even before shocks take place.

 We then focus  on default and sectors' losses. We comment on the conditions that lead to default and on the magnitude of the ensuing losses, which depend respectively on the real shocks and on the cost of debt, in both cases weighted using the Leontief matrix.
 
 Intuitively, default occurs for a sector when the shocks to neighboring nodes in the network are greater than its own: sectors are literally driven into default by their suppliers. So, default occurrence crucially depends on the network structure.
 
 When default occurs, the magnitude of losses depends on the overall cost of debt; this dependence is captured by a novel centrality measure, which we call \emph{discounted} Bonacich centrality, which weights upstream sectors by their debt-adjusted importance. For any given realization of the production shocks, the losses that a sector suffers are proportional to this centrality. As a result, we can study how amplification of losses in sectors that are downstream to a defaulting one occurs.
 
 Since default may be transmitted through the sectors' network, we establish conditions for default cascades to occur. We provide conditions  on the realization of shocks 
 under which either some or all of the subsequent nodes are affected.

The paper is structured as follows: Section \ref{lit} reviews the closest literature; Section \ref{sec:model} sets up our model; Section \ref{sec:eq} studies the existence and properties of the GE; Section \ref{sec: hulten} studies Domar's weights and Hulten's theorem failure; Section \ref{sec: unlevered} studies the subcase of an economy with informational frictions but no everage; Section \ref{sec: lev cons} studies the consequences of informational frictions and leverage  on real and nominal features, namely the demand for consumption, labor, as well as prices.  Section \ref{sec:def} is devoted to default occurrence, including default cascades,  and to the magnitude  of losses when default occurs; Section \ref{sec:ex} provides numerical examples and simulations, to illustrate how rigidity and  leverage affect the equilibrium outcome and the stochastic properties of sectors' profits and losses, depending on their network linkages. Section \ref{sec:conclu} summarizes and concludes. 
The proofs are collected in the Appendices.

\section{Related literature}\label{lit}

The background literature for this paper can be conveniently regrouped into four groups: the literature on network fragility, network models of the real side of the economy (supply networks or chains) without rigidity and debt, network models with rigidity and no debt, and network models without rigidity, but with debt. 

The fragility of modern, highly-interconnected economic systems due to wars, pandemics, social and geopolitical tensions is well known. The corresponding literature for supply chains, both empirical and theoretical, is rich (see Carvalho et al.\ \cite{carvalho2021supply}, Baqee and Farhi \cite{baqaee2019macroeconomic,baqaee2020productivity} and the review in Carvalho and Tazbah-Salehi \cite{carvalho2019production}). This literature concentrates on disruptions of supply chains, but, since firms are unlevered and they do not take rigid decisions, default cannot happen. An account of both the motivating evidence, the theoretical models of fragility description and some remedies  is in Elliott and Golub \cite{elliott2022networks}. To do so, the Authors study  supply and financial networks separately, and emphasize the role of failure costs in the latter: Networks can be either of the first or of the second type, and the two are not integrated.  They conclude with examples of policy actions to strengthen the resilience of supply or financial networks. We refer the reader not only to the evidence reported there, but also to the account of remedies for supply chains, which intervene on their structure (who supplies whom), and which we will not address here. Concerning our work, by linking firms in the supply networks to banks,  we will be able to suggest policy actions that act on the banks but aim at making the firms more resilient. With respect to direct changes in the supply network structure, policy interventions on the rate applied by banks to them or the amount of credit they can extend to specific nodes of the supply chain are relatively quick to impose.

The second bulk of literature we build on serves to model the real side of the economy before introducing rigidity and debt. In this respect, we use the approach typical of the GE literature on production networks, inspired by Long and Plosser \cite{long1983real} and Acemoglu et al.\ \cite{acemoglu2012network}. The corresponding literature  essentially aims at investigating the macroeconomic consequences of the microeconomic network structure of production and has three main features: first, the supply links are not relationship-based, and a specific firm is perfectly substitutable by one with the same production function. Second, the network (which supplies whom) is exogenous, and there is no attempt to optimize its links or modify its participants, for instance by changing the production function. Third, markets are competitive.  Recent surveys of this literature are Carvalho and Tazbah-Salehi \cite{carvalho2019production} and Baqee and Rubbo \cite{baqaee2023micro}, whilst a supply chain model which maintains the original approach but relaxes these three features is Acemoglu and Tazbah-Salehi \cite{acemoglu2024}.

Acemoglu at al.\ \cite{acemoglu2012network}, which is our main source of inspiration,  studies  a perfectly-competitive Cobb-Douglas economy  (production functions and consumer utility) in which  sectors adjust to shocks ex-ante by selecting  state-contingent production plans. 
As a consequence, input demands for intermediates, raw materials, and labor are state-dependent. Since decisions are made independently in each state, expected utility does not play a role and there is no rigidity. 
The absence of rigidity goes together with all equity financing. There is no debt financing, and obviously so: the way in which operations are financed is irrelevant, because without frictions such as rigidity the Modigliani-Miller separability holds. There is no debt and consequently no default.

The third important strand of literature that supports our paper is the rigid network one, in the absence of debt. The main mechanism embedded in the current paper is rigidity as in Pellet and Tahbaz-Salehi \cite{pellet2023rigid}. Also in their paper sectors are restricted in how effectively they can adjust their inputs to the economic environment (our states of the world), in the presence of  information frictions. This, as in our case, mimics the fact that in the real world production processes require advance planning and cannot be ramped up suddenly. There are lags in the delivery of inputs and they are relevant to shock transmission, aggregate output and welfare. 
Real rigidities together with informational frictions were examined in Angeletos et al.\ \cite{angeletos2016real}, but without allowing for an input-output network. 

In the economy of Pellet and Tahbaz-Salehi  \cite{pellet2023rigid} however there is no transmission of shocks from the real to the financial sector, because there is no external financing and no bank. 
Another key difference between our model and that of Pellet and Tahbaz-Salehi \cite{pellet2023rigid} is that we assume that all decisions are made by the sectors at time $t=0$, before observing the actual productivity shock realizations: as such, in our model, the sectors make their decisions based only on information available prior to the shock realization. In contrast, Pellet and Tahbaz-Salehi \cite{pellet2023rigid} allow part of the sectors' decisions to be made at time $t=0$ and part at time $t=1$: in particular, in their model, sectors choose their ``rigid intermediate input quantities'' at time $t=0$ based on partial information about the shocks, while at time $t=1$ they  ``optimally set their nominal prices and choose their labor and flexible intermediate input quantities to meet the realized demand'' upon fully observing the realizations of the shocks.

Rigidity arises from the need for advance production planning and the organization of specific production lines, as well as from the time required to produce, ship, and deliver intermediate goods, particularly when these are capital goods. Rigidity  is therefore different from the short- versus long-run adjustment to shocks embedded in Elliott and Jackson \cite{elliott2024supply}. In that paper indeed shocks come totally unpredicted, while in our model sectors know the prior distribution of shocks, which they update according to a fully informative or less-than-fully informative signal. We cover all possible information cases, not only the lack of information of Elliott and Jackson \cite{elliott2024supply}. As a consequence of the fact that shocks come as a surprise, sectors in Elliott and Jackson \cite{elliott2024supply} do not readjust the input-output decisions, and their production is simply constrained by the input shortages generated by the primitive shocks. As these shortages are transmitted through the network, in the short run  sectors seek to minimize shocks effects by maximizing revenues. In  the medium run prices adjust, and in the long run a new equilibrium, where profits are zero for every industry and Hulten's theorem holds, is reached. In our model, sectors use all available information to  form their optimal input decisions, even though they cannot make state-contingent orders, and therefore  reach the first best. As a consequence, they reach an equilibrium and trade inputs at shock-consistent prices before the actual shock occurs. When the shock hits, they can end up with profits or losses. Because of rigidity, Hulten's theorem fails. Therefore, the outcome of rigidity is very different from the mechanism in Elliott and Jackson \cite{elliott2024supply}, because the objective and therefore the assumptions differ.

A major consequence of rigidity in our paper is that Hulten's theorem fails to hold, and a single sector inability to produce the maximal or optimal quantities may have larger effects than the theorem commands. This is in line with anecdotal evidence from the recent years, where supply chain disruptions during COVID-19, wars or geopolitical tensions were responsible for aggregate losses well beyond the GDP share of the originally hit sector, as predicted by  Hulten's theorem. Semi-conductors are an example. 

The academic literature was aware of the limits of Hulten's theorem: major attempts to overcome them in GE are Baqee and Farhi \cite{baqaee2019macroeconomic,baqaee2020productivity}. The former paper preserves an efficient economy and studies non-linearities, which by definition are not captured by the differentials of the theorem. The second paper instead considers an inefficient economy, with arbitrary production functions and distortion wedges. We follow the second suggestion and explore how information frictions and default affect the impact of shocks, and may make it much larger than what the theorem says.

The fourth strand of literature we learned from is the one on production networks without rigidity but with debt. The need for external financing ---still without rigidity--- has  indeed been introduced by  Bigio and La'o \cite{bigio2020distortions}, with a model \`a la Acemoglu et al.\ \cite{acemoglu2012network}, through a limited enforcement (or cash-in-advance) mechanism. Inputs of intermediate goods and labor must be paid  before revenues are obtained, but after the occurrence of the shocks is revealed.  Sectors obtain loans from consumers themselves. The amount of the loan is state-dependent, so that there are never states of the world in which the cashed-in-advance outflow is greater than the optimal one, and cannot be refunded. Therefore, default never occurs. Our mechanism is different:  equity holders can receive external financing from debt holders (banks), who may ultimately obtain only a partial recovery if profits fall short. The banks require a debtor-specific interest rate, endogenously determined. In Bigio and La'o \cite{bigio2020distortions} the interest rate is exogenous.

The cash-in-advance mechanism, together with exogenous cost of debt, has been used also by Huremovic et al.\ \cite{huremovic2023production}, in an economy à la Acemoglou et al.\ \cite{acemoglu2012network} as far as the production technology (Cobb-Douglas) and the GE without rigidity are considered. 
We differ from this paper not only because we admit default and we set the interest rate endogenously, together with the other equilibium prices, but also  in the aim; while Huremovic et al.\ \cite{huremovic2023production} studies the consequences of changes in interest rates, and therefore the transmission of financial shocks to the real sector, when the two are related by credit, we aim at studying how production shocks, by generating cascades of defaults in the production network, affect the real as well as the financial sector. While in the aftermath of the Great Recession the question of contagion from the financial to the real side of the economy was indeed relevant, recent pandemic shocks on the economy, geopolitical tensions or war episodes command an understanding of whether and how much real shocks affect financiers. 

In the last sense our paper also differs from  Battiston et al.\ \cite{battiston2007credit}. Battiston et al.\ \cite{battiston2007credit} are interested in the demography of sectors and the correlation of output and bankruptcies. To do so, they investigate the effect of shocks to either prices  or produced quantities, which make firms unable to pay suppliers. Differently from Battiston et al.\ \cite{battiston2007credit}, we cannot have exogenous price shocks, since our prices are endogenous, and we examine the consequences of production shocks in GE, both on produced quantities and prices. Also, consistently with the Authors' aim, firms related by trade financing: they get credit from their suppliers. Only firms exist  and there are no banks.  Last but not least, there is no rigidity.\footnote{We consider the rich literature on financial networks {\it per se} not directly related to our work, because the banks in our model do not hedge the idiosyncratic risk from giving loans to different firms by engaging in any derivative or swap among themselves.}

\section{Economic model}\label{sec:model}
We consider a two-period  economy with a finite set $\mc V:=\{1,\ldots,n \}$ of \emph{sectors} or \emph{industries}. Every industry produces a single homogeneous good, which is partly used as an intermediate input for production by the other industries and partly consumed by a representative \emph{household}, who in turn supplies an aggregate constant unit of labor. 
Industries borrow from financiers ---that we call \emph{banks}---part of the capital needed to pay for their employed labor and consumption of intermediate goods. The profits of all industries and banks along with the total wage constitute the budget available to the representative household in order to pay for its own consumption of goods. 


\subsection{Sectors}\label{sec:industries}
We assume that sectors use constant returns-to-scale technologies to transform labor and intermediate inputs into their differentiated goods. Each sector is composed of a continuum of homogeneous firms that are randomly matched to the firms of the other sectors and to the representative household. Here, homogeneity amounts to that firms in the same sector have the same production function, and therefore they are perfectly substitutable.\footnote{\label{foot:hom }This is nested in the entry model of  Baqee and Farhi \cite{baqaee2020entry}.} 

We model production uncertainty in terms of a non-positive $n$-dimensional random vector $\eta$,\footnote{Throughout, we implicitly assume that $\eta$ and all random variables are defined on a common complete probability space $(\Omega,\mc F,\P)$.} whose entries $\eta_k\le0$ represent the \emph{primitive} log-productivity \emph{shocks} on the different sectors $k\in\mc V$.
Every log-productivity shock $\eta_k$ affects in the same way all firms within sector $k$.
The prior probability distribution of the log-productivity shock vector $\eta$ is assumed to be known to all actors in the economy. If the log-productivity shock vector $\eta$ is deterministic, then there is no uncertainty and the model is trivialized. For this reason, we assume that the log-productivity shock vector $\eta$ is not deterministic.\footnote{I.e., we assume that $\P(\eta=\eta^*)<1$ for every $\eta^*\in\R_-^{\mc V}.$}
In particular, this implies that there is always a positive probability that some primitive log-productivity shocks $\eta_k$ are strictly negative.
Throughout the paper, the adjective \emph{actual} refers to quantities (such as production, goods, assets, liabilities, etc.) that are determined after the realization of the shock $\eta$ and are therefore random.

We consider the Cobb-Douglas production functions 
\be\label{Cobb-Douglas-1}y^\eta_k=e^{\eta_k}\varsigma_kl_k^{\beta_k}\prod \limits_{j\in\mc V}(z^\eta_{jk})^{A_{jk}}\,,\ee
where for sector $k$:  $y_k^\eta$ is the actual (production) output; $z^\eta_{jk}$ is the actual used quantity of good $j$ and $l_k$ is the employed labor; $\beta_k\ge0$ is the labor share and $A_{jk}\ge0$ parametrizes the direct importance of the product $j$ in the sector $k$'s technology; 
and $\varsigma_k:=\beta_k^{-\beta_k}\prod_{j\in\mc V}A_{jk}^{-A_{jk}}$ is a positive normalization constant.\footnote{Throughout, we adopt the common convention $0^0=1$.} 
Our choice of the notation in Equation~\eqref{Cobb-Douglas-1} emphasizes the fact that actual output and actual amount of intermediate goods used in the production depend on the  primitive log-productivity shock vector $\eta$ ---with restrictions on the form of this dependence to be specified in the following---  hence they are themselves random. In the special case when the primitive log-productivity shock vector realization achieves its maximal value $\eta=0$, Equation~\eqref{Cobb-Douglas-1} reduces to $y^0_k=\varsigma_kl_k^{\beta_k}\prod\nolimits_{j\in\mc V}(z^0_{jk})^{A_{jk}}$. We shall refer to  $y_k^0$ and $z^0_{jk}$, respectively, as the \emph{maximal}\footnote{Sometimes referred to as ``steady state'' in the literature (e.g., in Pellet and Tahbaz-Salehi \cite{pellet2023rigid}). } (production) output and \emph{maximal}  quantity of good $j$ used by sector $k$. 
On the other hand, we are implicitly assuming that the quantities of labor $l_k$ employed by the different industries $k\in\mc V$ are not directly affected by the vector $\eta$ of the primitive log-productivity shocks. This is a standard assumption, capturing the idea that sectors must honor existing labor contracts and therefore continue to employ the contracted workforce despite production shocks.

Let $w$ denote the \emph{wage}, i.e., the  unit cost of the employed labor, and let $p_k$ be the unit costs of goods produced by sector $k\in\mc V$.  Therefore, the actual assets or revenues of each industry $k$ and  its actual liabilities due to the purchase of intermediate goods and the employment of labor are\footnote{Note that, importantly and differently from other network models, in our model productivity shocks affect both assets and liabilities.}
\be\label{assets}\mc A_k(\eta):=p_ky^\eta_k=p_ke^{\eta_k}\varsigma_kl_k^{\beta_k}\prod \limits_{j\in\mc V}(z^\eta_{jk})^{A_{jk}}\,,\ee
and
\be\label{liabilities}\mc L_k(\eta):=\sum\limits_{j\in\mc V}p_jz_{jk}^\eta+wl_k\,,\ee
respectively. 

Each industry $k\in\mc V$ borrows a given fraction $\theta_k$ of its actual liabilities $\mc L_k(\eta)$ from the banks at a rate $r_k\ge0$. The reader can interpret loans as credit lines. Differently from intermediate goods, that take time to produce and deliver and are rigid, loans can be issued or tailored to the firms' needs instantaneously. We assume that the leverage $\theta_k\in[0,1]$ is exogenously determined, while the interest rate $r_k$ is set by the banks themselves, as illustrated below in Section \ref{sec:bank}.\footnote{The leverage $\theta_k$ could be endogenized by introducing corporate taxes and the ensuing tax shield. We disregard the issue in this paper, because banks will in any case choose the product $r_k\theta_k$, namely adjust the interest rate to any level of leverage, and the equilibrium quantity will be characterized in terms of the product, not its components. See  Appendix \ref{sec:prop-zeta}.} We assume that, once the shocks are realized, labor and intermediate goods are paid first by the industries, while financiers receive only the minimum between what is due to them, i.e., the loan and its interests $(1+r_k)\theta_k\mc L_k(\eta)$, and what is available to the industry after labor and suppliers have been paid back, i.e.,\footnote{Throughout the paper, $[x]_+:=\max\{0,x\}$ stands for the positive part of a real number $x$. }$[\mc A_k(\eta)-(1-\theta_k)\mc L_k(\eta)]_+$. The remainder of what is due to the banks, i.e.,
\be\label{default-costs}\mc D_k(\eta):=\left[(1+r_k)\theta_k\mc L_k(\eta) -[\mc A_k(\eta)-(1-\theta_k)\mc L_k(\eta)]_+\right]_+ \,,\ee
is the \emph{default cost} that industry $k$ pays ---for instance for lawyers and labor costs for dismantling production lines--- but the banks do not collect.  The actual profit\footnote{For the sake of simplicity, we keep using the term ``profit'' for $\pi_k(\eta)$ instead of referring to it as profit or loss depending on its sign.} of industry $k$ is then  
\be\label{profit}\pi_k(\eta):=\mc A_k(\eta)-(1+r_k\theta_k)\mc L_k(\eta)
\,.\ee
As we show later in the paper, in the presence of informational frictions and rigidity, the actual profits $\pi_k(\eta)$  can be negative as well as positive in equilibrium, even though they are all zero in expectation. As a consequence, in our model, sectors may \emph{default} in equilibrium.

We model informational frictions and rigidity in the economy by assuming that all industries (as well as the representative household and banks, as detailed later in Sections \ref{sec:household} and  \ref{sec:bank}, respectively) make decisions at time $0$ based only on partial information on the actual realization of the log-productivity shock vector $\eta$. The actual realization becomes fully known only at time $1$, when all  decisions made at time $0$ and the corresponding payments are executed. 
In particular, we assume that, at time $0$, all economic actors observe  the same \emph{public signal} 
$\varphi(\eta)$, 
where $\varphi:\R_-^n\to\mc S$ is a map from the space of the log-productivity shock vector realizations to a non-empty set of public signals $\mc S$.\footnote{Throughout, both the signal set $\mc S$ and the map $\varphi$ are assumed to be measurable.} 
At time $0$, each industry $k\in\mc V$ decides the vector of maximal quantities of intermediate goods $(z^0_{jk})_{j\in\mc V}$ to order and the labor $l_k$ to employ, with the objective of maximizing its conditional expected profit 
\be\label{cond-exp-profit}\E[\pi_k(\eta)|\varphi(\eta)]\,,\ee
given the public signal. 

Observe that, as an extreme special case where $\mc S=\R_-^n$ and $\varphi$ is the identity map (i.e., $\varphi(\eta)=\eta$), we recover the case where industries make decisions informed by full observation of the realization of the primitive log-productivity shock vector $\eta$ and the rigidity vanishes. This happens when industries can adapt their decisions to the realization of the shock. In this case, the decision on intermediate goods of sector $k$ is not a vector of constants, but a  vector of random variables. Correspondingly, sector $k$ chooses a labor quantity per state, and not a single optimal labor quantity independent of the realization of the log-productivity shock vector $\eta$. 

At the other end of the spectrum, when $\varphi\equiv C$ is a constant map so that the public signal  $\varphi(\eta)=C$ carries no information about the realization of the primitive log-productivity shock vector $\eta$, industries make decisions based only on their common prior probability distribution of $\eta$. In this case, rigidity in the choice of intermediate quantities and labor {\it a fortiori} applies, and sector $k$ chooses an unconditional quantity of intermediate goods and labor. 

The signal function $\varphi$ allows us to parametrize the informational rigidity in the economy between the full information case\footnote{In fact, we obtain the full information case whenever $\eta$ is measurable with respect to $\varphi(\eta)$. Throughout this article, for the sake of simplicity, we refer to $\varphi(\eta)=\eta$ as the full information case.} $\varphi(\eta)=\eta$ and the no information case $\varphi(\eta)=C$. More precisely, we say that a signal function $\varphi:\R_-^{\mc V}\to\mc S$ is at least as informative as another signal function $\tilde\varphi:\R_-^{\mc V}\to\tilde{\mc S}$ if there exists a measurable map $\phi:{\mc S}\to\tilde{\mc S}$ such that $\tilde{\varphi}(\eta)=\phi(\varphi(\eta))$. This defines a partial ordering among signal functions, whereby constant signals $\varphi(\eta)=C$ are the least informative (no information) and the identity signal $\varphi(\eta)=\eta$ is the most informative. Notice that all signal functions would be equally informative if the log-productivity shock vector $\eta$ were deterministic, a degenerate case that we excluded from our treatment. 

In order to relate the maximal quantities of intermediate goods  ordered by industries at time $0$ (which, generally, do not depend directly on the actual realization of the log-productivity shock vector $\eta$ but only on its conditional distribution given the observed public signal $\varphi$) to the corresponding actual quantities  purchased and used by industries at time $1$ (which do depend on the realization of the log-productivity shock vector $\eta$), we consider the following \emph{proportional rationing} rule:
\be\label{proportional-rationing-1} z^{\eta}_{kj}/z^{0}_{kj}=y^{\eta}_k/y^{0}_k\,,\qquad \forall j,k\in\mc V\text{ s.t. } z^{0}_{kj}>0\,,\ y^0_k>0\,.\ee
Equation \eqref{proportional-rationing-1} entails that, if the log-productivity shock vector $\eta$ reduces the output of sector $k$ with respect to its maximal value $y^0_k$ (the one that would have been observed in the ideal scenario $\eta=0$) by some percentage, then  at time $1$ every industry $j$ that is a customer of sector $k$ receives a quantity $z_{kj}^{\eta}$ of that good that is decreased by the same percentage with respect to the maximal value $z_{kj}^0$ that had been ordered at time $0$. This proportional rationing rule is equivalent to assuming that all  customers have the same priority. 

\subsection{Banks }\label{sec:bank}

 Banks operate in perfect competition and free entry. There exists a representative bank that finances sector $k$, with a loan that covers a fraction  $\theta_k$ of  its actual liabilities.
The bank is assumed  to choose the interest rate $r_k$, while the loan amount is taken as given.

In order to describe the profits of the banks, 
first observe that their credit towards sector $k$ is $(1+r_k)\theta_k\mc L_k(\eta)$ 
and that the amount they will in fact be able to recover from sector $k$ is given by 
$$\mc R_k(\eta):=(1+r_k)\theta_k\mc L_k(\eta)-\mc D_k(\eta)\,.$$
It follows that the banks' profits\footnote{As for the industries, we refer to $\mc I_k(\eta)$ as the bank's profit, with the understanding that, when negative, they correspond to losses.} are 
\be\label{Bank-profit}
\mathcal I_k(\eta)
:=\mc R_k(\eta)-\theta_k\mc L_k(\eta)
=r_k\theta_k\mc L_k(\eta)-\mc D_k(\eta)
\,.
\ee

As with industries and the representative household, we assume that banks financing sector $k$ choose the interest rate $r_k$ at time $0$ based on the partial information provided by the observation of the public signal $\varphi(\eta)$. 
Banks are assumed to be risk-neutral and competitive,  and therefore set the interest rate $r_k$ so that their conditional expected profit from sector $k$ given $\varphi(\eta)$ is 
null, i.e., 
\be\label{bank-optimality}\E[\mc I_k(\eta)|\varphi(\eta)]
=0\,.\ee
As a consequence of Proposition \ref{prop-zeta}, that will be formally stated in Section \ref{sec:eq} and proved in Appendix \ref{sec:prop-zeta}, we will get that, for $\theta_k >0$, the solution $r_k$ to the previous equation exists and is unique. Such $r_k$ will be shown to be zero in the special case of perfect information, while it is typically positive when information is not full.\footnote{\label{theta0} For $\theta_k =0$, i.e., when the banks do not provide financing to a sector, equation  \eqref{bank-optimality} is satisfied for all real values of the interest rate. In what follows, in order to include the case without leverage in our comments and to be able to discuss the solutions for every $\theta_k$ in $[0,1]$, we take as solution for $\theta_k=0$ the limit of the solutions of \eqref{bank-optimality} as $\theta_k \downarrow0$. In Section \ref{sec:eq} below we show that it is zero.}

\subsection{Representative household}\label{sec:household}
The representative household consumes quantities $c_k^{\eta}$, for $k\in\mc V$, of the produced goods. 
We assume that she has a constant returns-to-scale Cobb-Douglas utility function  
\be\label{Cobb-Douglas-2}U(c^{{\eta}})=\chi\prod_{k\in\mc V}(c_k^{\eta})^{\gamma_k}\,,\ee
where $\gamma_k\ge0$ is the \emph{consumer preference} weight for the good produced by the industries $k\in\mc V$ and $\chi:=\prod_{k\in\mc V}\gamma_k^{-\gamma_k}$ is a positive normalization constant. 

The representative household receives labor wage as worker, obtains profits as owner of both industries and banks, and collects all default  costs.\footnote{Note that firms' profits include default costs, which are therefore paid to consumers. They are sunk to firms but not to the economy.} To compute the representative household's endowment, first notice that Equations \eqref{profit} and \eqref{Bank-profit} yield $$\pi_k(\eta)+\mc I_k(\eta)+\mc D_k(\eta)=\mc A_k(\eta)-(1+r_k\theta_k)\mc L_k(\eta)+r_k\theta_k\mc L_k(\eta)=\mc A_k(\eta)-\mc L_k(\eta)\,,$$ for every sector $k\in\mc V$. Then, also using the unitary supply of labor assumption, it follows that the representative household's endowment amounts to
\be\label{budget}
\mc E(\eta)
:=\ds w\sum_{k\in\mc V}l_k+\sum\limits_{k\in\mc V}\pi_k(\eta)+\sum\limits_{k\in\mc V}\mc I_k(\eta)+\sum\limits_{k\in\mc V}\mc D_k(\eta)
=\ds w+\sum_{k\in\mc V}(\mc A_k(\eta)-\mc L_k(\eta))\,.
\eeq
Analogously to the industries, the representative household makes decisions at time $0$ on the maximal quantities $c^{0}_k$ of the different goods to order, 
aiming at maximizing the conditional expected value of her utility
\be\label{cond-exp-utility}\E[U(c^{\eta})|\varphi(\eta)]\,,\ee
given the public signal $\varphi(\eta)$, 
under the budget constraint  
\beq\label{budget-constraint} \sum\limits_{k\in\mc V} c^{\eta}_kp_k\leq\mc E(\eta).\ee
Again, similarly to the mechanism described in Section \ref{sec:industries} for the industries, the actual consumptions of the representative household are determined by the following proportional rationing rule
\be\label{proportional-rationing-2} c^{\eta}_{k}/c^{0}_{k}=y^{\eta}_k/y^{0}_k\,,\qquad \forall k\in\mc V\text{ s.t. } c^{0}_{k}>0\,,\ y^0_k>0\,.\ee


\subsection{Equilibrium definition}\label{eqs}
Throughout the paper, we consider Cobb-Douglas economies parametrized by the  matrix $A=(A_{jk})_{j,k\in\mc V}$, the non-negative vector $\beta=(\beta_k)_{k\in\mc V}$ of the intermediate product importance parameters of the labor shares, and the non-negative vector $\gamma=(\gamma_k)_{k\in\mc V}$ of the final consumer preferences. 
Our constant returns-to-scale assumptions amount to
\be\label{normalization-1} \sum_{j\in\mc V}A_{jk}+\beta_k=1\,,\qquad\forall k\in\mc V\,,\ee
and
\be\label{normalization-2} \sum_{k\in\mc V}\gamma_k=1\,.\ee 
Hence, we can identify a Cobb-Douglas economy with constant returns-to-scale by the pair $(A,\gamma)$, where $A$ is a non-negative square with column sums not exceeding $1$, and $\gamma$ is a non-negative vector whose entries sum up to $1$.

We propose the following definition of \emph{rigid} general equilibrium (GE).

\begin{definition}\label{def:Walras-bank}
Consider a constant returns-to-scale Cobb-Douglas economy $(A,\gamma)$ and leverage vector $\theta$. 
Let $\eta$ be a primitive log-productivity shock vector and  let $\varphi(\eta)$ be a public signal. 
Then, a \emph{$\varphi$-rigid Walrasian equilibrium} is a tuple $$(y^0,z^0,l,c^0,r,p,w)$$ such that: 
\begin{enumerate} 
\item[(i)] the employed labor $l_k$ and the maximal quantities of intermediate goods $(z^0_{jk})_{j}$ ordered at time $0$ by each industry $k\in\mc V$,  maximize its conditional expected profit $\E[\pi_k(\eta)|\varphi(\eta)]$;
\item[(ii)] the household's maximal consumption vector $c^0$ ordered at time $0$ maximizes the conditional expected consumer utility $\E[U(c^{\eta})|\varphi(\eta)]$ under the shock-dependent budget constraint \eqref{budget-constraint}; 
\item[(iii)] for every $k\in\mc V$, the banks  financing sector $k$ make zero conditional expected profit $\E[\mc I_k(\eta)|\varphi(\eta)]=0$; 
\item[(iv)] the markets for goods clear
\be\label{market-clearing}y_k^\eta=\sum\limits_{j\in\mc V}z^\eta_{kj}+c^\eta_k\,,\qquad\forall k\in\mc V\,;\ee
\item[(v)] the market for labor clears 
\be\label{labor-clearing}\sum\limits_{k\in\mc V}l_k=1\,;\ee
\end{enumerate}
\end{definition}

It is worth pointing out a few key features in our model and in the notion of rigid Walrasian equilibrium proposed above. First, recall that the decision variables $z_{jk}^0$ and $l_k$ (for industries), $c_0^k$ (for the representative household), and $r_k$ (for banks), are measurable with respect to the public signal $\varphi(\eta)$ that is observable by all economic actors at time $0$. Hence, they depend on the log-productivity shock vector $\eta$ only through $\varphi(\eta)$. In particular, in the special case of no information, these quantities do not depend at all on the realization of the log-productivity vector $\eta$. In fact, the same holds true for the prices $p_k$ and the wage $w$, as well as for the conditional expectations of the industries' profits $\E[\pi_k(\eta)|\varphi(\eta)]$, of the consumer utility $\E[U(c^{\eta})|\varphi(\eta)]$, and of the banks' profits $\E[\mc I_k(\eta)|\varphi(\eta)]$, which appear in items (i), (ii), and (iii), respectively, of Definition \ref{def:Walras-bank}. In contrast, actual industry productions $y_k^{\eta}$, intermediate purchases quantities $z_{jk}^{\eta}$, and household consumptions $c_k^{\eta}$ do depend on the actual realization of the log-productivity shock vector $\eta$, and this dependence will be made explicit in Proposition \ref{prop:actual}. Hence, in particular, the market for goods clearing constraints \eqref{market-clearing} as well as the consumer budget constraint \eqref{budget-constraint} are contingent on the realization of the  log-productivity shock vector $\eta$ (whereas the market for labor  clearing constraint \eqref{labor-clearing} only involves quantities that are measurable with respect to the public signal $\varphi(\eta)$). 

Finally, notice that, as is common in these models, $\varphi$-rigid equilibria can be uniquely identified only up to a numeraire. In fact, if $(y^0,z^0,l,c^0,r,p,w)$ is a $\varphi$-rigid equilibrium, then so is $(y^0,z^0,l,c^0,r,\kappa p,\kappa w)$, for every positive-real-valued $\varphi(\eta)$-measurable random variable $\kappa$. For this reason, we shall refer to a $\varphi$-rigid equilibrium as unique if all $\varphi$-rigid equilibria have the same $y^0$, $z^0$, $l$, $c^0$, $r$, and $p/w$.

\section{Rigid equilibrium analysis\label{sec:eq}} 

\subsection{Some auxiliary quantities of interest}
Observe that the constant returns-to-scale assumption \eqref{normalization-1} for industries' production functions implies that no column sum of the matrix $A$ exceeds one. Hence, $A$ is a column-substochastic matrix, so that its spectral radius, i.e., the largest magnitude of its eigenvalues, does not exceed $1$. Throughout, we shall assume that the spectral radius of the matrix $A$ is strictly less than one.\footnote{ This is a mild assumption as it is equivalent to saying that for every industry $k$, there exists a path $i_0,i_1,\dots, i_l=k$ such that $\beta_{i_0}>0$ and $A_{i_{s-1}i_{s}}>0$ for every $s=1,\dots , l$. In particular, this holds true if $\beta_k>0$ for all $k$ (i.e., if every sector employs some labor), but it is in fact a less restrictive condition (it is sufficient that every sector is connected through an arbitrary number of hops to at least a sector that employs labor).} As a consequence, the commonly known economy's \emph{Leontief inverse} $L:=(I-A')^{-1}$ is a well-defined matrix that can be expressed as the series of the powers of $A'$, \be\label{Leontief-expansion}L=(I-A')^{-1}=\sum_{h=0}^{+\infty}(A')^h\,.\ee Hence, in particular, the economy's Leontief inverse matrix $L$ has all non-negative entries. In particular, the $(j,k)$'th entry $L_{jk}=\delta^k_j+A_{kj}+(A^2)_{kj}+\ldots$ of the Leontief matrix measures the total (direct and indirect) importance of product $k$ in sector $j$'s technology. 
 We shall refer to an industry $j$ as a \emph{supplier} of another industry $k$ if either $j\ne k$ and $L_{kj}>0$ or $j=k$ and $L_{kk}>1$; notice that this definition includes both \emph{direct} suppliers, i.e., sectors $j$ such that $A_{jk}>0$,  and \emph{indirect} suppliers, i.e., sectors $j$ such that $A_{jk}=0$ and $(A^{h})_{jk}>0$ for some $h\ge2$. Symmetrically, we shall refer to an industry $j$ as a \emph{customer} of another industry $k$ if either $j\ne k$ and $L_{jk}>0$ or $j=k$ and $L_{kk}>1$, thus including both direct customers, i.e., sectors $j$ such that $A_{kj}>0$,  and indirect customers, i.e., sectors $j$ such that $A_{kj}=0$ and $(A^{h})_{kj}>0$ for some $h\ge2$. 

It is standard to refer to the $k$'th entry \be\label{Bonacich-def}v_k^0=\sum_{j\in\mc V}\gamma_jL_{jk}\,,\ee 
of the vector $v^0=L\gamma$, as the  \emph{Bonacich centrality measure} of the sector $k$ in a constant returns-to-scale Cobb-Douglas economy $(A,\gamma)$.  Observe that the Bonacich centrality $v^0_k$ of a sector $k$ is positive if and only if either $\gamma_k>0$ (i.e., sector $k$ is a direct supplier of the final consumer) or $\gamma_k=0$ and $k$ is a supplier of some sector $j$ with $\gamma_j>0$ (i.e., sector $k$ is an indirect supplier of the final consumer).

A primary role in the propagation of the shocks in the economy is played by the random vector of the \emph{total} log-productivity shocks, defined as \be\label{rho}\rho:=L\eta=(I-A')^{-1}\eta\,.\ee
Observe that Equations \eqref{rho} and \eqref{Leontief-expansion} imply that the total log-productivity shock in an industry $k\in\mc V$, 
 \be\label{rho-def}\rho_k=\sum_{j\in\mc V}L_{kj}\eta_j=\eta_k+\sum_{j\in\mc V}\eta_jA_{jk}+\sum_{i\in\mc V}\sum_{j\in\mc V}\eta_iA_{ij}A_{jk}+\ldots\,,\ee 
 is the aggregate of the primitive log-productivity shock $\eta_k$ on that industry, the sum $\sum_{j\in\mc V}\eta_jA_{jk}$ of the primitive log-productivity shocks $\eta_j$ in the direct suppliers $j$ each weighted by the importance of product $j$ in sector $k$'s technology, the weighted sum $\sum_{i\in\mc V}\sum_{j\in\mc V}\eta_iA_{ij}A_{kj}$ of the primitive log-productivity shocks $\eta_i$ in the industries $i$ supplying sector $k$'s direct  suppliers, and so on, accounting for all indirect effects of the primary log-productivity shocks in the industries upstream to sector $k$ in the production network. The following result illustrates the role of the log-productivity shocks in connection with the proportional rationing rule \eqref{proportional-rationing-1}: its proof is presented in Appendix \ref{sec:proof-proposition1}. 

\begin{proposition}\label{prop:actual}
Consider a constant returns-to-scale Cobb-Douglas economy $(A,\gamma)$. Let $l$ be a labor vector satisfying the clearing condition \eqref{labor-clearing} and let $\eta$ be a primitive log-productivity shock. Then: 
\begin{enumerate}
\item[(i)] if the Cobb-Douglas production relation \eqref{Cobb-Douglas-1}, the proportional rationing rule \eqref{proportional-rationing-1} and the market for goods clearing condition \eqref{market-clearing} are satisfied for every realization of the log-productivity shock vector $\eta$, then \be\label{actual-quantities} y^{\eta}_k=e^{\rho_k}y^{0}_k\,,\qquad 
z^{\eta}_{jk}=e^{\rho_j}z^{0}_{jk}\,,\qquad
c^{\eta}_k=e^{\rho_k}c^{0}_k\,,\ee 
is satisfied for every sectors $k$ and $j\in\mc V$, 
where $\rho_k$ is the network-induced log-productivity shock, as defined in \eqref{rho-def};
\item[(ii)] if the actual productions, intermediate quantities, and household consumptions satisfy  Equation \eqref{actual-quantities}  for every sectors $k$ and $j\in\mc V$,  where $\rho_k$ is the network-induced log-productivity shock, as defined in \eqref{rho-def}, then the Cobb-Douglas production relation \eqref{Cobb-Douglas-1} and the market for goods clearing condition \eqref{market-clearing} are satisfied for every realization of the log-productivity shock vector $\eta$ if and only if they are satisfied for $\eta=0$. 
\end{enumerate}

\end{proposition}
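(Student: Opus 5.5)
The plan is to work in logarithms and reduce everything to the linear fixed-point identity that defines $\rho$. Since $L=(I-A')^{-1}$, the vector $\rho=L\eta$ is the unique solution of
\[
\rho=\eta+A'\rho,\qquad\text{i.e.}\qquad \rho_k=\eta_k+\sum_{j\in\mc V}A_{jk}\rho_j\quad\forall k\in\mc V .
\]
Both parts will follow by substituting exponential scalings into \eqref{Cobb-Douglas-1} and \eqref{market-clearing} and matching exponents against this identity.

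For part (i), I first handle the sectors with $y^0_k>0$. On these sectors I define the log-rationing factor $x_k:=\log(y^\eta_k/y^0_k)$. The proportional rationing rule \eqref{proportional-rationing-1} gives $z^\eta_{jk}=e^{x_j}z^0_{jk}$ whenever $z^0_{jk}>0$. Similarly, \eqref{proportional-rationing-2} gives $c^\eta_k=e^{x_k}c^0_k$ whenever $c^0_k>0$. Plugging the intermediate-good scaling into \eqref{Cobb-Douglas-1} for sector $k$ yields
\[
y^\eta_k=e^{\eta_k}\varsigma_k l_k^{\beta_k}\prod_j (z^0_{jk})^{A_{jk}}\,e^{\sum_j A_{jk}x_j}=e^{\eta_k+(A'x)_k}\,y^0_k ,
\]
where I used \eqref{Cobb-Douglas-1} at $\eta=0$ and the fact that labor $l_k$ is not shock dependent. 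Hence $x_k=\eta_k+(A'x)_k$. By uniqueness of the solution (spectral radius of $A$ less than one), this gives $x=\rho$, and all three formulas in \eqref{actual-quantities} follow.

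The main obstacle is the degenerate entries, where the ratios are undefined: $y^0_k=0$, $z^0_{jk}=0$, or $c^0_k=0$. Here I would use market clearing. If $y^0_k=0$, then \eqref{market-clearing} at $\eta=0$ together with nonnegativity forces $z^0_{kj}=0$ and $c^0_k=0$. I must then show that the actual quantities vanish as well. Concretely, I need that an unordered good is never delivered ($z^\eta_{jk}=0$ when $z^0_{jk}=0$). This is implicit in the model, since actual deliveries cannot exceed orders. Granting it, a sector with $y^0_k=0$ lacks some essential input: either $l_k=0$ with $\beta_k>0$, or $z^0_{jk}=0$ with $A_{jk}>0$. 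The same input is missing under any shock, so $y^\eta_k=0$ by \eqref{Cobb-Douglas-1}. Clearing at $\eta$ then gives zero deliveries and zero consumption of good $k$. Consequently \eqref{actual-quantities} holds trivially on all degenerate entries, and these entries contribute factors $0^0=1$ consistently in the Cobb-Douglas computation above. The factors with $z^0_{jk}=0$ and $A_{jk}=0$ are harmless under the convention $0^0=1$.

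For part (ii), I substitute \eqref{actual-quantities} into \eqref{Cobb-Douglas-1}. The left side becomes $e^{\rho_k}y^0_k$. The right side becomes $e^{\eta_k+\sum_jA_{jk}\rho_j}\,\varsigma_kl_k^{\beta_k}\prod_j(z^0_{jk})^{A_{jk}}$. By the fixed-point identity the two exponential factors coincide and are positive. Hence the relation at $\eta$ is equivalent to the relation at $\eta=0$. For clearing, \eqref{market-clearing} at $\eta$ reads
\[
e^{\rho_k}\Big(y^0_k-\sum_j z^0_{kj}-c^0_k\Big)=0 ,
\]
using $z^\eta_{kj}=e^{\rho_k}z^0_{kj}$ from \eqref{actual-quantities}. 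This holds if and only if the bracket vanishes, which is exactly clearing at $\eta=0$. The implication from "every $\eta$" to "$\eta=0$" is immediate, so this completes the equivalence.
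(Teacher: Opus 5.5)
Your proof is correct and follows the paper's route: the log-scaling factors solve $x=\eta+A'x$, so $x=\rho$ by invertibility of $I-A'$, and part (ii) is the same substitute-and-factor-out-$e^{\rho_k}$ argument. Your handling of degenerate entries ($y^0_k=0$, zero orders, and the implicit convention that an unordered good is never delivered) goes beyond the paper, whose proof assumes all ratios are well defined. Also, $c^\eta_k=e^{\rho_k}c^0_k$ follows from market clearing at $\eta$ and at $0$ once the $y$- and $z$-scalings are known, so you need not invoke \eqref{proportional-rationing-2}, which is not among the hypotheses of (i).
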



It is now convenient to introduce the \emph{information-normalized total shock} on sector $k\in\mc V$, 
\be\label{tau-def}\tau_k:=\frac{e^{\rho_k}}{\E[e^{\rho_k}|\varphi(\eta)]}\,,\ee
i.e., the exponential of the sector's total log-productivity shock normalized by its own conditional expected value given the public signal $\varphi(\eta)$ observable by all the actors in the economy at time $0$. Let also
\be\label{epsk}\eps_k:=\sum_{j\in\mc V}\tau_jA_{jk}+\beta_k\,,\ee
be the \emph{information-normalized total shock to the suppliers}  of a sector $k\in\mc V$, i.e., the sum of information-normalized total shocks to all suppliers of sector $k$, weighted by their importance, including the shock to labor (which is equal to one, since labor does not change its productivity) multiplied by the labor share. 
In the special case of full information (i.e., when $\varphi(\eta)=\eta$) for every sector $k\in\mc V$ we have that $\E[e^{\rho_k}|\varphi(\eta)]=e^{\rho_k}$, so that $\tau_k=1$ and $\eps_k=\sum_jA_{jk}+\beta_k=1$, thanks to the constant returns-to-scale assumption \eqref{normalization-1} for the industries' production functions: hence, in this case, $\tau_k=\eps_k=1$ are deterministic. 
On the other hand, in the general case of partial information,  $\tau_k$ and $\eps_k$ are random variables with conditional expected values 
\be\label{Etauk=1}\E[\tau_k|\varphi(\eta)]=1\,,\qquad \forall k\in\mc V\,,\ee
and 
\be\label{Eepsk=1}\E[\eps_k|\varphi(\eta)]=\sum_{j\in\mc V}A_{jk}+\beta_k=1\,,\qquad \forall k\in\mc V\,,\ee
respectively, where the latter follows again from the constant returns-to-scale assumption \eqref{normalization-1} for the industries' production functions. The randomness of $\tau_k$ reflects the remaining uncertainty on the total shock $e^{\rho_k}$ on sector $k$ upon the observation of the public signal $\varphi(\eta)$ at time $0$ by all actors in the economy. Similarly, the randomness of $\eps_k$ reflects the remaining uncertainty on the aggregate of the total shocks  $e^{\rho_j}$ on the suppliers $j$ of sector $k$ weighted by their importance $A_{jk}$ in the production technology of industry $k$.

We now present the following result, which is proved in  Appendix \ref{sec:prop-zeta}. There we demonstrate that the zero conditional expected profit assumption \eqref{bank-optimality} can be written in terms of the transform $\zeta_k$ of the product $r_k \theta_k$, where $1+r_k \theta_k= e^{\zeta_k}$. For this reason, we shall refer to $\zeta_k$ as the \emph{primitive cost of debt} on sector $k$.

\begin{proposition}\label{prop-zeta}
For every sector $k\in\mc V$ and leverage value $\theta_k\in[0,1]$, the equation \be\label{distortion-def}
\E\left[\left[e^{\zeta_k}\tau_k\right]^{e^{\zeta_k}\eps_k}_{(1-\theta_k)\eps_k}
\big|\varphi(\eta)\right]=1\,,
\ee
 admits a unique non-negative solution $\zeta_k=\zeta_k(\theta_k,\varphi(\eta))$. Such solution is non-decreasing as a function of $\theta_k$, with \be\label{zeta-min}\zeta_k(0,\varphi(\eta))=0\,,\ee
and 
\be\label{zeta-max}\zeta_k(1,\varphi(\eta))=-\log\E[\min\{\eps_k,\tau_k\}|\varphi(\eta)]\,.\ee 
Furthermore, in the special case 
of full information (i.e., when $\varphi(\eta)=\eta$), we have 
\be\label{zeta=0}\zeta_k(\theta_k,\eta)=0\,,\qquad\forall \theta_k\in[0,1]\,.\ee 
\end{proposition}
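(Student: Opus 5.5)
The plan is to read $[x]^{b}_{a}$ as the clipping $\min\{\max\{x,a\},b\}$. I then change variable to $x:=e^{\zeta_k}\ge 1$ and reduce \eqref{distortion-def} to finding the root of a convex function of $x$ on $[1,+\infty)$. Since $1-\theta_k\le 1\le x$ and $\eps_k\ge0$, the lower bound $(1-\theta_k)\eps_k$ never exceeds the upper bound $x\eps_k$. This gives the pointwise identity
\[
\left[x\tau_k\right]^{x\eps_k}_{(1-\theta_k)\eps_k}= x\min\{\tau_k,\eps_k\}+\left[(1-\theta_k)\eps_k-x\tau_k\right]_+ .
\]
So \eqref{distortion-def} is equivalent to $h(x)=0$, where
\[
h(x):=x\,\E[\min\{\tau_k,\eps_k\}|\varphi(\eta)]+\E\left[\left[(1-\theta_k)\eps_k-x\tau_k\right]_+\big|\varphi(\eta)\right]-1 .
\]
The function $h$ is convex and continuous in $x$: the first term is linear and the second is an expectation of convex functions.

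For existence, I evaluate $h$ at the two ends of $[1,+\infty)$.
\begin{itemize}
\item At $x=1$, using \eqref{Etauk=1} and \eqref{Eepsk=1}, I rewrite $h(1)=\E\left[[(1-\theta_k)\eps_k-\tau_k]_+-[\tau_k-\eps_k]_+\,|\varphi(\eta)\right]$.
\item Since $[(1-\theta_k)\eps_k-\tau_k]_+\le[\eps_k-\tau_k]_+$, this gives $h(1)\le\E[\eps_k-\tau_k|\varphi(\eta)]=0$, with equality when $\theta_k=0$.
\item As $x\to\infty$, $h(x)\to+\infty$, because $\E[\min\{\tau_k,\eps_k\}|\varphi(\eta)]>0$: here $\tau_k>0$, and $\eps_k>0$ by the spectral-radius assumption (each sector reaches labor through a path of positive $A$-entries).
\end{itemize}
Hence a root exists in $[1,\infty)$, i.e.\ a non-negative $\zeta_k$. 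For $\theta_k=0$ the root is $x=1$, which gives \eqref{zeta-min} and is consistent with the limiting convention in footnote \ref{theta0}. For $\theta_k=1$ the bracket term vanishes, so $h(x)=x\E[\min\{\tau_k,\eps_k\}|\varphi(\eta)]-1$, which gives \eqref{zeta-max}. In full information $\tau_k=\eps_k=1$, so $h(x)=x+[1-\theta_k-x]_+-1=x-1$ on $[1,\infty)$, whose only root is $x=1$; this gives \eqref{zeta=0}.

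For uniqueness, I use convexity. If $h(1)<0$, a convex function that is negative at $1$ and tends to $+\infty$ crosses zero exactly once on $[1,\infty)$. If $h(1)=0$, I must rule out $h$ dipping below zero and coming back. To do so I compute the right derivative at $1$:
\[
h'(1^+)=\E[\min\{\tau_k,\eps_k\}|\varphi(\eta)]-\E[\tau_k\1\{\tau_k<(1-\theta_k)\eps_k\}|\varphi(\eta)].
\]
This is nonnegative because $\min\{\tau_k,\eps_k\}\ge\tau_k\1\{\cdot\}$ pointwise. By convexity, $h$ is then non-decreasing on $[1,\infty)$, and strictly increasing past its zero set as long as $h\to\infty$. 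The residual worry is a flat piece of zeros. Ruling it out requires strict positivity of the slope, i.e.\ $\E[\min\{\tau_k,\eps_k\}|\varphi(\eta)]>\E[\tau_k\1\{\tau_k<(1-\theta_k)\eps_k\}|\varphi(\eta)]$ whenever $h(1)=0$. This holds because $\min\{\tau_k,\eps_k\}=\tau_k\1\{\cdot\}$ would force $\min\{\tau_k,\eps_k\}=0$ on the complementary event, which is impossible since both are positive.

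For monotonicity in $\theta_k$, I note that for each $x$, $h(x)$ is pointwise non-increasing in $\theta_k$ through the bracket term. The unique root, being the point where the non-decreasing $h$ crosses zero, is therefore non-decreasing in $\theta_k$.

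I expect the main obstacle to be the uniqueness step in the boundary case $h(1)=0$, together with making the positivity of $\eps_k$ and of $\E[\min\{\tau_k,\eps_k\}|\varphi(\eta)]$ rigorous almost surely conditionally on $\varphi(\eta)$. The reformulation and the endpoint evaluations are routine once the clipping identity is in place.
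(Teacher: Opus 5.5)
Your reduction matches the paper's. With $x=e^{\zeta_k}$, your $h$ is the paper's $f(x,\theta_k)$, because $[x\tau_k]^{x\eps_k}_{(1-\theta_k)\eps_k}=\max\{x\min\{\tau_k,\eps_k\},(1-\theta_k)\eps_k\}$. Your endpoint bounds, the $\theta_k\in\{0,1\}$ and full-information computations, and the comparison argument for monotonicity in $\theta_k$ are all sound.

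The step that fails as written is the strictness claim in the case $h(1)=0$. Suppose $\E[\min\{\tau_k,\eps_k\}|\varphi(\eta)]=\E[\tau_k\1_{\{\tau_k<(1-\theta_k)\eps_k\}}|\varphi(\eta)]$. Together with the pointwise inequality, this only gives $\min\{\tau_k,\eps_k\}=0$ almost surely on $\{\tau_k\ge(1-\theta_k)\eps_k\}$. Positivity of $\tau_k,\eps_k$ yields a contradiction only if that event has positive conditional probability, and you never show this. If the event were null, $h'(1^+)=\E[\min\{\tau_k,\eps_k\}\1_{\{\tau_k\ge(1-\theta_k)\eps_k\}}|\varphi(\eta)]$ would vanish, and a flat piece of zeros could not be excluded.

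The missing ingredient is exactly the key observation in the paper's proof. By \eqref{Etauk=1} and \eqref{Eepsk=1}, $\E[\tau_k-\eps_k|\varphi(\eta)]=0$, hence $\P(\tau_k\ge\eps_k|\varphi(\eta))>0$. Since $\{\tau_k\ge\eps_k\}\subseteq\{\tau_k\ge(1-\theta_k)\eps_k\}$, the event you need is non-null. You already used $\E[\eps_k-\tau_k|\varphi(\eta)]=0$ when bounding $h(1)$, so the ingredient was at hand.

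Once you have this fact, the convexity argument and the case split on the sign of $h(1)$ are unnecessary. The clipped variable $\max\{x\min\{\tau_k,\eps_k\},(1-\theta_k)\eps_k\}$ is pointwise non-decreasing in $x$. It equals $x\eps_k$ on $\{\tau_k\ge\eps_k\}$, since $x\ge1\ge1-\theta_k$. Hence for $x_1>x_0\ge1$ you get $h(x_1)-h(x_0)\ge(x_1-x_0)\E[\eps_k\1_{\{\tau_k\ge\eps_k\}}|\varphi(\eta)]>0$. This strict monotonicity on all of $[1,\infty)$ is how the paper obtains uniqueness, and it also gives the sign pattern used in the comparison over $\theta_k$. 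Your convex route is valid once patched, but it is longer and rests on the same probabilistic fact.

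A minor point: $\eps_k>0$ follows directly from \eqref{normalization-1} and $\tau_j>0$, since column $k$ of $(A,\beta)$ sums to one. The spectral-radius assumption is not needed there.
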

 Proposition \ref{prop-zeta} ensures that the primitive cost of debt $\zeta_k$ for a sector $k$ is well defined and it is a monotone non-decreasing function of the leverage $\theta_k$, ranging from a minimum value $\zeta_k=0$ when $\theta_k=0$ 
(i.e., no cost of debt, when there is no leverage), to a maximum\footnote{Note that $\E[\min\{\tau_k,\eps_k\}|\varphi(\eta)]\leq \E[\tau_k,|\varphi(\eta)]\!=\!1$ so that $\zeta_k\!=\!-\log\E[\min\{\tau_k,\eps_k\}|\varphi(\eta)]\geq 0$.} value 
$\zeta_k=-\log\E[\min\{\tau_k,\eps_k\}|\varphi(\eta)]$ 
achieved  when $\theta_k=1$, i.e., when sector $k$ borrows the entirety of its liabilities.
Proposition \ref{prop-zeta} also states that, in the special case of full information $\varphi(\eta)=\eta$, the primitive cost of debt is $\zeta_k=0$ for every sector $k$ regardless of its leverage $\theta_k$, due to the fact that the information-normalized total shocks are deterministic, i.e., $\tau_k=1$ and $\eps_k=1$.
In the general case of partial information, the costs of debt $\zeta_k$'s are non-negative real-valued random variables that are measurable with respect to the public signal $\varphi(\eta)$. Hence, the cost of debt can be positive only when information is not full, and only on those sectors that borrow a positive fraction of their liabilities. 

We also denote by 
\be\label{xi-def}\xi_k:=\sum_{j\in\mc V}L_{kj}\zeta_j=\zeta_k+\sum_{j\in\mc V}\zeta_jA_{jk}+\sum_{i\in\mc V}\sum_{j\in\mc V}\zeta_iA_{ij}A_{jk}+\ldots\,,\ee
the \emph{total cost of debt} of  sector $k\in\mc V$: 
this is the aggregate of the primitive cost of debt $\zeta_k$ on industry $k$, the sum $\sum_{j\in\mc V}\zeta_jA_{jk}$ of the costs of debt $\zeta_j$ of the supplying industries $j$ each weighted by the importance of product $j$ in sector $k$'s technology, the weighted sum $\sum_{i\in\mc V}\sum_{j\in\mc V}\zeta_iA_{ij}A_{kj}$ of the primitive costs of debt $\zeta_i$ in the industries $i$ supplying sector's $k$ suppliers, and so on, accounting for all indirect effects of the costs of debt in the industries upstream to sector $k$ in the production network. The sum  is weighted using the Leontief matrix, since far sectors are less important in the downstream output.
The discount factors corresponding to the primitive and total cost of debt are $e^{-\zeta_k}$ and$ e^{-\xi_k}$, respectively.

Define the \emph{$\zeta$-discounted Leontief matrix} \be\label{distorted-Leontief}L^{\zeta}:=(I- e^{-[\zeta]}A')^{-1}\,,\ee where $ e^{-[\zeta]}$ is the diagonal matrix with the discount factors $e^{-\zeta_k}$ for the different sectors $k\in\mc V$ on the main diagonal. Observe that the discounted Leontief matrix is a well defined non-negative square matrix that can be expressed as the limit of the geometric series
$L^{\zeta}=(I- e^{-[\zeta]}A')^{-1}=\sum_{h=0}^{+\infty}(e^{-[\zeta]}A')^h\,.$
We introduce also the normalized \emph{discounted network centrality}, which, for each sector $k\in\mc V$, extends the standard Bonacich centrality measure of connectedness as follows:
\begin{equation}
\label{centrality-def}\begin{array}{rcl}v_k^{\zeta}&:=&\ds\frac{1}{\psi(\zeta)}\ds\sum_{j\in\mc V}L_{jk}^{\zeta}\gamma_j\\
&=&\ds\frac1{\psi(\zeta)}\left(\gamma_k+\sum_{j\in\mc V}A_{kj}e^{-\zeta_j}\gamma_j+\sum_{i\in\mc V}\sum_{j\in\mc V}A_{kj}A_{ji}e^{-\zeta_j-\zeta_i}\gamma_i+\ldots\right)
\,,\end{array}
\end{equation}
where 
\be\label{psi}\psi(\zeta):=\sum_{j\in\mc V}\sum_{k\in\mc V}\gamma_jL_{jk}^{\zeta}\beta_ke^{-\zeta_k}\,,\ee
is a normalization factor that ensures that   $\sum_kv_k^{\zeta}\beta_ke^{-\zeta_k}=1$. 

In the special case when  $\zeta=0$, i.e., when the cost of debt is null on every sector (e.g., when there is full information, or when the leverage is zero on all sectors) the discounted Leontief matrix reduces to the standard one $L^0=L=(I-A')^{-1}$ and the constant returns-to-scale assumption $\beta_k=1-\sum_{j}A_{jk}=((I-A')\mathbf 1)_k$ implies that 
$L\beta=L(I-A')\mathbf 1=\mathbf 1$, so that the normalization factor reduces to 
$$\psi(0)=\gamma'L\beta=\mathbf 1'\gamma=1\,.$$ Therefore, when $\zeta=0$, the normalized discounted network centrality of each sector $k$ reduces to the 
Bonacich centrality measure of Equation \eqref{Bonacich-def}. 


The following result, proved in Appendix \ref{sec:proof-lemma-monotonicity-zeta}, establishes some fundamental monotonicity properties for the vector $\zeta$ of the costs of debt.  Four of them areof direct economic interpretation and interest, while the last two will be used in the proof of Proposition \ref{prop:lev-effect-qp}.
\begin{lemma}\label{lemma:monotonicity-zeta}
Let $\zeta$ be the vector of the primitive costs of debt. Then: 
\begin{enumerate}
\item[(i)] every entry $L_{jk}^{\zeta}$ of the discounted Leontief matrix is monotone non-increasing in $\zeta$; 
\item[(ii)] the normalization factor $\psi(\zeta)$ is monotone non-increasing in $\zeta$; 
\item[(iii)] the (non-normalized) discounted network centrality $\psi(\zeta)v^{\zeta}_k$ of every sector $k\in\mc V$ is monotone non-increasing in $\zeta$. 
\item[(iv)] if sector $k$ is not a supplier of sector $o$, then $L_{jk}^{\zeta}$ is constant in $\zeta_o$ for every sector $j\in\mc V$, and $v^{\zeta}_k$ is monotone non-decreasing in $\zeta_o$;
\item[(v)]  the normalization factor $\psi(\zeta)$ is strictly decreasing in $\zeta_o$ for every sector $o$ such that $v^0_o>0$; 
\item[(vi)] if sector $k$ is not a supplier of sector $o$ and $v^0_o>0$, then $v^{\zeta}_k$ is strictly increasing in $\zeta_o$.  
\end{enumerate}
\end{lemma}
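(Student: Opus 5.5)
The plan is to work entirely with the Neumann series $L^{\zeta}=\sum_{h\ge0}(e^{-[\zeta]}A')^h$. Entrywise,
\[
(L^{\zeta})_{jk}=\sum_{h\ge0}\ \sum_{i_0=j,i_1,\dots,i_h=k}\ \prod_{s=1}^{h}e^{-\zeta_{i_{s-1}}}A_{i_s i_{s-1}},
\]
so each term is a walk weight. The weight is a product of non-negative entries of $A$ times $e^{-\zeta}$ evaluated at every vertex of the walk except its endpoint $k$.

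\textbf{Items (i)--(iii).} Each walk weight is non-increasing in every $\zeta_o$, and the series converges because $\zeta\ge0$ and the spectral radius of $A$ is less than $1$. Hence (i) follows at once. For (ii), $\psi(\zeta)=\sum_{j,k}\gamma_jL^{\zeta}_{jk}\beta_ke^{-\zeta_k}$ is a non-negative combination of products of non-increasing non-negative functions. For (iii), $\psi(\zeta)v^{\zeta}_k=\sum_j L^{\zeta}_{jk}\gamma_j$ is non-increasing by (i).

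\textbf{Item (iv).} In a walk from $j$ to $k$ with positive weight, every intermediate vertex $i_{s}$ satisfies $A_{i_{s}i_{s-1}}>0$, so $i_s$ is a (direct or indirect) supplier of $i_{s-1}$. Each vertex of the walk other than $k$ is therefore a customer of $k$, in the sense that $k$ is one of its suppliers. Suppose $k$ is not a supplier of $o$. If $o\ne k$, then $o$ cannot appear in the walk with positive weight. If $o=k$, the only exponent factors come from vertices other than the endpoint, and $o=k$ occurs among them only when $L_{kk}>1$, i.e.\ when $k$ supplies itself, which is excluded. So $L^{\zeta}_{jk}$ is constant in $\zeta_o$. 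Then $v^\zeta_k=\psi(\zeta)^{-1}\cdot(\text{constant in }\zeta_o)$, and since $\psi$ is non-increasing in $\zeta_o$ by (ii), $v^\zeta_k$ is non-decreasing in $\zeta_o$.

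\textbf{Item (v).} This is the main step. I need a term of $\psi$ that depends non-trivially and strictly decreasingly on $\zeta_o$. Since $v^0_o>0$, there is $j$ with $\gamma_j>0$ and a walk from $j$ to $o$ with positive $A$-weight. Next, $o$ needs a path to labor, i.e.\ some $k$ with $\beta_k>0$ and a positive-weight walk from $o$ to $k$. This exists by the spectral-radius assumption, through its stated equivalence with every sector being connected to a labor-using sector. Concatenating the two gives a walk $j\to o\to k$ whose contribution to $\gamma_j L^\zeta_{jk}\beta_k e^{-\zeta_k}$ contains the factor $e^{-\zeta_o}$ at least once: either $o$ is a non-terminal vertex, or $o=k$ and then the factor $e^{-\zeta_k}$ supplies it. Its coefficient is strictly positive. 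Every term is non-increasing, and this term is strictly decreasing in $\zeta_o$, so $\psi$ is strictly decreasing in $\zeta_o$. The delicate point is orientation: I must check that the path direction in the footnote's equivalence (from a labor user $i_0$ to $k$ via $A_{i_{s-1}i_s}>0$) matches the walk orientation used in $L^\zeta_{jk}$.

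\textbf{Item (vi).} This follows by combining (iv) and (v). The numerator $\sum_jL^\zeta_{jk}\gamma_j$ is constant in $\zeta_o$, and it is positive whenever $v^\zeta_k$ is meant to be non-trivial; if it is zero, the claim should be read for $v^0_k>0$, and I would note this caveat. The denominator $\psi$ is strictly decreasing in $\zeta_o$ by (v), so $v^\zeta_k$ is strictly increasing in $\zeta_o$.
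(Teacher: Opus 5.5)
Your proposal is correct and follows essentially the paper's own route: expand $L^{\zeta}$ as a Neumann series (a walk sum), read off monotonicity term by term, and use a positive-weight walk through $o$ for the strict claims. You are in fact more careful than the paper: you handle the case $o=k$ in (iv); you actually justify (v) via a walk $j\to o\to k$ with $\gamma_j\beta_k>0$ (the orientation does match the footnote, and $L\beta=\mathbf 1$ gives the walk from $o$ to a labor-using sector directly); and your caveat on (vi) is right, since $v^0_k=0$ forces $v^{\zeta}_k\equiv 0$, a case the paper's proof overlooks.
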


\subsection{Rigid equilibrium characterization} 
The following result, proved in Appendix \ref{sec:proof-theorem1}, ensures the existence and uniqueness of the rigid Walrasian equilibrium and characterizes it. 
\begin{theorem}\label{theo:Walrasian} 
Consider a constant returns-to-scale Cobb-Douglas economy $(A,\gamma)$ and leverage vector $\theta$. 
Let $\eta$ be a primitive log-productivity shock vector and  let $\varphi(\eta)$ be a public signal. 
Then, there exists a unique $\varphi$-rigid Walrasian equilibrium $(y^0,z^0,l,c^0,r, p,w)$.  Moreover, for every sector $k\in\mc V$, let $\zeta_k$, $\xi_k$, and $v_k^{\zeta}$ be the primitive cost of debt, the total cost of debt, and the discounted centrality, as defined in Equations \eqref{distortion-def}, \eqref{xi-def}, and \eqref{centrality-def}, respectively. Then, for every two sectors $j,k\in\mc V$, we have the following:  
\begin{itemize}    
\item\textbf{maximal productions:}
\be\label{max-y} y^0_k 
           =v_k^{\zeta}e^{-\xi_k}\,;   \ee   
    \item\textbf{maximal intermediate quantities:}
     \be\label{max-z} z^0_{jk}
         =v_k^{\zeta} A_{jk}e^{-\zeta_k-\xi_j};\ee
    \item\textbf{employed labor:}
\be\label{labor}l_k=v_k^{\zeta}\beta_ke^{-\zeta_k};\ee
              \item\textbf{maximal household's consumption:} 
        \be\label{max-c}
         c^0_k=
        \frac{\gamma_ke^{-\xi_k}}{\psi(\zeta)};
            \ee
 \item\textbf{interest rates}: 
\be\label{rkopt}\theta_kr_k=e^{\zeta_k}-1; \ee



%

%

    \item \textbf{prices over wage:}
        \be\label{prices} \frac{p_k}{w}=\frac{e^{\xi_k}}{\E[e^{\rho_k}|\varphi(\eta)]}\,;\ee
\item \textbf{wage:}\be\label{wage}w=\prod_{k\in\mc V}p_k^{\gamma_k} \prod_{k\in\mc V}\E[e^{\rho_k}|\varphi(\eta)]^{\gamma_k} e^{-\sum_{k}v_k^0\zeta_k}\,.\ee
\item \textbf{prices normalized by consumption bundle ones:}
\be\label{price-consbundle}\frac{p_k}{\prod_jp_j^{\gamma_j}}=
\prod_{j\ne k}\E[e^{\rho_j}|\varphi(\eta)]^{\gamma_j} e^{\sum_{j}(L_{kj}-v_j^0)\zeta_j}\,.\ee
\end{itemize}
\end{theorem}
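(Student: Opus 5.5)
Everything is conditional on the signal, so I fix a realization of $\varphi(\eta)$ and work with conditional expectations throughout. Under this conditioning the decision variables and prices are constants. By Proposition \ref{prop:actual}(i), rationing and market clearing force
\[
y^\eta_k=e^{\rho_k}y^0_k,\qquad z^\eta_{jk}=e^{\rho_j}z^0_{jk},\qquad c^\eta_k=e^{\rho_k}c^0_k .
\]
By Proposition \ref{prop:actual}(ii), it therefore suffices to impose the Cobb--Douglas relation and market clearing at $\eta=0$. Substituting, actual revenues become $\mc A_k(\eta)=p_ke^{\rho_k}y^0_k$ and actual liabilities become $\mc L_k(\eta)=\sum_j p_je^{\rho_j}z^0_{jk}+wl_k$. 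The expected profit
\[
\E[\pi_k|\varphi]=p_k\,\E[e^{\rho_k}|\varphi]\,y^0_k-e^{\zeta_k}\Big(\sum_j p_j\E[e^{\rho_j}|\varphi]z^0_{jk}+wl_k\Big),
\]
with $e^{\zeta_k}:=1+r_k\theta_k$, is then a deterministic Cobb--Douglas profit. Its effective output price is $\hat p_k:=p_k\E[e^{\rho_k}|\varphi]$, its effective input prices are $e^{\zeta_k}\hat p_j$ and $e^{\zeta_k}w$, and it has constant returns to scale.

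\textbf{Step 1: firm optimality and prices.} Optimality under constant returns requires zero maximal expected profit together with the usual unit-cost pricing. Given the normalization $\varsigma_k$, the unit cost is $e^{\zeta_k}w^{\beta_k}\prod_j\hat p_j^{A_{jk}}$. Taking logs gives
\[
\log(\hat p_k/w)=\zeta_k+\sum_jA_{jk}\log(\hat p_j/w),
\]
so $\log(\hat p/w)=L\zeta=\xi$, which is \eqref{prices}. The first-order conditions give cost shares:
\[
e^{\zeta_k}\hat p_jz^0_{jk}=A_{jk}\hat p_ky^0_k,\qquad e^{\zeta_k}wl_k=\beta_k\hat p_ky^0_k .
\]
With $y^0_k$ temporarily written as $v_ke^{-\xi_k}$ for unknown $v_k$, these yield \eqref{max-z} and \eqref{labor}.

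\textbf{Step 2: banks.} I insert these allocations into $\mc I_k$. Dividing by the expected total cost $\hat p_ky^0_ke^{-\zeta_k}$ leaves ratios $\tau_k$ (from assets) and $\eps_k$ (from liabilities). A short case computation on the two positive parts in $\mc D_k$ shows that the bank's recovery normalized this way equals $[e^{\zeta_k}\tau_k]^{e^{\zeta_k}\eps_k}_{(1-\theta_k)\eps_k}$, read as the clipping of $e^{\zeta_k}\tau_k$ between the two indicated bounds. Zero expected bank profit is then exactly \eqref{distortion-def}. Proposition \ref{prop-zeta} supplies the unique $\zeta_k$, and \eqref{rkopt} follows. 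The $\zeta$'s depend only on $\tau,\eps$, which are exogenous, so there is no fixed-point coupling. This is the step I expect to be the main obstacle: the sign and case bookkeeping of $\mc D_k$ and $\mc R_k$ must be done carefully to land exactly on that clipped form.

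\textbf{Step 3: household and clearing.} Because $c^\eta=e^{\rho}c^0$, the household's utility factors as $\prod_k e^{\gamma_k\rho_k}$ times $U(c^0)$. Expected utility is therefore maximized by maximizing $U(c^0)$ subject to the relevant budget constraint. The endowment is $\mc E(\eta)=w+\sum_k(\mc A_k-\mc L_k)$, and spending is $\sum_kp_ke^{\rho_k}c^0_k$. Using goods clearing, spending minus endowment simplifies to a sum of terms linear in the quantities, so the state-wise constraint reduces to its conditional expectation; I would verify this identity explicitly. Cobb--Douglas demand gives $\hat p_kc^0_k=\gamma_kE$ for some expected income $E$, hence $c^0_k\propto\gamma_ke^{-\xi_k}$. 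Goods clearing at $\eta=0$, $y^0_k=\sum_jz^0_{kj}+c^0_k$, then becomes the linear system
\[
v_k=\sum_jA_{kj}e^{-\zeta_j}v_j+\gamma_k\,\mathrm{const},
\]
whose solution is $v\propto L^{\zeta\,\prime}$-applied-to-$\gamma$, matching \eqref{centrality-def}. Labor clearing $\sum_kv_k\beta_ke^{-\zeta_k}=1$ pins the constant to $1/\psi(\zeta)$, giving \eqref{max-y} and \eqref{max-c}. Uniqueness follows because each step is forced: prices come from a unique linear solve, $\zeta$ is unique by Proposition \ref{prop-zeta}, and the quantities come from a unique linear solve.

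\textbf{Step 4: wage and relative prices.} I take the $\gamma$-weighted geometric mean of \eqref{prices}. Since $\gamma'L\zeta=\sum_jv^0_j\zeta_j$, this gives
\[
\prod_kp_k^{\gamma_k}=w\prod_k\E[e^{\rho_k}|\varphi]^{-\gamma_k}e^{\sum_kv^0_k\zeta_k},
\]
which is \eqref{wage}. Dividing \eqref{prices} by this expression yields \eqref{price-consbundle}.
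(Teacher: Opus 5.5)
Apart from Step 3, your argument is the paper's. You use effective prices $\hat p_k=p_k\E[e^{\rho_k}|\varphi(\eta)]$, cost-share first-order conditions, the bank condition normalized by $s_ke^{-\zeta_k}$ with $s_k=\hat p_ky^0_k$, and the $\gamma$-weighted geometric mean for the wage. Getting $\log(\hat p/w)=L\zeta$ from unit-cost pricing is the same computation as the paper's substitution of the first-order conditions into the Cobb--Douglas relation. Your remark that $\zeta$ depends only on the exogenous $\tau,\eps$, so there is no fixed point, is correct. One small fix in Step 2: the clipped quantity $[e^{\zeta_k}\tau_k]^{e^{\zeta_k}\eps_k}_{(1-\theta_k)\eps_k}$ equals $(\mc R_k+(1-\theta_k)\mc L_k)/(s_ke^{-\zeta_k})$, not the normalized recovery itself. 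Zero bank profit becomes \eqref{distortion-def} only after you also use $\E[\eps_k|\varphi(\eta)]=1$.

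The genuine gap is in Step 3, and the paper's proof has the same weak point. Your claim that the state-wise budget constraint reduces to its conditional expectation is false.
\begin{itemize}
\item \textbf{What clearing gives.} Goods and labor clearing yield Walras' law, $\mc E(\eta)=\sum_kp_kc^\eta_k$ for every $\eta$, at the equilibrium allocation. The household, however, takes $\mc E(\eta)$ as given. Its feasible set $\bigcap_\eta\{\tilde c^0:\sum_kp_ke^{\rho_k}\tilde c^0_k\le\mc E(\eta)\}$ is in general strictly smaller than the half-space in \eqref{expected-budget-constraint}.
\item \textbf{What the paper does.} It uses the expected constraint only as a relaxation. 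It solves the relaxed problem, then checks that the solution satisfies every constraint \eqref{budget-constraint} with equality. This proves optimality and hence existence.
\item \textbf{Why uniqueness is not forced.} You need every equilibrium $c^0$ to satisfy $\hat p_kc^0_k=\gamma_kE$. At any market-clearing allocation all state-wise constraints bind simultaneously, so $c^0$ sits at a kink. Optimality then only gives $\gamma_k/c^0_k=p_k\int e^{\rho_k}\,d\mu$ for some nonnegative multiplier measure $\mu$ on states, with $\mu$ not necessarily proportional to the conditional law.
\end{itemize}

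This actually bites. Take two sectors with $A=0$, $\beta=\mathbf 1$, $\gamma=(1/2,1/2)$, $\theta=0$, no information, and $\eta\in\{(0,-x),(-x,0)\}$ with equal probabilities, $x>0$. Firm optimality forces $p_k=w/\E[e^{\eta_k}]$; any scale $l_k$ is optimal, and goods clearing gives $c^0=y^0=l$. Writing $d=\tilde c^0-l$, the household's constraints are $d_1+e^{-x}d_2\le0$ and $e^{-x}d_1+d_2\le0$. The gradient of $\log U$ at $l$ is proportional to $(1/l_1,1/l_2)$. It lies in the cone generated by the two constraint normals whenever $e^{-x}\le l_1/l_2\le e^{x}$. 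So every such $l$ with $l_1+l_2=1$ is a $\varphi$-rigid equilibrium, whereas \eqref{labor} allows only $l=(1/2,1/2)$.

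Your argument, like the paper's, therefore establishes existence and the formulas for the equilibrium in which the household demand is the expected-budget Cobb--Douglas demand. Uniqueness, and the claim that every equilibrium satisfies the formulas, needs the household budget to be imposed in conditional expectation, or some equivalent selection of $\mu$.
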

A few comments are in order. First, recall that all quantities in equations \eqref{max-y}--\eqref{wage} are measurable with respect to the public signal $\varphi(\eta)$, i.e., they are computable by all actors in the economy based on the information available at time $0$. 

As for quantities, Equation \eqref{max-y} states that the maximal production output of industry $k$ equals the product of its normalized discounted centrality $v^\zeta_k$ times the discount factor of sector $k$ computed in correspondence with the total cost of debt $\xi_k$. So, the maximal production is decreasing in the total cost of debt, and is equal to the Bonacich centrality index $v^0_k$ if the cost of debt is null for every sector, for instance,  when no sector is levered. 
Equation \eqref{max-z} states that the maximal quantity of product $j$ ordered by sector $k$ at time $0$ is  the product of its own normalized discounted centrality $v^\zeta_k$  times the importance $A_{jk}$ of product $j$ in industry $k$'s technology, times both the discount factor in correspondence with sector $k$'s primitive cost of debt and sector $j$'s total cost of debt. 
So, consistent with the overall production, the use of intermediate goods is also decreasing in the overall cost of debt, and it collapses into the product of the Bonacich centrality and the importance coefficient or labor share  when the cost of debt is null for every sector.   Equation \eqref{labor} states that the amount of labor employed by industry $k$ is equal to the product of the normalized discounted centrality $v^\zeta_k$ times the labor share $\beta_k$ times the discount factor for the primitive cost of debt. 
 Equation \eqref{max-c} states that the maximal consumption of product $k$ by the representative household equals the product of the preference weight $\gamma_k$ times  the discount factor for the total cost of debt of the sector $k$, divided by the normalization factor $\psi(\zeta)$. It is decreasing in the total cost of debt, as output and intermediate goods are, and reduces to the preference weight when  all sectors have zero cost of debt. All real quantities in equilibrium are then deeply affected by leverage through its price, as represented by the cost of debt both of the sector and its suppliers.

Regarding prices, including the interest rate, Equation \eqref{rkopt} recalls that the compound factor which includes the primitive cost of debt can be written as a discrete-time compound factor  accounting for the product of the leverage times the interest rate on sector $k$. This justifies our name, namely, cost of debt, for $\zeta$.  Equation \eqref{prices} states that the unit cost of product $k$ normalized by the wage at equilibrium equals the ratio between the compound factor in correspondence to the  total cost of debt on sector $k$ divided by the conditional expected total shock $\E[e^{\rho_k}|\varphi(\eta)]$. 

In contrast to quantities, unit prices of goods are increasing in the total cost of debt to the corresponding sector, for any given expectation of the total real shocks to the latter. Therefore, prices are at a minimum when the cost of debt for each sector is null. This points to a very different role of leverage in determining quantities and prices in the Walrasian equilibrium. Quantities decrease and prices increase in the overall cost of debt.   

Finally,  Equation \eqref{wage} connects the wage to the expected total shocks, discounted using the cost of debt of the different sectors weighted by their (non-discounted) Bonacich centrality. This expression has a maximum with respect to the cost of debt when the latter is null for every sector, independently of the network structure embedded in the Bonacich network centrality. Last but not least, \eqref{price-consbundle} examines  the price of each good $k$ relative to the one of the consumption bundle, and states that it is increasing  in the cost of debt of all sectors which are suppliers of $k$ - the ones we defined above as having $j \neq k$, $L_{kj}>0$ - and whose Leontief coefficient with respect to $k$ is greater than their own  Bonacich centrality. Also, these view of prices confirm the role of leverage in their relative level, 

Theorem \ref{theo:Walrasian} implies the following result which  characterizes  the actual or realized quantities.

\begin{corollary}\label{coro:Walrasian} 
Consider a constant returns-to-scale Cobb-Douglas economy $(A,\gamma)$ and leverage vector $\theta$. Let $\eta$ be a primitive log-productivity shock vector and  let $\varphi(\eta)$ be a public signal. Then, at the $\varphi$-rigid Walrasian equilibrium, we have the following:  
\begin{itemize}   
        \item\textbf{actual productions:}
\be\label{actual-y}y^\eta_k=v_k^{\zeta}e^{\rho_k-\xi_k}\,;\ee 
    \item\textbf{actual intermediate quantities:}
    \be\label{actual-z}z^\eta_{jk}=v_k^{\zeta} A_{jk}e^{\rho_j-\zeta_k-\xi_j}\,;\ee
              \item\textbf{actual household's consumption:}
\be\label{actual-c}c^\eta_k=
        \frac{\gamma_ke^{\rho_k-\xi_k}}{\psi(\zeta)}\,;   \ee
          \item\textbf{actual profits:}
             \be\label{actual-profit}  \pi_k(\eta)=\displaystyle \left(\tau_k -\eps_k \right) wv^{\zeta}_k \,;\ee
         \end{itemize}
        \begin{itemize}
    \item \textbf{actual budget:}
    \be\label{actual-budget}\mc E(\eta)=\frac{w}{\psi(\zeta)}\sum_{k\in\mc V}\gamma_k\tau_k\,;\ee
              \item\textbf{actual welfare:}
\be\label{actual-utility}            U(c^{\eta})=e^{\sum_{k\in\mc V}v^0_k\eta_k}\frac{e^{-\sum_{k\in\mc V}v^0_k\zeta_k}}{\psi(\zeta)}\,.\ee
            \end{itemize}
            \end{corollary}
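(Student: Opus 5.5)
The proof is a direct substitution of Theorem \ref{theo:Walrasian} into Proposition \ref{prop:actual}(i), followed by some Cobb-Douglas algebra. No new equilibrium argument is needed; the work is in two linear-algebra identities.

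\textbf{Actual quantities.} Inserting \eqref{max-y}, \eqref{max-z} and \eqref{max-c} into \eqref{actual-quantities} gives \eqref{actual-y}, \eqref{actual-z} and \eqref{actual-c} at once.

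\textbf{Actual profits.} I first write $\mc A_k(\eta)=p_ky^\eta_k$. Using \eqref{prices}, the definition $\tau_k=e^{\rho_k}/\E[e^{\rho_k}|\varphi(\eta)]$ and \eqref{actual-y}, this becomes
\[
\mc A_k(\eta)=w\,e^{\xi_k}\,\frac{e^{\rho_k}}{\E[e^{\rho_k}|\varphi(\eta)]}\,v^\zeta_ke^{-\xi_k}=w\,\tau_k\,v^\zeta_k .
\]
Next, the liabilities are $\mc L_k(\eta)=\sum_jp_jz^\eta_{jk}+wl_k$. Substituting \eqref{actual-z}, \eqref{prices} and \eqref{labor}, the factors $e^{\xi_j}$ cancel and
\[
\mc L_k(\eta)=w\,v^\zeta_ke^{-\zeta_k}\Big(\sum_j\tau_jA_{jk}+\beta_k\Big)=w\,v^\zeta_ke^{-\zeta_k}\eps_k .
\]
Since $1+r_k\theta_k=e^{\zeta_k}$ by \eqref{rkopt}, we get $(1+r_k\theta_k)\mc L_k(\eta)=wv^\zeta_k\eps_k$, and \eqref{profit} yields \eqref{actual-profit}.

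\textbf{Actual budget.} By \eqref{budget}, $\mc E(\eta)=w+\sum_k(\mc A_k-\mc L_k)=w+w\sum_kv^\zeta_k\big(\tau_k-e^{-\zeta_k}\eps_k\big)$. I expand $\sum_kv^\zeta_ke^{-\zeta_k}\eps_k=\sum_j\tau_j\sum_kA_{jk}e^{-\zeta_k}v^\zeta_k+\sum_kv^\zeta_k\beta_ke^{-\zeta_k}$. The last term equals $1$ by the normalization of $\psi$, which cancels the leading $w$.

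The remaining step is the fixed-point identity for $v^\zeta$. Since $\psi(\zeta)v^\zeta=(L^\zeta)'\gamma$ and $L^\zeta=(I-e^{-[\zeta]}A')^{-1}$, we have $(I-Ae^{-[\zeta]})\psi v^\zeta=\gamma$, that is,
\[
v^\zeta_j-\sum_kA_{jk}e^{-\zeta_k}v^\zeta_k=\frac{\gamma_j}{\psi(\zeta)} .
\]
Hence $\mc E(\eta)=w\sum_j\tau_j\gamma_j/\psi(\zeta)$, which is \eqref{actual-budget}. I would double-check the transpose convention in \eqref{centrality-def} at this step, since it is the likeliest place for an index slip.

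\textbf{Actual welfare.} Using \eqref{actual-c} and $\sum_k\gamma_k=1$, the normalization $\chi$ cancels and
\[
U(c^\eta)=\frac{1}{\psi(\zeta)}\,e^{\sum_k\gamma_k(\rho_k-\xi_k)} .
\]
Because $\rho=L\eta$ and $\xi=L\zeta$, we have $\gamma'\rho=(L'\gamma)'\eta$ and likewise for $\xi$. So I need $L'\gamma=v^0$.

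The paper defines $v^0=L\gamma$, with $v^0_k=\sum_j\gamma_jL_{jk}$, which is the $k$-th entry of $L'\gamma$ since $(L'\gamma)_k=\sum_jL_{jk}\gamma_j$. The two expressions agree, so the exponent becomes $\sum_kv^0_k(\eta_k-\zeta_k)$, giving \eqref{actual-utility}.

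The main obstacle is purely bookkeeping: the transposes in $L$ versus $L^\zeta$, and confirming that the cancellation of $w$ in the budget uses exactly the normalization $\sum_kv^\zeta_k\beta_ke^{-\zeta_k}=1$.
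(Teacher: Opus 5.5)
Your proposal is correct and follows essentially the same route as the paper. You substitute Theorem~\ref{theo:Walrasian} into Proposition~\ref{prop:actual}(i). For the profits you compute $\mc A_k=wv^\zeta_k\tau_k$ and $(1+r_k\theta_k)\mc L_k=wv^\zeta_k\eps_k$. For the budget you cancel $w$ via $\sum_k l_k=\sum_kv^\zeta_k\beta_ke^{-\zeta_k}=1$ and use $\gamma'L^\zeta(I-e^{-[\zeta]}A')=\gamma'$, of which your componentwise fixed-point relation is just the transposed form. For the welfare you aggregate the exponents with $v^0_k=\sum_j\gamma_jL_{jk}$.

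Your transposes are all right. The componentwise definition \eqref{Bonacich-def}, i.e.\ $v^0=L'\gamma$, is what the paper actually uses in its computations, so the ``$v^0=L\gamma$'' in the text is a notational slip, not an identity between $L\gamma$ and $L'\gamma$ that you need to invoke.
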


Observe that Equations \eqref{actual-budget} for the actual budget and Equation \eqref{Etauk=1} for the conditional expected information-normalized total shocks, along with the constant returns-to-scale assumption \eqref{normalization-2} for the final consumer preferences,  imply that 
$$\E[\mc E(\eta)|\varphi(\eta)]=\frac{w}{\psi(\zeta)}\sum_{k\in\mc V}\gamma_k\E[\tau_k|\varphi(\eta)]=\frac{w}{\psi(\zeta)}\sum_{k\in\mc V}\gamma_k=\frac{w}{\psi(\zeta)}\,,$$
so that we can interpret the normalization factor $\psi(\zeta)$ introduced in Equation \eqref{psi} as the ratio between the wage and the conditional expected budget given the public signal.

           \subsection{The full information benchmark}\label{sec:full}
In the special case when there is full information available, i.e., when the public signal observable by all the actors in the economy $\varphi(\eta)=\eta$ coincides with the log-productivity shock vector itself, we get that the primitive cost of debt  is null for every sector, i.e., \be\label{zetak=0}\zeta=0\,,\ee so that also $\xi=0$, i.e., the total cost of debt is null for every sector. As a consequence, in this special case, Equations \eqref{max-y}--\eqref{rkopt} for the maximal production, maximal quantities of intermediate goods ordered at time $0$, employed labor, and loan rates reduce to 
        \be\label{0distortion-1}y^0_k  =v^0_k\,,\qquad
         z^0_{jk}=v^0_k A_{jk}\,,\qquad
         l_k=v_k^0\beta_k\,,\qquad
         c^0_k=\gamma_k\,,\qquad
        \theta_kr_k=0\,,\ee
             while Equations \eqref{actual-y}--\eqref{actual-c} for the actual productions and consumptions reduce to 
          \be\label{0distortion-3}   y^\eta_k  =v^0_ke^{\rho_k}\,,\qquad
         z^\eta_{jk}=v^0_k A_{jk}e^{\rho_j}\,,\qquad
           c^{\eta}_k=\gamma_ke^{\rho_k}\,. \ee   
Moreover, we have that $\E[e^{\rho_k}|\varphi(\eta)]=e^{\rho_k}$ and $\tau_k=\eps_k=1$ for every sector $k\in\mc V$, so that Equation \eqref{prices}  reduces to  
\be\label{0distortion-2bis}            p_k=we^{-\rho_k}\,,\ee
i.e., prices are state-contingent. Also the wage is state-contingent, since Equation \eqref{wage} becomes \be w=\prod_{k\in\mc V}p_k^{\gamma_k}\prod_{k\in\mc V} e^{\rho_k \gamma_k}=\prod_{k\in\mc V}p_k^{\gamma_k}e^{\sum_kv^0_k \eta_k}\,.\ee
Equation \eqref{actual-profit} reduces to  
\be\label{0distortion-4bis}            \pi_k(\eta)=0\,,\ee
i.e., profits are identically $0$. 
Equation \eqref{actual-budget} reduces to 
\be\label{E=w}\mc E(\eta)=w\,,\ee
i.e., the actual budget coincides with the total wage, and Equation \eqref{actual-utility} reduces to 
\be\label{U=ev0}U(c^\eta)=e^{\sum_{k\in\mc V}v^0_k\eta_k}\,.\ee
It follows that the Domar weight $\lambda_k$ of each sector $k\in\mc V$, defined as the ratio  of the sales ---or actual revenues--- over the total consumption, satisfies 
\be\label{Domar-full-info}\lambda_k:=\frac{p_ky^{\eta}_k}{\mc E(\eta)}=v^0_k\,,\ee 
i.e., it coincides with the Bonacich centrality of sector $k$. In particular, Domar weights are deterministic in this special case. Now, observe that, from Equation \eqref{actual-utility} for the actual welfare we get
\be\label{pd-welfare}\frac{\partial}{\partial \eta_k}\log U(c^{\eta})=v^0_k\,,\ee
i.e., the partial derivative of the log-welfare with respect to the log-primitive shock in sector $k$ coincides with the sector's Bonacich centrality. 
It follows from Equations \eqref{Domar-full-info} and \eqref{pd-welfare} that, under full information, and as expected, Hulten's theorem applies, as the Domar weight of sector $k$ measures  the first order effect of a shock to that  sector on aggregate welfare: 
\be
\frac{\partial}{\partial \eta_k}\log U(c^\eta)= v^0_k= \lambda_k\,.
\ee

 Hence, in the full information case, we recover the standard results already present in the literature, as reported for instance in Carvalho and Tazbah-Salehi \cite[Sec.\ 2]{carvalho2019production}. The novelty is that it applies with leverage.
 
\section{Information frictions effects: Hulten's theorem's failure}\label{sec:effects}\label{sec: hulten} 
We now consider the case when information is not full, i.e., the log-productivity shock vector $\eta$ is not measurable with respect to the public signal $\varphi(\eta)$.


 Using Equation \eqref{prices} in Theorem \ref{theo:Walrasian} for the equilibrium prices, and Equations \eqref{actual-y} and \eqref{actual-budget} in Corollary \ref{coro:Walrasian} for the actual production and consumer budget or GDP in equilibrium, it is straightforward to express the Domar weight of each sector $k\in\mc V$ as
\be \label{DW} \lambda_k=\frac{p_ky^{\eta}_k}{\mc E(\eta)}= \frac{\psi(\zeta)v^{\zeta}_k\tau_k   }{\sum\limits_{j\in\mc V} \tau_j \gamma_j}\,.\ee
Since Equation \eqref{pd-welfare} for the partial derivatives of the log-consumer utility with respect to the production shocks continues to hold true in the general case of partial information, in general Hulten's theorem does not hold. In fact, the first-order effect of a shock is still measured by the Bonacich centrality,  but  the latter does not coincide with the Domar weight anymore:  
\begin{equation}\label{noH}
 \frac{\partial }{\partial \eta_k}\log U(c^\eta) =  v^0_k \neq \psi(\zeta)\frac{v^{\zeta}_k\tau_k   }{\sum\limits_{j\in\mc V} \tau_j \gamma_j}= \lambda_k   \,.
\end{equation}
This means that the drawback of having Domar weights that measure the first-order effect of a shock on aggregate output is over: even sectors that have a small Domar weight on total GDP can significantly affect welfare. They do so the higher their centrality, as intuition would suggest, but their centrality does not coincide with their Domar weight. 

We have the following result, proved in Appendix \ref{sec:proof-prop-no-hulten}.  

\begin{theorem}\label{prop:no-hulten}
Consider a constant returns-to-scale Cobb-Douglas economy $(A,\gamma)$ and leverage vector $\theta$. Let $\eta$ be a primitive log-productivity shock vector and  let $\varphi(\eta)$ be a public signal. Then, Hulten's theorem holds true with probability $1$ at the $\varphi$-rigid equilibrium if and only if the cost of debt is null in every sector, i.e., Equation \eqref{zetak=0} holds true, and there exist a scalar random variable $Y$ and a measurable function $g:\mc S\to\R^{\mc V}$ such that 
\be\label{eta=Y}\eta=Y\beta+g(\varphi(\eta))\,.\ee
     If this is not the case, then \begin{equation} \label{dlogU>lambda}\frac{\partial }{\partial \eta_k}\log U(c^\eta)>\lambda_k\,,
\end{equation} for some sector $k\in\mc V$  with positive probability. 
\end{theorem}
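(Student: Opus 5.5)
The plan is to reduce everything to the closed form \eqref{DW} for the Domar weights, $\lambda_k=\psi(\zeta)v^\zeta_k\tau_k/T$ with $T:=\sum_{j}\gamma_j\tau_j$, and to compare it with \eqref{pd-welfare}, $\partial\log U(c^\eta)/\partial\eta_k=v^0_k$. Two facts drive the argument:
\begin{itemize}
\item[(a)] $T>0$ and $\E[T|\varphi(\eta)]=1$, by \eqref{Etauk=1} and \eqref{normalization-2};
\item[(b)] the non-normalized centrality $\psi(\zeta)v^\zeta_k=((L^\zeta)'\gamma)_k$ is $\varphi(\eta)$-measurable and satisfies $\psi(\zeta)v^\zeta_k\le v^0_k$. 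This follows from Lemma \ref{lemma:monotonicity-zeta}(iii), since $\zeta\ge0$ by Proposition \ref{prop-zeta} and the value at $\zeta=0$ is $v^0_k$.
\end{itemize}

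I will prove the final claim \eqref{dlogU>lambda} first, since it also gives a clean form of the ``only if'' direction. Since $\lambda_kT=\psi(\zeta)v^\zeta_k\tau_k$, taking conditional expectations gives
\[
\E[\lambda_kT|\varphi(\eta)]=\psi(\zeta)v^\zeta_k\le v^0_k=\E[v^0_kT|\varphi(\eta)],
\]
using $\E[\tau_k|\varphi(\eta)]=1$ and (a). Suppose that $v^0_k\ge\lambda_k$ fails only on a null set, i.e.\ $\lambda_k\ge v^0_k$ a.s. for every $k$. Then $(\lambda_k-v^0_k)T\ge0$ has non-positive conditional expectation, so it vanishes a.s. Since $T>0$, this gives $\lambda_k=v^0_k$ a.s. for every $k$, i.e.\ Hulten's theorem holds with probability $1$. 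Contrapositively, whenever Hulten fails there is a $k$ with $v^0_k>\lambda_k$ on a set of positive probability, which is \eqref{dlogU>lambda}. The same computation shows that Hulten holding a.s. is equivalent to two conditions: $\psi(\zeta)v^\zeta_k=v^0_k$, and $\tau_k=T$ a.s. for every $k$ with $v^0_k>0$.

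For sufficiency, assume $\zeta=0$ and \eqref{eta=Y}. Then $\psi=1$ and $v^\zeta=v^0$. Moreover $L\beta=\mathbf 1$ gives $\rho=L\eta=Y\mathbf 1+Lg(\varphi(\eta))$. Hence
\[
\tau_k=\frac{e^{Y}}{\E[e^{Y}|\varphi(\eta)]}
\]
is the same for all $k$, so $T=\tau_k$ and $\lambda_k=v^0_k$.

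For necessity, assume $\tau_k=T$ for all relevant $k$. Writing $\log\tau_k=\rho_k-\log\E[e^{\rho_k}|\varphi(\eta)]$, we get $\rho=Y\mathbf 1+h(\varphi(\eta))$ with $Y:=\log T$ and $h_k:=\log\E[e^{\rho_k}|\varphi(\eta)]$. Applying $I-A'$ yields $\eta=Y\beta+(I-A')h(\varphi(\eta))$, which is \eqref{eta=Y} with $g:=(I-A')h$. Here sectors with $v^0_k=0$ have to be handled separately: they carry $\lambda_k=0=v^0_k$ trivially, so I expect to restrict attention to sectors with $v^0_k>0$ (suppliers of the consumer), or to argue that they do not affect \eqref{eta=Y}.

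The main obstacle is deducing $\zeta=0$ from $\psi(\zeta)v^\zeta=v^0$. Using the fixed-point identities $(L^\zeta)'\gamma=\gamma+Ae^{-[\zeta]}(L^\zeta)'\gamma$ and $v^0=\gamma+Av^0$, subtraction gives $A_{jk}(1-e^{-\zeta_k})v^0_k=0$ for all $j,k$. This forces $\zeta_k=0$ for every sector with $v^0_k>0$ that has at least one intermediate input. For pure-labor sectors (zero column of $A$) this identity gives nothing, so there I would go back to the defining equation \eqref{distortion-def}. With $\tau_j=T$ one has $\eps_k=(1-\beta_k)T+\beta_k$, and I would use this to show that the constraint $\tau\equiv T$ is compatible with $\zeta_k>0$ only in degenerate situations, appealing to Lemma \ref{lemma:monotonicity-zeta}(v)--(vi) for strictness where $v^0_o>0$. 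I am not certain this step closes in full generality: for a levered pure-labor sector, $\eps_k\equiv1$ while $\tau_k$ is non-degenerate, which appears to allow $\zeta_k>0$. So this is where I would expect to need either an extra implicit assumption or a more careful reading of ``cost of debt null''.
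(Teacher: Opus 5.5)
\textbf{Assessment.} The step you flag cannot be closed by more work on \eqref{distortion-def}. The implication ``Hulten with probability $1$ $\Rightarrow$ $\zeta=0$'' is false as stated, and your suspicion about levered pure-labor sectors is exactly the reason.

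\textbf{Counterexample for $\zeta$.} Take $n=1$, $A=0$, $\beta_1=\gamma_1=1$, $\theta_1=1$, a constant signal and a non-degenerate $\eta_1$. Then $\eps_1\equiv1$ and $\tau_1=e^{\eta_1}/\E[e^{\eta_1}]$ is non-degenerate, so \eqref{zeta-max} gives $\zeta_1=-\log\E[\min\{1,\tau_1\}]>0$. Yet all output is consumed and the budget binds, so $\lambda_1=p_1y^\eta_1/\mc E(\eta)=1=v^0_1$ identically. Hulten's theorem therefore holds with probability one and \eqref{dlogU>lambda} never occurs, although $\zeta\neq0$. More generally, the same happens with $A=0$, $\eta=Y\mathbf 1$, $Y$ not $\varphi(\eta)$-measurable, and any nonzero $\theta$.

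\textbf{Where the paper's proof goes wrong.} The paper runs the same conditional-expectation computation as you, with $X_k=\frac{v^0_k}{\psi(\zeta)v^\zeta_k}\sum_j\gamma_j\tau_j-\tau_k$. It gets past this point only by asserting, citing Lemma \ref{lemma:monotonicity-zeta}(iii), that $\psi(\zeta)v^\zeta_k\le v^0_k$ holds with equality for all $k$ iff $\zeta=0$. That lemma gives only the inequality. Your fixed-point computation identifies the true equality condition, $A_{jk}(1-e^{-\zeta_k})v^0_k=0$ for all $j,k$; the converse holds by uniqueness of the solution of $u=\gamma+Ae^{-[\zeta]}u$. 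This condition leaves $\zeta_k$ unconstrained for sectors without intermediate inputs and for sectors with $v^0_k=0$.

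\textbf{Counterexample for \eqref{eta=Y}.} Your worry about sectors with $v^0_k=0$ is also justified. The paper divides by $v^\zeta_k$ and concludes $\tau_j=\tau_k$ for \emph{all} $j,k$. Take $A=0$, $\gamma=(1,0)$, $\theta=0$, a constant signal, $\eta_1\equiv0$ and $\eta_2$ random. Hulten holds, since sector $2$ has $\lambda_2=v^0_2=0$. But \eqref{eta=Y} fails, because $\eta_1-\eta_2$ is not constant.

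\textbf{What your argument does establish.} It is correct and is the right statement: Hulten holds with probability one iff $\psi(\zeta)v^\zeta=v^0$ and $\tau_k=\sum_j\gamma_j\tau_j$ for every $k$ with $v^0_k>0$. The second condition is equivalent to \eqref{eta=Y} when all $v^0_k>0$. The first is equivalent to $\zeta=0$ when, in addition, every sector with $\theta_k>0$ uses some intermediate input.

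Your sufficiency argument is fine. Your proof of the last claim, in the form ``if Hulten fails with positive probability, then \eqref{dlogU>lambda} holds for some $k$ with positive probability'', is valid. It is also more robust than the paper's route, because multiplying by $T$ instead of dividing by $v^\zeta_k$ never uses the false equivalence. The version in the statement, whose hypothesis is ``$\zeta\ne0$ or \eqref{eta=Y} fails'', inherits the counterexamples above.

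So do not try to force the missing step through. The theorem needs extra hypotheses of the kind just indicated, or should be stated in your weaker form.
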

Theorem \ref{prop:no-hulten} states that Hulten's theorem fails to hold true whenever $\zeta\ne0$, i.e., if the cost of debt is positive in at least one sector (something that occurs whenever information is not full and leverage is positive in at least one sector) or if $\zeta=0$ and the random vector $\eta$ of the log-productivity shocks cannot be decomposed as the sum of a random multiple of the vector $\beta$ plus a random vector $g(\varphi(\eta))$ that is measurable with respect to the public signal $\varphi(\eta)$.



This result is extremely important in understanding supply chain disruptions, such as the ones we observed during the COVID-19 pandemics or as a consequence of wars and civil unrest, all over the world. We witnessed that the effect of a shock on GDP is not related to the affected sector's size, but rather to its connections.\footnote{See, e.g., the Toyota case illustrated in Elliot \& Jackson \cite{elliott2024supply}} The model rationalizes this.

In our model, the relationship between a sector's Bonacich centrality $v^0_k$ and its Domar weight $\lambda_k$ ---namely, whether the former exceeds the latter, and therefore whether the effect of a shock is larger than Hulten's theorem would command--- depends  on leverage, for any given consumer proximity, as measured by the preference vector $\gamma$ and realization of the log-productivity shock vector $\eta$. 
Substituting for the Bonacich measures in the Domar weight we get that 
\eqref{dlogU>lambda} holds true, 
if and only if 
\begin{equation} \label{no hult}
 \frac{\tau_k}{ \sum\limits_{j\in\mc V} \tau_j \gamma_j}<\frac{v^0_k}{\psi(\zeta)v_k^{\zeta}}\,,
\end{equation}
i.e., a sector $k$'s first-order effect on welfare exceeds its Domar weight whenever the ratio between sector $k$'s normalized total shock $\tau_k$ and the average $\sum_j\gamma_j\tau_j$ of all sectors' total shocks, is smaller than  the ratio of the averages of its Leontief coefficients without and with leverage. All the averages are weighted by the preference coefficients. 

If inequality \eqref{no hult} holds true, it will remain valid if consumer proximities and shock realizations are kept constant, but leverage is increased. This is because the only changing term, the denominator of the right hand side, is monotone non-increasing in $\zeta$ thanks to Lemma \ref{lemma:monotonicity-zeta} (iii) and $\zeta$ is non-decreasing in  leverage, as shown in Proposition \ref{prop-zeta}.

\section{Leverage effects with information frictions}
This Section examines the effects of leverage: the most important result is that the combination of leverage and information frictions causes the departure from the first best. We saw in Section \ref{sec:full} above that, as long as there is leverage without information frictions, there is no loss in efficiency. To get to the complete result, in Section \ref{sec: unlevered} we examine an economy with information frictions, but without leverage, and show that it does not cause departures from the first best. In Section \ref{sec: lev cons} we examine the case with information frictions and leverage and comment on the loss of efficiency, as well as its impact through the network.
\subsection{The unlevered economy}\label{sec: unlevered}
 When all sectors are unlevered, $\theta=0$,  but information is not perfect, then the vector of the costs of debt $\zeta$ is null, as proved in Section \ref{sec:eq}, and the total cost of debt is null too, $\xi =0$. The previous Section proved that Hulten's theorem fails to hold, unless the shocks - letting aside a function of the information signal -  are linear in the vector of labor usage (with a random coefficient). Letting this case aside, even an unlevered but rigid economy does not incur into the restrictions of Hulten's theorem. Apart from that theorem, the equilibrium is characterized by 
\begin{itemize}    
\item\textbf{maximal productions:}
\be y^0_k 
           =v_k^{0}\,;   \ee   
    \item\textbf{maximal intermediate quantities:}
     \be\ z^0_{jk}
         =v_k^{0} A_{jk};\ee
    \item\textbf{employed labor:}
\be l_k=v_k^{0}\beta_k;\ee
              \item\textbf{maximal household's consumption:} 
        \be
         c^0_k=
        \gamma_k;
            \ee

    \item \textbf{prices over wage:}
        \be \frac{p_k}{w}=\frac{1}{\E[e^{\rho_k}|\varphi(\eta)]}\,;\ee
\item \textbf{ wage:} \be w=\prod_{k\in\mc V}p_k^{\gamma_k} \prod_{k\in\mc V}\E[e^{\rho_k}|\varphi(\eta)]^{\gamma_k} \,.\ee
\item \textbf{prices normalized by consumption bundle ones:}
\be \frac{p_k}{\prod_jp_j^{\gamma_j}}=
\prod_{j\ne k}\E[e^{\rho_j}|\varphi(\eta)]^{\gamma_j} \,.\ee
\end{itemize}
while the actual quantities and prices  are
\begin{itemize}   
        \item\textbf{actual productions:}
\be y^\eta_k=v_k^{0}e^{\rho_k}\,;\ee 
    \item\textbf{actual intermediate quantities:}
    \be z^\eta_{jk}=v_k^{0} A_{jk}e^{\rho_j}\,;\ee
              \item\textbf{actual household's consumption:}
\be c^\eta_k=
        \gamma_ke^{\rho_k}\,;   \ee
          \item\textbf{actual profits:}
             \be  \pi_k(\eta)=\displaystyle \left(\tau_k -\eps_k \right) wv^{0}_k \,;\ee
         \end{itemize}
        \begin{itemize}
    \item \textbf{actual budget:}
    \be \mc E(\eta)= w\sum_{k\in\mc V}\gamma_k\tau_k\,;\ee
              \item\textbf{actual welfare:}
\be           U(c^{\eta})=e^{\sum_{k\in\mc V}v^0_k\eta_k}\,.\ee
            \end{itemize}
So, the equilibrium with rigidity,  and without leverage and full information remains different from the full information one. Prices are non-random, while they are state contingent in both the full info and the rigid, levered case. Welfare  remains the same as in the full information case, thanks to prices adjustments. Sectors however may have negative as well as positive profits, something which full information prevents.

\subsection{The levered economy}\label{sec: lev cons}
We now investigate an economy where information is not full and leverage $\theta_k>0$ is positive for some sector $k\in\mc V$. These two features cause the  cost of debt of that sector to be positive, i.e., $\zeta_k\ne0$: We study not only the effect on welfare, but also on intermediate goods and labor usage, as well as consumption and prices.

First, the combination of leverage and rigidity causes the departure from the first best.  The following result is proved in Appendix \ref{sec:proof-theo-welfare}.   
\begin{theorem}\label{theo:welfare}
Consider a constant returns-to-scale Cobb-Douglas economy $(A,\gamma)$ and leverage vector $\theta$. Let $\eta$ be a primitive log-productivity shock vector and  let $\varphi(\eta)$ be a public signal. Then, the welfare  at the $\varphi$-rigid  equilibrium satisfies  
$$U(c^\eta)\le e^{\sum_{k\in\mc V}v^0_k\eta_k}\,,$$
with equality if and only if $\zeta=0$. 
\end{theorem}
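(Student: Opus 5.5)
By Corollary \ref{coro:Walrasian}, equation \eqref{actual-utility}, the welfare at the $\varphi$-rigid equilibrium is
$$U(c^\eta)=e^{\sum_{k}v^0_k\eta_k}\,\frac{e^{-\sum_{k}v^0_k\zeta_k}}{\psi(\zeta)}\,.$$
So the statement is equivalent to the purely deterministic inequality
$$\psi(\zeta)\ \ge\ e^{-\sum_k v^0_k\zeta_k}\qquad\text{for every }\zeta\ge 0\,,$$
together with a characterization of when it is an equality. The randomness of $\eta$ plays no role here, since $\zeta$ is $\varphi(\eta)$-measurable. I plan to prove this inequality by writing $\psi(\zeta)$ as an expectation over a probability distribution on walks in the network and applying Jensen's inequality.

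\textbf{Step 1: walk expansion.} Using $L^\zeta=\sum_{h\ge0}(e^{-[\zeta]}A')^h$ and \eqref{psi}, expand
$$\psi(\zeta)=\sum_{h\ge0}\gamma'(e^{-[\zeta]}A')^h e^{-[\zeta]}\beta\,.$$
Each term is a sum over walks $\omega=(i_0,i_1,\dots,i_h)$ with $i_{s}$ a supplier of $i_{s-1}$. The walk starts at a consumer good $i_0$ and ends at a labor-using sector $i_h$. Its undiscounted weight is
$$\mu(\omega)=\gamma_{i_0}A_{i_1i_0}\cdots A_{i_hi_{h-1}}\beta_{i_h}\,,$$
and its discount is $e^{-S(\omega)}$ with $S(\omega)=\sum_{s=0}^h\zeta_{i_s}$, where every visited node contributes its own factor $e^{-\zeta}$. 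At $\zeta=0$ we have $\sum_\omega\mu(\omega)=\psi(0)=\gamma'L\beta=1$, as computed in the text. Hence $\mu$ is a probability measure on finite walks, and $\psi(\zeta)=\E_\mu[e^{-S}]$.

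\textbf{Step 2: mean of $S$.} Let $N_k(\omega)$ be the number of visits of $\omega$ to node $k$. By splitting each walk at a visit to $k$, the expected number of visits is
$$\E_\mu[N_k]=(\gamma'L)_k\,(L\beta)_k=v^0_k\cdot 1=v^0_k\,,$$
using $L\beta=\mathbf 1$. Consequently $\E_\mu[S]=\sum_k v^0_k\zeta_k$. This bookkeeping is routine but must be done carefully; absolute convergence is guaranteed by the spectral radius assumption.

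\textbf{Step 3: Jensen and the equality case.} Convexity of $x\mapsto e^{-x}$ gives
$$\psi(\zeta)=\E_\mu[e^{-S}]\ge e^{-\E_\mu[S]}=e^{-\sum_kv^0_k\zeta_k}\,,$$
which is the claimed bound on $U(c^\eta)$. If $\zeta=0$, both sides equal $1$. Conversely, by strict convexity, equality holds iff $S$ is $\mu$-a.s. constant.

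The main obstacle is turning this "$S$ constant" condition into "$\zeta=0$". Two points need care.
\begin{itemize}
\item Sectors with $v^0_o=0$ never appear on a $\mu$-charged walk, so their $\zeta_o$ does not affect either side. The equality statement must therefore be read as $\zeta_o=0$ for all $o$ with $v^0_o>0$; this matches the role of Lemma \ref{lemma:monotonicity-zeta}(v).
\item More seriously, $S$ can be constant without $\zeta$ vanishing. In the one-sector economy $A=0$, $\beta=\gamma=1$, one has $\psi=e^{-\zeta}$ and equality holds for every $\zeta$.
\end{itemize}
So the "only if" direction needs extra input: I would compare walks of different lengths, or walks that differ by one node, to force the $\zeta$'s to vanish, and I expect some nondegeneracy hypothesis on the network to be required. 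If no such hypothesis is available, I would state the precise equality condition ($S$ constant $\mu$-a.s., for instance $\zeta$ concentrated where every consumer-to-labor walk passes exactly once) rather than claim $\zeta=0$.
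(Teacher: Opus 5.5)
Your proof of the inequality follows the paper's own argument. The auxiliary lemma in the appendix expands $\psi(\zeta)$ over walks with weights $\gamma_j\beta_k\alpha_\omega$ and uses $L\beta=\mathbf 1$ to show these weights sum to one. It then applies Jensen to $e^{-x}$. Finally, it computes the mean of $\sum_h\zeta_{\omega_h}$ by cutting each walk at a visited node, via the map $(\omega,m)\mapsto(i,\omega',\omega'')$. That last step is exactly your computation $\E_\mu[N_k]=v^0_k\,(L\beta)_k=v^0_k$.

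The paper's proof establishes only this inequality and says nothing about the equality case. The ``if'' direction is trivial, since $\psi(0)=1$. Your objection to the ``only if'' direction is correct. In your one-sector example with $\theta=1$ and a constant signal, Proposition~\ref{prop-zeta} gives $\zeta=-\log\E[\min\{1,\tau\}]>0$, because $\tau$ is nondegenerate with mean $1$, and yet $U(c^\eta)=e^{\eta}$. This is not only a one-sector curiosity. With $A=0$ you have $\psi(\zeta)=\sum_k\gamma_ke^{-\zeta_k}$ and $v^0=\gamma$. Equality then holds whenever $\zeta$ is constant on the support of $\gamma$, for instance with i.i.d. shocks, no information and equal leverage.

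So the theorem's equality clause is false without additional hypotheses. The right characterization is the one your Jensen step produces: $S$ is $\mu$-a.s. constant, and sectors with $v^0_o=0$ are irrelevant. You should state that condition rather than try to force $\zeta=0$.
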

Theorem \ref{theo:welfare} states that the maximal welfare is achieved if and only if the cost of debt is zero for every sector $k$. A positive cost of debt in some sectors, due to the combination of positive leverage in these sectors and partial information, can only decrease the welfare at equilibrium. We argue below that the inefficiency is due to an inefficient use of resources. 
Indeed, theorem \ref{theo:Walrasian} proved that, in the presence of information frictions,  leverage ---through the total cost of debt---  decreases maximal quantities of intermediate goods and labour but increases maximal prices. However, to assess whether  and how the inefficient use of resources causes the loss in aggregate utility, we must  specify how the effects on maximal quantities and prices depend on the  leverage of each single sector  and are distributed along the supply chain.

Proposition \ref{prop:lev-effect-qp} below - which  is proved in Appendix \ref{sec:proof-prop-lev-effect-qp} - states that leverage increases maximal consumption and labor upstream. It has inflationary effects and generates an increase of the total cost of debt on the levered sector, which propagates downstream.  

\begin{proposition}\label{prop:lev-effect-qp}
   Consider a constant returns-to-scale Cobb-Douglas economy $(A,\gamma)$ and leverage vector $\theta$. Let $\eta$ be a primitive log-productivity shock vector and  let $\varphi(\eta)$ be a public signal. If the leverage $\theta_o$ of a sector $o$ increases, then, at the $\varphi$-rigid  equilibrium:  
    \begin{enumerate}
    \item[(i)] the total cost of debt of sector $o$ and of all its customers does not decrease, while the total cost of debt of every other sector is not affected;
    \item[(ii)] the labor employed by every industry other than sector $o$ and its suppliers does not decrease and the employed labor  by at least one sector among $o$ and its suppliers does not increase;
        \item[(iii)] the  maximal consumption of every industry other than sector $o$ and its costumers does not decrease; 
        \item[(iv)] the price over wage ratio of sector $o$ and all its customers does not decrease 
        and the unit price $p_k$ of the $k$-th good normalized by the consumption bundle one $\prod_jp_j^{\gamma_j}$ does not decrease for every sector $k$ such that $L_{ko}\ge v^0_o$ and does not increase for every sector $k$ such that $L_{ko}\le v^0_o$.
\end{enumerate}
\end{proposition}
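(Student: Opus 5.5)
The plan is to read every claim directly off the closed-form equilibrium of Theorem \ref{theo:Walrasian}, after first isolating how an increase of $\theta_o$ enters it. The only channel is through the primitive costs of debt.

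By Proposition \ref{prop-zeta}, each $\zeta_j$ is the unique solution of \eqref{distortion-def}. That equation involves only $\theta_j$, $\tau_j$, $\eps_j$ and the signal $\varphi(\eta)$. The quantities $\rho=L\eta$, $\tau$ and $\eps$ depend only on $A$, $\beta$, $\eta$ and $\varphi$, so none of them depends on the leverage vector. Hence raising $\theta_o$ leaves $\zeta_j$ unchanged for every $j\ne o$, and (weakly) increases $\zeta_o$ by the monotonicity statement of Proposition \ref{prop-zeta}. So it suffices to show that each quantity in (i)--(iv) is monotone in the single variable $\zeta_o$ in the claimed direction, with all other $\zeta_j$ held fixed. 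I expect this reduction to be the main point requiring care. It is also where one must check that $\E[e^{\rho_k}|\varphi(\eta)]$ is unaffected by $\theta_o$, which holds for the same reason.

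For (i), I will use $\xi_k=\sum_j L_{kj}\zeta_j$, so that $\partial\xi_k/\partial\zeta_o=L_{ko}\ge 0$. This coefficient is positive exactly when $k=o$ (since $L_{oo}\ge1$) or $k$ is a customer of $o$, and it is zero otherwise. For (ii), employed labor is $l_k=v^{\zeta}_k\beta_k e^{-\zeta_k}$ by \eqref{labor}. If $k\ne o$ is not a supplier of $o$, then $\zeta_k$ is fixed and $v^{\zeta}_k$ is non-decreasing in $\zeta_o$ by Lemma \ref{lemma:monotonicity-zeta}(iv), so $l_k$ does not decrease. The second half of (ii) then follows from labor clearing \eqref{labor-clearing}: the total labor $\sum_k l_k=1$ is fixed, so if all labor outside $o$ and its suppliers weakly rises, the labor of $o$ and its suppliers must weakly fall in total. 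Hence at least one of them does not increase.

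For (iii), maximal consumption is $c^0_k=\gamma_k e^{-\xi_k}/\psi(\zeta)$ by \eqref{max-c}. For $k$ neither $o$ nor a customer of $o$, $\xi_k$ is unchanged by (i), while $\psi$ is non-increasing in $\zeta$ by Lemma \ref{lemma:monotonicity-zeta}(ii). So $c^0_k$ does not decrease.

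For (iv), the price over wage ratio is $p_k/w=e^{\xi_k}/\E[e^{\rho_k}|\varphi(\eta)]$ by \eqref{prices}. The denominator is leverage-independent, so (i) gives the first claim. For the consumption-bundle-normalized price \eqref{price-consbundle}, the prefactor $\prod_{j\ne k}\E[e^{\rho_j}|\varphi(\eta)]^{\gamma_j}$ does not depend on $\theta$. In the exponent $\sum_j(L_{kj}-v^0_j)\zeta_j$, only the term $(L_{ko}-v^0_o)\zeta_o$ varies, and $v^0$ is the undiscounted Bonacich centrality, which does not depend on $\zeta$. Since $\zeta_o$ is non-decreasing in $\theta_o$, the normalized price is non-decreasing when $L_{ko}\ge v^0_o$ and non-increasing when $L_{ko}\le v^0_o$, as claimed.
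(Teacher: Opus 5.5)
Your proposal is correct and follows essentially the same route as the paper's proof. It reduces everything to monotonicity in $\zeta_o$ via Proposition~\ref{prop-zeta}. It then reads (i)--(iv) off \eqref{xi-def}, \eqref{labor}, \eqref{max-c}, \eqref{prices} and \eqref{price-consbundle}, using Lemma~\ref{lemma:monotonicity-zeta}(ii),(iv) and the labor clearing condition \eqref{labor-clearing}. Your explicit check that $\tau$, $\eps$ and $\E[e^{\rho_k}|\varphi(\eta)]$ do not depend on leverage spells out a step the paper leaves implicit.
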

In summary, even before shocks arrive, leverage, together with incomplete information, has distortionary effects on the use of the resources, similarly to the distortions in \cite{bigio2020distortions}. This explains the impact on welfare. Proposition \ref{prop:lev-effect-qp} qualifies the position (suppliers, customers) that are affected. It also provides the condition ---on the Leontief coefficient of any sector with respect to the levered one, $L_{ko}$--- for inflation to arise. 


\section{Defaults\label{sec:def}}
In this economy, sectors can have a negative profit, namely they can experience a default event, and therefore they can affect financiers, potentially causing losses to them. In this section we aim to investigate defaults more in detail, showing their dependence on various factors such as the production network structure encoded in the Leontief matrix, the shock distribution and its specific realization, the information signal, and the leverage.
We will study the phenomenon of amplification, namely the mechanism by which a production shock in a certain part of a production network can trigger default events in close by, but possibly also in far away, nodes. 
We will also investigate the magnitude of the losses once default occurs.

We notice from formula \eqref{actual-profit} that  the profit is the product of two expressions,  the difference $(\tau_k -\eps_k)$ and the product of the wage times the normalized discounted network centrality $v^{\zeta}_k$. The first one   depends on the shock realization $\eta$ and may take positive or negative values. It does neither depend on leverage, nor on  consumer preferences. The second one has strictly positive sign and depends on the shock only through the vector $\zeta$ of the primitive costs of debt. It is measurable with respect to $\varphi(\eta)$ and depends on the leverage $\theta$.
The default event for sector $k\in\mc V$, defined as $\{\pi_k(\eta)<0\}$, only depends on the first expression. Indeed, we have that $\pi_k(\eta)<0$ if and only if 
\be\label{default}\tau_k<\eps_k\,.\ee
Considering the way $\eps_k$ is defined in \eqref{epsk}, we understand  that default for sector $k$ happens when the total shocks to its direct suppliers, appropriately weighted,  are greater than its owns. The comparison depends in both cases on total, not primitive shocks.

Once the default event occurs,  \eqref{actual-profit} implies that the magnitude of the losses depends  also on the product of the wage and the discounted centrality index. We can interpret this product, $w  \nu^\zeta_k$, when it is greater than one, as a magnifier of losses (and profits, whenever the first difference is positive) and therefore we interpret it as determining the \emph{amplification} of shocks into losses. 




\subsection{Defaults from a single shock and cascades}

To explore further the occurrence of default, let us focus on the case when the shock originates in a single node, that, from now on, we indicate with the symbol $o$.  This amounts to assume that $\eta_k=0$
deterministically, for every node $k\neq o$.  The assumption allows us to isolate the propagation of a given default without the confounding effects of other failures. We say that a \emph{cascade} occurs when some sector different from $o$ also defaults. this section examines whether the cascade happens, and how much it extends, because sectors other than the initial one default (Proposition \ref{prop:convexity}), whether the sector itself does (Proposition \ref{prop:convexity2}), and when amplification occurs. To prepare for the numerical examples below, we conclude by  specifying the conditions for cascades and default of the shocked sector when the shock  distribution is exponential.

To present our analytical  results, it is convenient to introduce the following $\varphi(\eta)$-measurable objects
\be\label{tilted-mom}m_k(t):=\frac{\E[\eta_o^ke^{t\eta_o}|\varphi(\eta)]}{\E[e^{t\eta_o}|\varphi(\eta)]},\quad k=1,2, \qquad \sigma^2(t):=m_2(t)-(m_1(t))^2.
\ee
They can be interpreted as the first two $\varphi(\eta)$-conditioned momenta of a random variable $\tilde\eta_o$ (known as the exponential tilting of $\eta_o$) absolutely continuous with respect to $\eta_o$, with Radon-Nikodym derivative   
$$x\mapsto \frac{e^{t x}}{\E[e^{t \eta_o}|\varphi(\eta)]}.$$
In particular, for $t=0$, expressions in \eqref{tilted-mom} coincide with, respectively, the $\varphi(\eta)$-conditioned momenta  of $\eta_o$. 

We also define, for every sector $k$, the quantity
$$L^-_k:=\max\{L_{jo}\,|\, j\in\mc V,\, A_{jk}>0\},$$
which is the maximum entry of column $o$ of the Leontief matrix, and therefore the maximum  coefficient which links any sector $j$ to $o$, among those sectors $j$ which are suppliers of $k$. The following result, which is proved in Appendix \ref{sec:proof-prop-convexity},  holds true.

\begin{proposition}\label{prop:convexity} Consider a constant returns-to-scale Cobb-Douglas economy $(A,\gamma)$ and leverage $\theta$. 
Let $\eta$ be a primitive log-productivity shock vector concentrated only on node $o$ (i.e., $\eta_k=0$
deterministically, for every node $k\neq o$). 
Then, we have that, for $k\neq o$:
\begin{itemize}
\item Every sector $k$ such that $L_{ko}=0$ is never in default, i.e., $\pi_k(\eta)=0$ deterministically.
\item If the shock realization satisfies 
\be\label{inside}\eta_o\in [m_1(t)-\sigma(t), m_1(t)+\sigma(t)], \quad \forall 0\leq t\leq \bar t,\ee then every sector $k\neq o$ for which $L_k^-\leq \bar t$ is not in default.
\item If the shock realization satisfies \be\label{notinside}\eta_o\not\in [m_1(t)-\sigma(t), m_1(t)+\sigma(t)], \quad\forall t\geq 0,\ee then every sector $k$  such that $L_{ko}\neq 0$ is in default.
\end{itemize}
\end{proposition}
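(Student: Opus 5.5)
The plan is to reduce the default condition \eqref{default} to a Jensen-type comparison for a single scalar function, and to read the three claims off the sign of that function's second derivative.

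\textbf{Step 1 (reduction to a scalar function).} Since $\eta_j=0$ deterministically for $j\neq o$, \eqref{rho-def} gives $\rho_j=L_{jo}\eta_o$ for every $j$. Define the $\varphi(\eta)$-measurable random function
\be
f(t):=t\eta_o-\log\E[e^{t\eta_o}|\varphi(\eta)],\qquad g(t):=e^{f(t)},\qquad t\ge0 .
\ee
Then \eqref{tau-def} reads $\tau_j=g(L_{jo})$. Since $g(0)=1$, we can write
\be
\eps_k=\sum_{j}A_{jk}\,g(L_{jo})+\beta_k\, g(0).
\ee
From $L=I+A'L$, for $k\neq o$ we get
\be
L_{ko}=\sum_jA_{jk}L_{jo}+\beta_k\cdot 0 .
\ee
By \eqref{normalization-1}, the weights $A_{jk}$ and $\beta_k$ sum to one. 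Hence $L_{ko}$ is the barycenter of the points $\{L_{jo}:A_{jk}>0\}\cup\{0\}$ under these weights. Consequently sector $k$ defaults iff $g(\text{barycenter})<\text{barycenter of } g$. All points lie in $[0,L^-_k]$.

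\textbf{Step 2 (first bullet).} If $L_{ko}=0$, non-negativity forces $L_{jo}=0$ whenever $A_{jk}>0$. Then $\tau_k=\tau_j=1$ for all relevant $j$, so $\eps_k=1=\tau_k$. By \eqref{actual-profit}, $\pi_k(\eta)=0$.

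\textbf{Step 3 (curvature computation).} Differentiating under the conditional expectation (justified since $\eta_o\le0$, so $e^{t\eta_o}$ and its $t$-derivatives are dominated for $t\ge0$) gives
\be
f'(t)=\eta_o-m_1(t),\qquad f''(t)=-\sigma^2(t).
\ee
Hence
\be
g''(t)=g(t)\big[(\eta_o-m_1(t))^2-\sigma^2(t)\big].
\ee
It follows that $g''(t)\le0$ exactly when $\eta_o\in[m_1(t)-\sigma(t),m_1(t)+\sigma(t)]$, and $g''(t)>0$ exactly when $\eta_o$ lies outside this interval.

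\textbf{Step 4 (second bullet).} Under \eqref{inside}, $g$ is concave on $[0,\bar t]$. If $L^-_k\le\bar t$, all points of Step 1 lie in $[0,\bar t]$. Jensen's inequality then gives $\tau_k=g(L_{ko})\ge\eps_k$, so $\pi_k\ge0$ and there is no default.

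\textbf{Step 5 (third bullet).} Under \eqref{notinside}, $g$ is strictly convex on $[0,\infty)$. Strict Jensen gives $\tau_k<\eps_k$, hence $\pi_k<0$, provided the points carrying positive weight are not all equal.

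\textbf{Main obstacle.} The step needing care is this non-degeneracy in Step 5. If $\beta_k>0$, the point $0$ carries positive weight while $L_{ko}>0$, so the points are not all equal and strictness holds. If $\beta_k=0$, degeneracy would require $L_{jo}=L_{ko}$ for every direct supplier $j$ of $k$. I would try to exclude this using the spectral-radius assumption: iterating upstream along suppliers carrying the same value must eventually reach a sector with positive labor share, hence a point $0$ in some convex combination. If this does not close the case, I would state it as a mild non-degeneracy proviso. The differentiation under the conditional expectation in Step 3 should be routine by the sign of $\eta_o$.
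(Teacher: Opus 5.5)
Your argument is the paper's: both reduce default of $k\neq o$ to a Jensen comparison for $t\mapsto e^{t\eta_o}/\E[e^{t\eta_o}|\varphi(\eta)]$, using the barycentric identity $L_{ko}=\sum_jA_{jk}L_{jo}+\beta_k\cdot 0$. Both then read the three bullets off the sign of the second derivative. Your formula $g''=g\,[(\eta_o-m_1(t))^2-\sigma^2(t)]$ is the correct one; the paper's display has $m_2(t)$ where $\sigma^2(t)$ belongs. Your first two bullets are fine.

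The problem is Step 5. You are right that strict Jensen needs non-degeneracy; the paper's proof only says ``strictly convex, hence default'' and never checks it. However, your proposed rescue via the spectral-radius assumption cannot succeed, because the degenerate case genuinely occurs and the third bullet is false there. Take $\beta_o=1$, $A_{oj}=a\in(0,1)$, $\beta_j=1-a$, $A_{jk}=1$, $\beta_k=0$, and all other entries of $A$ zero. Then $A$ is nilpotent, so the spectral-radius assumption holds. We get $L_{ko}=L_{jo}=a\neq0$ and $\rho_k=\rho_j=a\eta_o$, hence $\tau_k=\tau_j=\eps_k$ identically and $\pi_k\equiv0$. This includes realizations satisfying the hypothesis of the third bullet, which has positive probability, e.g.\ $\eta_o<-2/\lambda$ when $-\eta_o$ is exponential with parameter $\lambda$ and there is no information. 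Even $A_{ok}=1$, $\beta_k=0$ gives $\tau_k=\tau_o=\eps_k$.

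Iterating upstream does not help. The quantity $\eps_k$ evaluates $g$ only at the coordinates $L_{jo}$ of $k$'s \emph{direct} suppliers. The point $0$ carried by $\beta_j>0$ lives inside $j$'s own decomposition of $L_{jo}$, never in $\eps_k$.

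So your fallback proviso is necessary, not optional. Under the hypothesis of the third bullet, a sector $k\neq o$ with $L_{ko}\neq0$ defaults if and only if $\beta_k>0$ or the values $L_{jo}$ over $j$ with $A_{jk}>0$ are not all equal. Otherwise $\tau_k\equiv\eps_k$ and $k$ never defaults. With this proviso stated, your proof is complete, and it is more careful than the paper's.
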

As far as sectors $k\ne o$ are concerned, the first result says that, as intuition would suggest, every sector $k$ for which there is no walk in the network from the shock source $o$ to the sector itself will never experience a default.

The second result says that all sectors, whose suppliers have Loentief coefficients with respect to the shock source sufficiently small, will not default. The corresponding condition \eqref{inside} guarantees that the default event will have a limited impact on the production network. Indeed, condition \eqref{inside} guarantees that the shock realization is sufficiently close to its expected value (for the basic shock distributions as well for a family of the exponential tilting versions). A continuity argument implies that if we just have 
\be\label{inside0}\eta_o\in (m_1(0)-\sigma(0), m_1(0)+\sigma(0)),\ee then \eqref{inside} holds for some $\bar t>0$ and thus all sectors with sufficiently small $L_k^-$ will not default.  
The second result then says that shocks that are not too disruptive possibly generate a partial cascade of defaults.

The third result, condition \eqref{notinside},  is a sufficient condition for the size of the shock to be such that  default takes place throughout the network, whatever the production network is. Here the default cascade affects the whole network.

Proposition \ref{prop:convexity} does not give any information regarding the default event of the node $o$ primarily hit by the shock. Indeed, the analysis for such a node is somewhat peculiar and leads to different results that are gathered in the proposition below.    

\begin{proposition}\label{prop:convexity2} Consider a constant returns-to-scale Cobb-Douglas economy $(A,\gamma)$ and leverage $\theta$. 
Let $\eta$ be a primitive log-productivity shock vector concentrated only on node $o$ (i.e., $\eta_k=0$
deterministically, for every node $k\neq o$). 
If $L_{oo}=1$ (i.e., $o$ does not belong to any cycle), then node $o$ is in default if and only if the shock realization satisfies
\be\label{default-o} e^{\eta_o}<\E[e^{\eta_o}\,|\,\varphi(\eta)].\ee
In the general case, we have that:
\begin{itemize}
\item If the shock realization satisfies 
\be\label{inside2}\eta_o\in [m_1(t), m_1(t)+\sigma(t)], \quad \forall 0\leq t\leq L_o^-+1,\ee then sector $o$ is not in default.
\item If the shock realization satisfies \be\label{notinside2}\eta_o<m_1(t)-\sigma(t), \quad \forall t\geq 0,\ee then sector $o$ is in default.
\end{itemize}
\end{proposition}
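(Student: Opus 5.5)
The plan is to reduce everything to a one-variable function of the tilting parameter and read the default condition \eqref{default} as a Jensen-type comparison. Since the shock is concentrated on $o$, Equation \eqref{rho-def} gives $\rho_k=L_{ko}\eta_o$ for every $k$. Set $f(t):=\log\E[e^{t\eta_o}|\varphi(\eta)]$ and $h(t):=e^{t\eta_o-f(t)}$, with $h(0)=1$. Then Equation \eqref{tau-def} becomes $\tau_k=h(L_{ko})$. By Equation \eqref{epsk}, $\eps_o=\sum_j A_{jo}h(L_{jo})+\beta_o h(0)$, which is a convex combination of values of $h$, because $\sum_jA_{jo}+\beta_o=1$ by \eqref{normalization-1}. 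From $L=I+A'L$ we also get
$$L_{oo}=1+\sum_jA_{jo}L_{jo}.$$
This means the barycenter of the points $0$ and $L_{jo}$, with the same weights, equals $L_{oo}-1$. By \eqref{default}, sector $o$ defaults if and only if $h(L_{oo})<\eps_o$.

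\emph{Case $L_{oo}=1$.} The identity above forces $A_{jo}L_{jo}=0$ for all $j$, so $L_{jo}=0$ for every direct supplier $j$ of $o$. Hence $\tau_j=h(0)=1$ for each such $j$, and therefore $\eps_o=\sum_jA_{jo}+\beta_o=1$. Default is then equivalent to $\tau_o<1$, i.e.\ $e^{\eta_o}<\E[e^{\eta_o}|\varphi(\eta)]$, which is \eqref{default-o}.

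\emph{General case.} Differentiating $h$ and using that $f$ is the conditional cumulant generating function, so that $f'=m_1$ and $f''=\sigma^2$ as in \eqref{tilted-mom}, we obtain
$$h'(t)=(\eta_o-m_1(t))\,h(t),\qquad h''(t)=\big((\eta_o-m_1(t))^2-\sigma^2(t)\big)\,h(t).$$
Since $h>0$, the sign conditions follow directly:
\begin{itemize}
\item under \eqref{inside2}, $h$ is non-decreasing and concave on $[0,L_o^-+1]$;
\item under \eqref{notinside2}, $h$ is strictly decreasing and convex on $[0,\infty)$.
\end{itemize}
All evaluation points lie in the relevant interval. Indeed, each $L_{jo}$ with $A_{jo}>0$ is at most $L_o^-$, and $L_{oo}=1+\sum_jA_{jo}L_{jo}\le 1+L_o^-$ because $\sum_jA_{jo}\le1$.

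In the concave case, Jensen gives $\eps_o\le h(L_{oo}-1)$, and monotonicity gives $h(L_{oo}-1)\le h(L_{oo})=\tau_o$. So $o$ is not in default. In the convex case, Jensen gives $\eps_o\ge h(L_{oo}-1)$, and strict decrease gives $h(L_{oo}-1)>h(L_{oo})=\tau_o$, so $o$ defaults.

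The main delicate point is making sure every evaluation point lies in the interval where the sign conditions hold; this is why the condition is required up to $L_o^-+1$. A second point is justifying differentiation of $f$ under the conditional expectation. This is routine here because $\eta_o\le0$ and $t\ge0$, so $e^{t\eta_o}\le1$ and the tilted moments are well defined. I would handle it by a dominated-convergence remark, noting that $\eta_o^ke^{t\eta_o}$ is bounded for $t>0$, with $t=0$ treated by continuity.
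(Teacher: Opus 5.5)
Your proof is correct and follows the paper's argument. You use the same tilted function $t\mapsto e^{t\eta_o}/\E[e^{t\eta_o}|\varphi(\eta)]$, the same concavity and monotonicity consequences of \eqref{inside2} and \eqref{notinside2}, and the same Jensen comparison at the barycenter $L_{oo}-1=\sum_j A_{jo}L_{jo}$ followed by a monotonicity step to $L_{oo}$, with the $L_{oo}=1$ case reducing to $\tau_o<1$. Your second-derivative formula with $\sigma^2(t)$ is the correct one (the paper's \eqref{derivatives} writes $m_2(t)$, evidently a typo), and your check that every evaluation point lies in $[0,L_o^-+1]$ makes explicit what the paper leaves implicit.
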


To provide an intuition for these results, recall that primitive  log shocks in $\eta$ take non-positive values and the greater is the shock $\eta_o$ in absolute value, the smaller is its exponential. So, (\ref{default-o}) guarantees that ---in the absence of cycles--- default occurs if and only if the primitive log shock is smaller than a given threshold. When cycles exacerbate the consequences of the shock, and back fire, then the sector originally hit is not in default if the actual shock stays at most one standard-deviation above the mean, and it does default when it is at least one standard-deviation  below the expectation, which intuitively means that occurrences are very bad. The last two condition must hold  for the mean and standard deviation of both the shock $\eta_o$ and its exponential transforms.

Let us examine amplification. To answer the question of whether amplification occurs downstream, once a default occurs,  let us recall parts (iv) and (vi) of Lemma \ref{lemma:monotonicity-zeta}.  Those properties indeed imply that, when  a sector $k$ is part of the cascade, amplification from a realized shock  $\tau_k - \eps_k$ to the sector losses, represented by $w v_k^\zeta >1$, if it exists,  is  
\begin{itemize}
\item  non-decreasing in the cost of debt of sector $o$, $\zeta_o$, for the sectors which are not supplier of $o$, in the sense established by part (iv) of that Lemma. 
\item  increasing in the cost of debt of sector $o$, $\zeta_o$, for the sectors that are not suppliers of $o$, in the sense established by part (vi) of that Lemma.
\end{itemize}
Loosely said, if amplification occurs, it is greater for sectors downstream to $o$.

To conclude this discussion and to prepare for our examples in the next Section, let us now focus on a specific case: (i) no information (i.e., $\varphi(\eta)=C$) and (ii) $-\eta_o$ is an exponential random variable with parameter $\lambda>0$. 
In this case, we can explicitly compute
$m_1(t)=-\sigma(t)=1/(t+\lambda)$ so that
$$[m_1(t)-\sigma(t), m_1(t)+\sigma(t)]=\left[-\frac{2}{t+\lambda}, 0\right].$$
We can reformulate the results of Propositions \ref{prop:convexity} and \ref{prop:convexity2} as follows. 
\begin{corollary}\label{prop:convexity-exp} Consider a constant returns-to-scale Cobb-Douglas economy $(A,\gamma)$ and leverage $\theta$. Assume that there is no information.  
Let $\eta$ be a primitive log-productivity shock vector concentrated only on node $o$ (i.e., $\eta_k=0$
deterministically, for every node $k\neq o$) and let $-\eta_o$ be an exponential random variable with parameter $\lambda>0$. 
Then, we have that:
\begin{enumerate}
\item[(i)] every sector $k$ such that $L_{ko}=0$ is never in default, i.e., $\pi_k(\eta)=0$ deterministically;
\item[(ii)] if the shock realization satisfies $\eta_o>-2/\lambda$, then every sector $k\neq o$ for which $L_k^-\leq -2/\eta_o -\lambda$ is not in default;
\item[(iii)] if the shock realization satisfies $\eta_o\leq -2/\lambda$, then every sector $k$ such that $L_{ko}\neq 0$ is in default.
\end{enumerate}
\end{corollary}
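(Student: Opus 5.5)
The plan is to specialize Proposition~\ref{prop:convexity} to the case at hand. With no information, conditional expectations given $\varphi(\eta)$ become unconditional ones, so the only work is computing the tilted moments $m_1(t)$ and $\sigma(t)$ explicitly and then translating conditions \eqref{inside} and \eqref{notinside} into conditions on the realization $\eta_o$.

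\textbf{Step 1: the tilted moments.} Since $-\eta_o\sim\mathrm{Exp}(\lambda)$, the variable $\eta_o$ has density $\lambda e^{\lambda x}$ on $(-\infty,0]$. Tilting by $e^{tx}/\E[e^{t\eta_o}]$ gives a density proportional to $e^{(t+\lambda)x}$ on $(-\infty,0]$. Hence the tilted variable $\tilde\eta_o$ satisfies $-\tilde\eta_o\sim\mathrm{Exp}(t+\lambda)$ for every $t\ge0$. It follows that $m_1(t)=-1/(t+\lambda)$ and $\sigma(t)=1/(t+\lambda)$, so that
$$[m_1(t)-\sigma(t),\,m_1(t)+\sigma(t)]=\left[-\tfrac{2}{t+\lambda},\,0\right].$$
Two features drive the rest of the argument. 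The right endpoint is always $0\ge\eta_o$. The left endpoint $-2/(t+\lambda)$ is increasing in $t$ and tends to $0$ as $t\to\infty$.

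\textbf{Step 2: parts (i) and (ii).} Part (i) is exactly the first bullet of Proposition~\ref{prop:convexity}, since that bullet needs no assumption on the distribution.

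For part (ii), I rewrite condition \eqref{inside} using the monotonicity from Step~1. Because $\eta_o\le0$, the condition $\eta_o\in[-2/(t+\lambda),0]$ for all $0\le t\le\bar t$ reduces to the single inequality $\eta_o\ge-2/(\bar t+\lambda)$. When $\eta_o<0$, this is equivalent to $\bar t\le -2/\eta_o-\lambda$.

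The hypothesis $\eta_o>-2/\lambda$ makes $\bar t:=-2/\eta_o-\lambda$ strictly positive. In the degenerate realization $\eta_o=0$, I read $-2/\eta_o$ as $+\infty$, and every $\bar t$ works. The second bullet of Proposition~\ref{prop:convexity} then states that every $k\ne o$ with $L_k^-\le\bar t$ is not in default, which is precisely the claim in (ii).

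\textbf{Step 3: part (iii).} Condition \eqref{notinside} asks that $\eta_o<-2/(t+\lambda)$ for all $t\ge0$. Since the left endpoint is increasing in $t$, this is equivalent to $\eta_o<-2/\lambda$, i.e.\ the case $t=0$. The third bullet of Proposition~\ref{prop:convexity} then yields default of every $k$ with $L_{ko}\ne0$.

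The main obstacle is the boundary realization $\eta_o=-2/\lambda$, which the statement includes via a non-strict inequality. At this value \eqref{notinside} fails at $t=0$, so Proposition~\ref{prop:convexity} does not apply as stated. I would handle it in one of two ways. The first is to note that this event has probability zero under the exponential law, so the conclusion holds almost surely. The second is to revisit the proof of Proposition~\ref{prop:convexity} and check that a strict-convexity argument there still gives $\tau_k<\eps_k$ at the boundary, perhaps because only one value of $t$ sits on the endpoint.
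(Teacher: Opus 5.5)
Your tilted-moment computation ($-\tilde\eta_o\sim\mathrm{Exp}(t+\lambda)$, so the interval is $[-2/(t+\lambda),0]$) is correct, and so are your reductions of \eqref{inside} and \eqref{notinside}. This is the paper's route: it presents the corollary as a direct reformulation of the two convexity propositions once this interval is known.

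The gap is in part (iii). It asserts default of \emph{every} $k$ with $L_{ko}\neq0$, and this includes the shocked sector $o$ itself, since $L_{oo}\ge1$. The paper reads it this way in the directed-line example, where sectors $2,3,4$ all default. Proposition~\ref{prop:convexity} is stated only for $k\neq o$, and its Jensen argument does not extend to $o$. It uses $L_{ko}=\sum_jA_{jk}L_{jo}$, whereas $L_{oo}=1+\sum_jA_{jo}L_{jo}$. Writing $f(x,t)=e^{tx}/\E[e^{t\eta_o}]$ as in the paper, $\tau_o=f(\eta_o,L_{oo})$ is therefore not evaluated at the barycenter of the nodes entering $\eps_o=\beta_o+\sum_jA_{jo}f(\eta_o,L_{jo})$. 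You must add the second bullet of Proposition~\ref{prop:convexity2}. Its hypothesis \eqref{notinside2} again reduces to $\eta_o<-2/\lambda$, and the extra ingredient is monotonicity. Since $\eta_o<m_1(t)$, the map $t\mapsto f(\eta_o,t)$ is strictly decreasing, so $\tau_o<f(\eta_o,\sum_jA_{jo}L_{jo})\le\eps_o$, the last step by convexity.

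For the boundary realization $\eta_o=-2/\lambda$, your first option (a null event) only proves an almost-sure version, not the pointwise statement as written. Your second option works and is short. With $s=1/(t+\lambda)$ one has $\partial_tf=f\,(\eta_o+s)$ and $\partial_t^2f=f\,[(\eta_o+s)^2-s^2]=f\,\eta_o(\eta_o+2s)$. At $\eta_o=-2/\lambda$ the second derivative equals $f\cdot\frac4\lambda\bigl(\frac1\lambda-s\bigr)$, which vanishes only at $t=0$. So $f(\eta_o,\cdot)$ is still strictly convex on $[0,\infty)$, and $\partial_tf\le-f/\lambda<0$ covers sector $o$ as above.

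When you write out the strict Jensen step for $k\neq o$, note that it requires two distinct nodes with positive weight among $0$ (weight $\beta_k$) and the $L_{jo}$ with $A_{jk}>0$. If $\beta_k=0$ and all suppliers of $k$ share the same $L_{jo}$, then $\tau_k=\eps_k$ identically and (iii) actually fails for that $k$. Proposition~\ref{prop:convexity}, and hence the corollary, tacitly excludes this degeneracy, so you should state it as an explicit assumption.
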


\vspace{10pt}

\section{Examples and numerical simulations\label{sec:ex}}

In this section, we analyze in detail two basic topologies and compute all the relevant quantities of the model, with a particular focus on default.


\subsection{Directed Line}\label{sec:ex-line}
We first study the case of a production network with $n=4$ sectors, forming a directed line graph with a root node $1$ and subsequent nodes $2,3,4$ (see Figure \ref{fig:line}).
\begin{figure}
\begin{center}
\begin{tikzpicture}[
    >=stealth,
    every node/.style={circle, draw, minimum size=10mm, inner sep=0pt}
]

    \node[draw=green!60!black, fill=green!20] (n1) at (0,0) {$1$};
    \node[draw=red, fill=red!20]              (no) at (2,0) {$2$};
    \node[draw=gray, fill=gray!20]            (n3) at (4,0) {$3$};
    \node[draw=gray, fill=gray!20]            (n4) at (6,0) {$4$};

    \draw[->] (n1) -- (no);
    \draw[->] (no) -- (n3);
    \draw[->] (n3) -- (n4);

\end{tikzpicture}
\end{center}
\caption{\label{fig:line}}
\end{figure}
Specifically, we assume that  $A_{12}=A_{23}=A_{34}=\alpha$ in $(0,1)$, and $A_{ij}=0$ for all $(i,j)\notin\{(1,2),(2,3),(3,4)\}$, and that consumer preferences are uniform, i.e., $\gamma_k=1/4$ for every sector $k$ in $\mc V$. Since the matrix $A$ is nilpotent with $A^4=0$, the economy's Leontief inverse is given by
\begin{equation}
L =(I-A')^{-1}=I+A'+(A')^2+(A')^3=
\begin{bmatrix}
1 & 0 & 0 & 0\\
\alpha & 1 & 0 & 0\\
\alpha^2 & \alpha & 1 & 0\\
\alpha^3 & \alpha^2 & \alpha & 1 
\end{bmatrix},
\end{equation}
and the Bonacich centralities are 
$$v^0_1=\frac{1+\alpha+\alpha^2+\alpha^3}{4},\qquad v^0_2=\frac{1+\alpha+\alpha^2}{4}\,,\qquad v^0_3=\frac{1+\alpha}4\,,\qquad v^0_k=\frac14\,.$$

We focus on the special case where the log-productivity shock hits only the single node $o=2$, so that $\eta_1=\eta_3=\eta_4=0$ are deterministic, and that $-\eta_2$ is exponentially distributed with parameter $\lambda>0$. 
 The corresponding total log-productivity shock vector is then $$\rho =L\eta=\eta_2(0,1,\alpha,\alpha^2)\,. $$
 Notice that the total log-productivity shock in node $1$ is $\rho_1=0$, consistently with the fact that production shocks propagate only downstream, while the magnitude of the total log-productivity shock achieves its maximum value in node $2$, where the primitive  log-productivity production shock is concentrated and decays moving further down the chain. 

 We now assume that the public signal is constant, so that the actors in the economy have no information about the production shock vector realization when they make their decisions at time $0$, but they only know its prior distribution. Observe that in this case, Corollary \ref{prop:convexity-exp} implies that: 
\begin{itemize}
\item industry $1$ never defaults, as its profit $\pi_1(\eta)$ is deterministically $0$; 
\item if $\eta_2<-2/\lambda$, then industries $2$, $3$, and $4$ all default;
\item if $-2/\lambda\leq \eta_2<\log{\lambda}/(1+\lambda)$, then the set of sectors that default is either $\{2\}$, or $\{2,3\}$, or $\{2,3,4\}$.
\end{itemize}
 
 Moreover, as shown in Appendix \ref{sec:proof-sec7}, it can be proved that
 \begin{itemize}
\item if $\eta_2\ge \log{\lambda}/(1+\lambda)$, then no sector defaults.  \end{itemize}





We now focus on the special case when $\alpha=0.6$ and $\lambda=0.25$. 
We study two different leverage regimes: (a) $\theta_k=0.5$ for every sector $k$; and (b) $\theta_k=1$ for every sector $k$.
Table \ref{tab_optimal_line} reports the maximal equilibrium quantities for each sector. It lists production $y^0$, employment $l$, consumption $c^0$ and relative prices $p/w$.

\begin{table}
\caption{Optimal Quantities in Equilibrium, Directed Line}
\label{tab_optimal_line}
\centering
{%
\begingroup\catcode`\_=12\relax
\begin{tabular}{|c|r|r|r|r|r|}
\hline
Leverage & Sector & y & l & c & p/w \\ \hline
\hline
0.5 & 1 & 0.54 & 0.54 & 0.33 & 1 \\ \hline
0.5 & 2 & 0.36 & 0.14 & 0.19 & 8.62 \\ \hline
0.5 & 3 & 0.34 & 0.19 & 0.22 & 5.16 \\ \hline
0.5 & 4 & 0.24 & 0.12 & 0.24 & 3.33 \\ \hline
1 & 1 & 0.55 & 0.55 & 0.36 & 1 \\ \hline
1 & 2 & 0.31 & 0.12 & 0.17 & 10.73 \\ \hline
1 & 3 & 0.31 & 0.2 & 0.2 & 6.14 \\ \hline
1 & 4 & 0.23 & 0.13 & 0.23 & 3.78 \\ \hline
\hline
\end{tabular}
 
\endgroup
}
\end{table}

As the leverage increases from $\theta_k=0.5$ to $\theta_k=1$ in every sector $k$, the economy exhibits a clear reallocation pattern. Sector $1$, which is not affected by the shock, experiences an increase in production, labor, consumption, and no effect on relative price. Consistently with Proposition \ref{prop:lev-effect-qp},  consumption (and production) fall in all other sectors, while labor falls in sector $2$, which is the one directly hit by the shock, but increases in the others. Prices rise, with the exception of sector $1$, highlighting the inflationary pressure generated by higher leverage. In general, increased leverage compresses real activity and consumption while  amplifying price differences across sectors.


Table~\ref{tab_financial_line} reports the marginal and total sector costs of debt  
$\zeta_k$, $\xi_k$. Both are zero for sector $1$, which cannot default. For the other sectors, the cost of debt $\zeta_{k}$ increases as $\theta$ increases,  as predicted by Proposition \ref{prop-zeta}. The same holds for the total cost of debt $\xi_k$, which coincides with the marginal value in sector $o=2$, and becomes relatively high in the other sectors despite their low primitives, as they may inherit the shock originating upstream.  The table also reports the (un-discounted) Bonacich centrality measures $\nu_k^{0}$ and the normalized discounted Bonacich centrality measures $\nu_k^{\zeta}$. The former are always smaller than the latter, with the partial exception of  sector 1, whose marginal and total cost of debt is 0 and which does not have the other, potentially defaulting sectors as suppliers. The discounted centrality of all sectors  increases, consistent with Lemma \ref{lemma:monotonicity-zeta}(vi), when leverage increases.

\begin{table}
\caption{Financial Variables, Directed Line}
\label{tab_financial_line}
\centering
{%
\begingroup\catcode`\_=12\relax
\begin{tabular}{|c|r|r|r|r|r|}
\hline
$\theta_k$ & $k$ & $\zeta_k$ & $\xi_k$ & $v^{0}_k$ & $v^{\zeta}_k$ \\ \hline
\hline
0.5 & 1 & 0 & 0 & 0.54 & 0.54 \\ \hline
0.5 & 2 & 0.55 & 0.55 & 0.49 & 0.61 \\ \hline
0.5 & 3 & 0.09 & 0.42 & 0.4 & 0.52 \\ \hline
0.5 & 4 & 0.06 & 0.31 & 0.25 & 0.33 \\ \hline
1 & 1 & 0 & 0 & 0.54 & 0.55 \\ \hline
1 & 2 & 0.77 & 0.77 & 0.49 & 0.66 \\ \hline
1 & 3 & 0.13 & 0.59 & 0.4 & 0.56 \\ \hline
1 & 4 & 0.08 & 0.44 & 0.25 & 0.36 \\ \hline
\hline
\end{tabular}
 
\endgroup
}
\end{table}

In order to explore the possible actual realizations, as opposed to the nominal quantities, we perform  $M= 100000$ Monte Carlo simulations of the shocks to sector o and compute all actual quantities in correspondence to them.

Figure \ref{fig_tau_epsilon_line} plots the overlapped histogram of the shocks and neighbor shocks $\eps_k$ and $\tau_k$ for $k=3,4$. Obviously, the total log-productivity shock in sector $1$ is always $\rho_1=0$, and consequently  $\tau_1=1$. Also, since sector $1$ does not have suppliers, $\eps_1=1$. The information-normalized total shock in the suppliers of sector $o=2$ is  $\eps_2=\beta_2+\tau_1A_{12}=1-\alpha+1-\alpha=1$ under all circumstances. This does not happen for the other sectors because they can indeed receive the shock from their suppliers.

\begin{figure}
\centering
\includegraphics[width=0.85\textwidth]{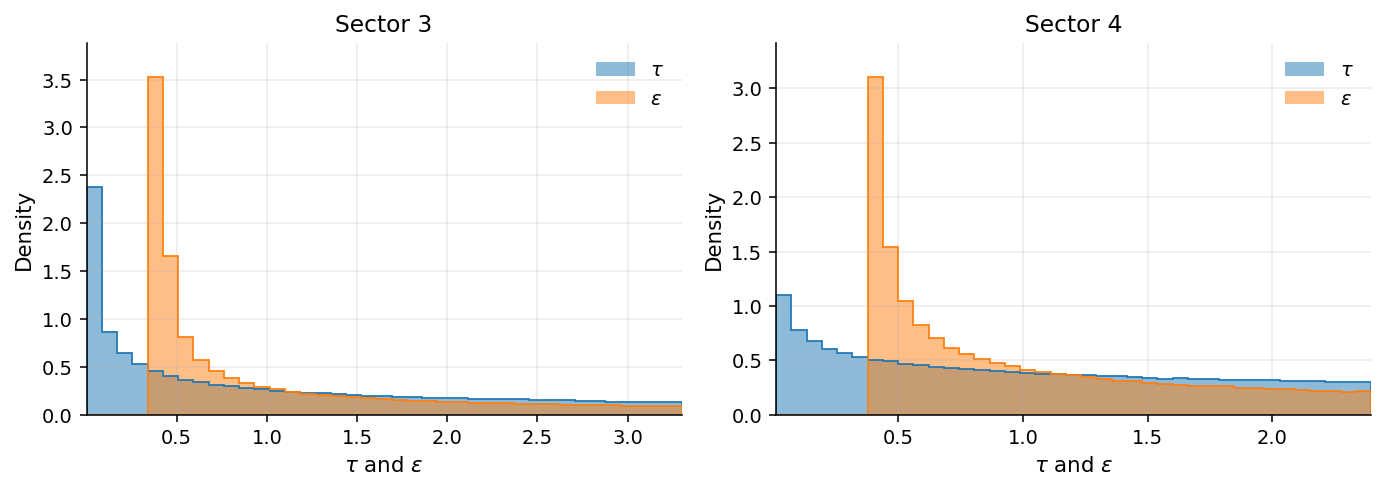}
\caption{Histogram of $\eps$ and $\tau$ by sector, $M=1000000$ draws, Directed Line.} \label{fig_tau_epsilon_line}
\end{figure}

Figure \ref{profit_hist} explores the economic effect of the  shocks to sectors $k=3,4$ by overlapping the simulated profit/loss distributions for $\theta=0.5$ (blue) and $\theta=1$ (orange). 

The higher leverage case ($\theta=\1$) exhibits a  wider range of profits/losses, indicating greater volatility in the  sector outcomes. 
This higher standard deviation reflects the amplified sensitivity of profits to shocks under increased financial exposure.

\begin{figure}
\centering
\includegraphics[width=0.85\textwidth]{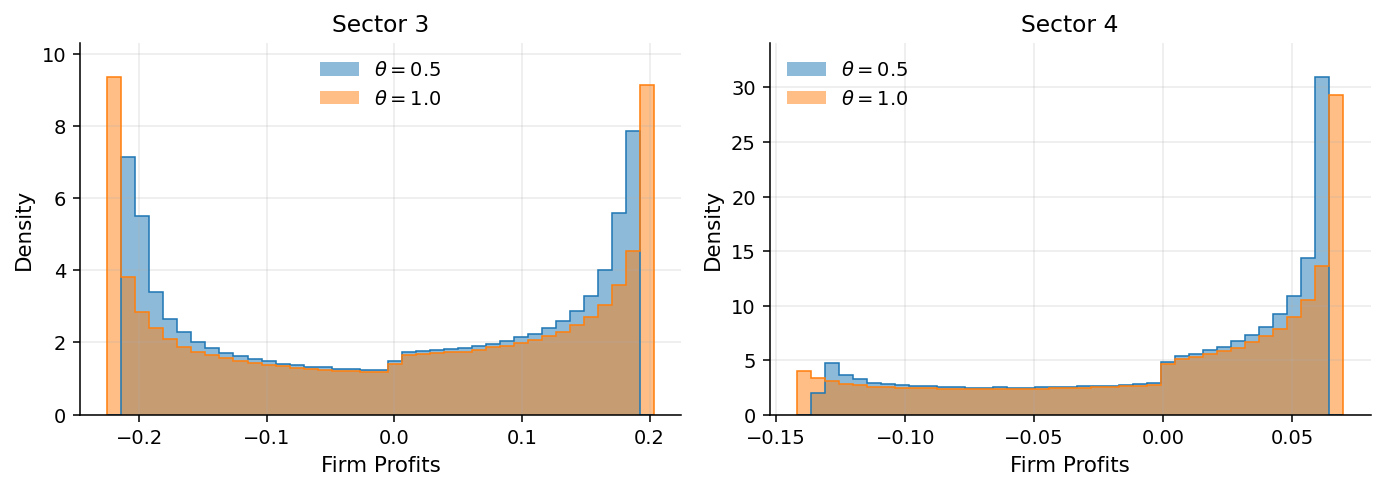}
\caption{Histogram of Sectors Profits, $M=1000000$ draws, Directed Line.} \label{profit_hist}
\end{figure}

To confirm this, Table \ref{tab_profits_line} compares the standard deviation (SD) of the sector profits  for the two different leverage values. The average profits for sectors are not reported, since they are zero across all sectors and all leverages, as due. However, the SD of profits shows a clear difference between the two leverage levels. The increased volatility and higher risk associated with higher leverage, as the financial exposure of sectors is amplified,  is particularly strong for sector o,  that is the sector directly hit by productivity shocks.

\begin{table}
\caption{Sectors Profits, Directed Line}
\label{tab_profits_line}
\centering
{%
\begingroup\catcode`\_=12\relax
\begin{tabular}{|c|r|r|}
\hline
$\theta_k$ & $k$ & $\E[(\pi_k-\E[\pi_k])^2]^{1/2}$ \\ \hline
\hline
0.5 & 1 & 0 \\ \hline
0.5 & 2 & 0.82 \\ \hline
0.5 & 3 & 0.14 \\ \hline
0.5 & 4 & 0.06 \\ \hline
1 & 1 & 0 \\ \hline
1 & 2 & 0.88 \\ \hline
1 & 3 & 0.15 \\ \hline
1 & 4 & 0.07 \\ \hline
\hline
\end{tabular}
 
\endgroup
}
\end{table}






Finally, for clarity, we report in Table \ref{tab_cascade_line} the default probabilities. Because the chain is directed, and the shock originates only at node $o=2$,  default never occurs at node $1$ and cannot occur at a downstream node $k+1$ unless it occurs at the upstream node $k$. Thus, the marginal default probabilities coincide with the probabilities that the cascade up to that node occurs. 

\begin{table}
\caption{Cascade Effects, Directed line}
\label{tab_cascade_line}
\centering
{%
\begingroup\catcode`\_=12\relax
\begin{tabular}{|c|r|r|r|r|}
\hline
$k$ & 1 & 2 & 3 & 4 \\ \hline
\hline
$\P(\pi_k<0)$ & 0 & 0.6685 & 0.4641 & 0.3742 \\ \hline
\hline
\end{tabular}

\endgroup
}
\end{table}

The probability that sector~o defaults is 66.85\%, establishing the baseline likelihood of an initial failure. Joint default of sectors~o and~2 occurs in 46.41\% of the cases. Furthermore, the probability of a three-sector cascade involving sectors $o=2$, $3$, and $4$ is $37.42\%$. This shows that once the shock reaches sector $3$, it frequently continues to propagate further along the chain. The monotonic decline of these probabilities as additional sectors are included reflects the sequential attenuation of the shock: defaults become progressively less likely as the distance from the initially shocked sector increases. The values are large enough to indicate substantial systemic vulnerability and a strong cascade mechanism within the directed structure.


\subsection{Directed Cycle}
If in the previous example a link is added from sector $n$ to sector $1$ (with weight $A_{n1}=\alpha$), we obtain a production network forming a directed cycle, whose Leontief inverse matrix is 
\be
L =\frac{1}{1-\alpha^4}
\begin{bmatrix}
1 & \alpha^3 & \alpha^2 & \alpha\\
\alpha & 1 & \alpha^3 & \alpha^2\\
\alpha^2 & \alpha & 1 & \alpha^3\\
\alpha^3 & \alpha^2 & \alpha & 1
\end{bmatrix}.
\ee

\begin{center}
\begin{tikzpicture}[
    >=stealth,
    node/.style={circle, draw=gray!70, fill=gray!20, minimum size=10mm, inner sep=0pt}
]

    \def\r{2}

    \node[node]                              (n1) at (90:\r) {$1$};
    \node[circle, draw=red, fill=red!20,
          minimum size=10mm, inner sep=0pt]  (no) at (0:\r)  {$o$};
    \node[node]                              (n3) at (270:\r) {$3$};
    \node[node]                              (n4) at (180:\r) {$4$};

    \draw[->] (n1) to[bend left=20] (no);
    \draw[->] (no) to[bend left=20] (n3);
    \draw[->] (n3) to[bend left=20] (n4);
    \draw[->] (n4) to[bend left=20] (n1);

\end{tikzpicture}
\end{center}

Assuming that once again the primitive log-productivity shock is concentrated in a single node $o$, i.e., $\eta_k=0$ for every $k\ne o$, we obtain that  
$$\rho_k=L_{ko}\eta_o=\left\{\ba{lcl}\ds\frac{\alpha^{k-o}}{1-\alpha^{4}}\eta_o&\se&k\ge o\\[10pt]\ds\frac{\alpha^{4+k-o}}{1-\alpha^{4}}\eta_o&\se&k<o \,.\ea\right.$$
The main feature of the cycle is that total shocks $\rho_k$ can  be larger in magnitude than the primitive log-productivity shock $\eta_{o}$. 

For comparability with the previous  numerical example, we consider a directed cycle with four nodes, assuming that $\alpha=0.6$, and  $\gamma_k=1/4$ for every sector $k$, so that the Bonacich centrality is $$v_k^0=\frac{1+\alpha+\alpha^2+\alpha^3}{4(1-\alpha^4)}=\frac1{4(1-\alpha)}\,,\qquad \forall k=1,2,3,4\,.$$  

As in the previous case, we assume that  $-\eta_o$ is  exponentially distributed with parameter $\lambda=0.25$. On the real side, the equilibrium is now characterized by Table
\ref{tab_nominal_cycle}. The effects on the real quantities visible here are again consistent with Proposition \ref{prop:lev-effect-qp}. They are much more nuanced than along the line, since now every sector is a (direct or indirect) supplier of every other sector.

\begin{table}
\caption{Optimal Quantities in Equilibrium, Directed Cycle}
\label{tab_nominal_cycle}
\centering
{%
\begingroup\catcode`\_=12\relax
\begin{tabular}{|c|r|r|r|r|r|}
\hline
$\theta_k$ & $k$ & $y_k$ & $l_k$ & $c_k$ & $p_k/w_k$ \\ \hline
\hline
0.5 & 1 & 0.51 & 0.25 & 0.28 & 2.54 \\ \hline
0.5 & 2 & 0.43 & 0.2 & 0.2 & 9.87 \\ \hline
0.5 & 3 & 0.49 & 0.28 & 0.23 & 5.82 \\ \hline
0.5 & 4 & 0.52 & 0.27 & 0.25 & 3.7 \\ \hline
1 & 1 & 0.48 & 0.25 & 0.28 & 2.78 \\ \hline
1 & 2 & 0.37 & 0.18 & 0.18 & 12.21 \\ \hline
1 & 3 & 0.44 & 0.28 & 0.21 & 6.96 \\ \hline
1 & 4 & 0.49 & 0.28 & 0.25 & 4.23 \\ \hline
\hline
\end{tabular}

\endgroup
}
\end{table}
 We show the results for  the financial side in Table \ref{tab_financial_cycle}. The cost of debt to sector 1 is now positive, since it may be reached by default. In this particular example, the circular symmetry of the economy matrix $A$ and the fact that  $\gamma$ is uniform, make all sectors equally central according to the traditional Bonacich measure $v_0$. The levered measure though is not equal for all sectors, since they can be hit differently by default, and therefore have different marginal, total cost of debt and levered centrality.
 
\begin{table}
\caption{Financial Variables, Directed Cycle}
\label{tab_financial_cycle}
\centering
{%
\begingroup\catcode`\_=12\relax
\begin{tabular}{|c|r|r|r|r|r|}
\hline
$\theta_k$ & $k$ & $\zeta_k$ & $\xi_k$ & $v^{0}_k$ & $v^{\zeta}_k$ \\ \hline
\hline
0.5 & 1 & 0.04 & 0.24 & 0.625 & 0.66 \\ \hline
0.5 & 2 & 0.42 & 0.57 & 0.625 & 0.77 \\ \hline
0.5 & 3 & 0.1 & 0.44 & 0.625 & 0.76 \\ \hline
0.5 & 4 & 0.07 & 0.33 & 0.625 & 0.73 \\ \hline
1 & 1 & 0.06 & 0.34 & 0.625 & 0.67 \\ \hline
1 & 2 & 0.58 & 0.78 & 0.625 & 0.82 \\ \hline
1 & 3 & 0.15 & 0.62 & 0.625 & 0.82 \\ \hline
1 & 4 & 0.1 & 0.47 & 0.625 & 0.78 \\ \hline
\hline
\end{tabular}

\endgroup
}
\end{table}

As far as actual realizations are concerned, Figure~\ref{fig:tau_epsilon_cycle} plots the overlapped histogram of total and suppliers' shocks.

\begin{figure}
\centering
\includegraphics[width=1\textwidth]{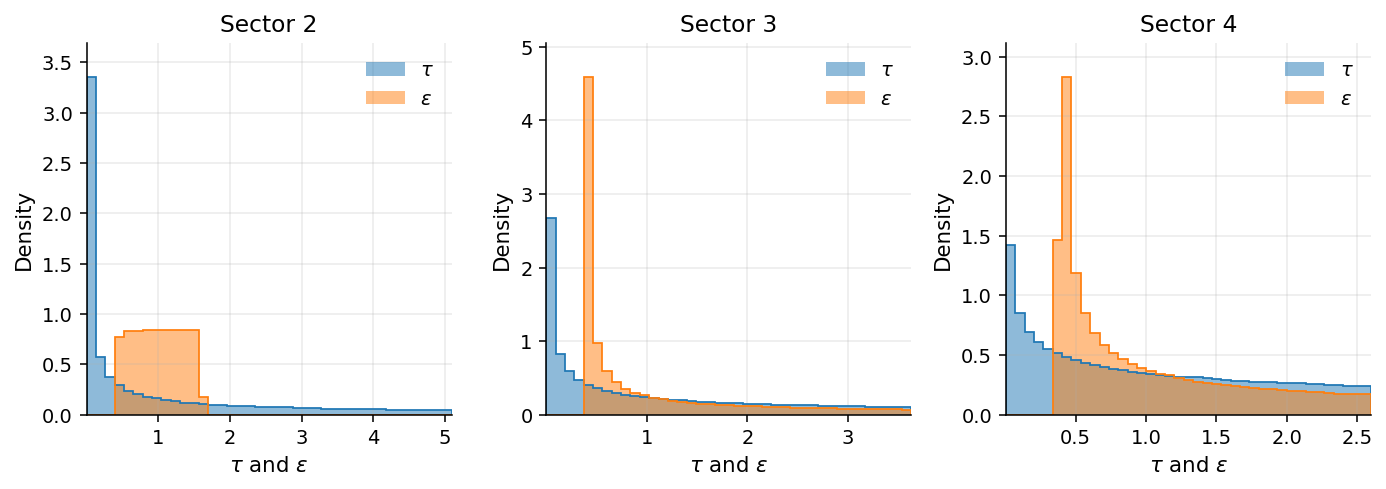}
\caption{Histogram of $\eps$ and $\tau$ by sector, $M=1000000$ draws, Directed Cycle.}
\label{fig:tau_epsilon_cycle}
\end{figure}

As expected in a feedback loop, $\tau_k$ tends to exceed $\eps_k$ for all sectors, reflecting amplification through circular propagation. 
Sector~o directly receives the primitive shock, while sectors~3 and~4 ---but also sector 1--- also show large spillovers due to network feedback.
The histogram suggests that the cycle structure raises both persistence and dispersion of shock transmission across the network.

The effect on profits in represented in the following Figure \ref{fig:profits_cycle} and Table \ref{tab:profits_cycle}. 

\begin{figure}
\centering
\includegraphics[width=1\textwidth]{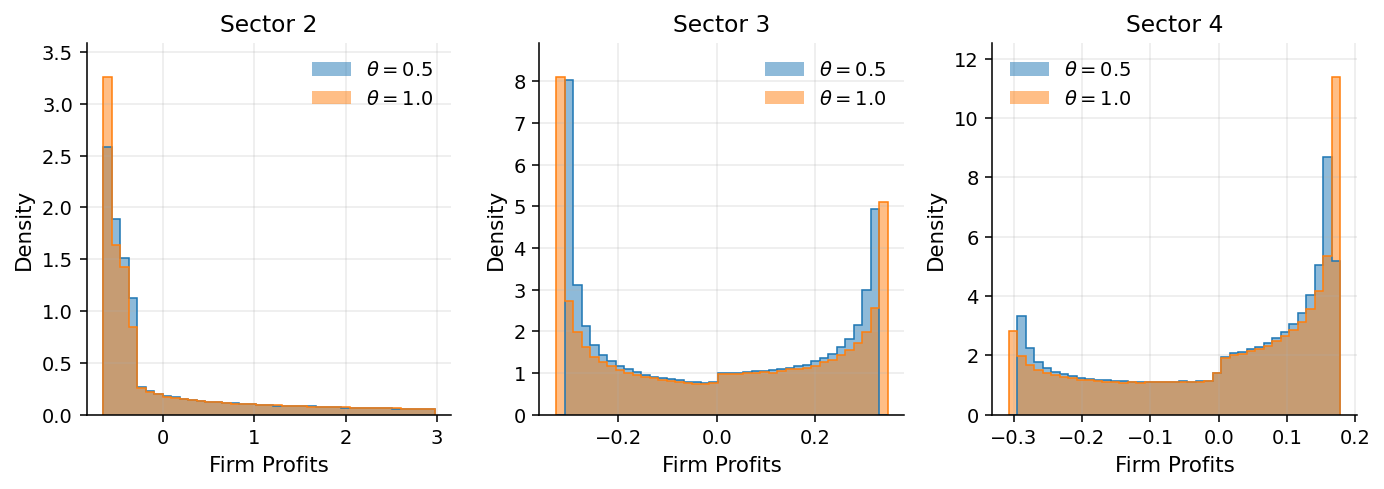}
\caption{Histogram of Sectors Profits, $M=1000000$ draws, Directed Cycle.}
\label{fig:profits_cycle}
\end{figure}

\begin{table}
\caption{Sectors Profits, Directed Cycle}
\label{tab:profits_cycle}
\centering
{%
\begingroup\catcode`\_=12\relax
\begin{tabular}{|c|r|r|}
\hline
$\theta_k$ & $k$ & $\E[(\pi_k-\E[\pi_k])^2])]^{1/2}$ \\ \hline
\hline
0.5 & 1 & 0.09 \\ \hline
0.5 & 2 & 0.89 \\ \hline
0.5 & 3 & 0.23 \\ \hline
0.5 & 4 & 0.15 \\ \hline
1 & 1 & 0.09 \\ \hline
1 & 2 & 0.95 \\ \hline
1 & 3 & 0.25 \\ \hline
1 & 4 & 0.16 \\ \hline
\hline
\end{tabular}

\endgroup
}
\end{table}

We present the marginal default probabilities in Table \ref{tab:cascade_cycle}. In the directed cycle, the probability that sector 1 defaults becomes positive (9\%). The probability that sector~o defaults is 72.41\%, which is substantially higher than in the equally parametrized line and reflects the stronger amplification created by the feedback loop. Also the default probabilities of the remaining sectors are higher than in the line.

\begin{table}
\caption{Cascade Effects, Directed Cycle}
\label{tab:cascade_cycle}
\centering
{%
\begingroup\catcode`\_=12\relax
\begin{tabular}{|c|r|r|r|r|}
\hline
$k$ & 1 & 2 & 3 & 4 \\ \hline
\hline
$\P(\pi_k<0)$& 31.79 & 72.41 & 49.01 & 39.74 \\ \hline
\hline
\end{tabular}

\endgroup
}
\end{table}
 Unlike the line case, even if the log-productivity shock can  occur only at node $o=2$,  default can occur at node $1$. However, it cannot occur downstream unless it occurs upstream. The cascade probabilities still coincide with the marginal default probabilities in the previous Table, whose  entries are bigger than in the line case. The joint failure of sectors~2 and~3 occurs in 49.01\% of the cases, indicating that the default propagates to the next sector almost half the times. The probability of a three-sector cascade involving sectors~2, 3, and~4 is 39.74\%. The decrease from three- to two-sector cascades shows that propagation attenuates with distance, but the levels remain high because the cycle architecture reinforces shocks by allowing them to circulate rather than dissipate. Overall, the magnitudes of these probabilities point to the fact  that the cycle network generates a significantly stronger cascade mechanism than the directed line, with the initial default being much more likely to trigger  failures.
 Last, let us recall that, in case the shock has an exponential distribution, the default conditions for nodes $k\geq o$ in a circle - given that $o$ has been hit - can be written as:
\begin{equation}\label{cycle1}
\begin{array}{ll}e^{\ell_k\eta_o}\frac{\ell_k+\lambda}{\lambda}-(1-\alpha)-\alpha e^{\ell_{k-1}\eta_o}\frac{\ell_{k-1}+\lambda}{\lambda}<0,\qquad &k>1,\\[5pt]
e^{\ell_0\eta_o}\frac{\ell_0+\lambda}{\lambda}-(1-\alpha)-\alpha e^{\ell_{n}\eta_o}\frac{\ell_{n}+\lambda}{\lambda}<0,\qquad &k=1,\end{array}
\end{equation}
where $$\ell_k=\frac{\alpha^k}{1-\alpha^{n+1}}, \qquad k=1,\dots , n.$$

\subsection{Directed Line vs Directed Cycle}

We next compare the directed line more closely with the directed cycle, to quantify how the presence of a feedback loop influences amplification. The comparison is conducted at $\theta = 1$.

\begin{center}
\begin{tabular}{|c|c|c|c|c|c|}
\hline
\multicolumn{6}{|c|}{\textbf{Directed Cycle $-$ Directed Line (Same $\theta=1$)}} 
\\\hline
$k$
  & $\Delta \zeta_k$ 
  & $\Delta \xi_k$ 
  & $\Delta y_k$ 
  & $\Delta c_k$ 
  & $\Delta \P(\pi_k<0)$ 
\\\hline
1 & 0.06  & 0.34 & $-0.07$ & $-0.08$ & $0.3179$ \\\hline
2 & $-0.19$ & 0.01 & $+0.06$ & $+0.01$ & $+0.0556$ \\\hline
3 & 0.02  & 0.03 & $+0.13$ & $+0.01$ & $0.026$ \\\hline
4 & 0.02  & 0.03 & $+0.26$ & $+0.02$ & $+0.0233$ \\\hline
\end{tabular}
\end{center}

The comparison  shows clear quantitative differences across all sectors. For sector $1$, the cycle produces higher cost of debt ($\Delta \zeta_1 = 0.06$, $\Delta \xi_1 = 0.34$) together with lower output and consumption ($\Delta y_1 = -0.07$, $\Delta c_1 = -0.08$) and a much higher probability of default ($\Delta \text{Default}_o = 31.79$ percentage points). This indicates that, relative to the line, the cycle substantially increases the exposure of the sector. In sector~o, directly hit by the shock, the cycle lowers $\zeta$ ($\Delta \zeta_o = -0.19$) while leaving $\xi$ almost unchanged ($\Delta \xi_o = 0.01$), and it yields slightly higher real activity ($\Delta y_o = 0.06$, $\Delta c_o = 0.01$) and a higher default rate ($\Delta \text{Default}_o = 5.56$). The decrease in the marginal cost of debt, together with the stability of the total one and the behavior of the sector default probability,  points to a more distributed default structure in the cycle. In sectors $3$ and $4$, cost of debt effects are small ($\Delta \zeta_3 = 0.02$, $\Delta \xi_3 = 0.03$, with the same order of magnitude for sector $4$). Production and consumption increase in both sectors, as in sector $o$ ($\Delta y_3 = 0.13$, $\Delta y_4 = 0.26$, $\Delta c_3 = 0.01$, $\Delta c_4 = 0.02$), and default rates rise by $2.60$ and $2.33$ percentage points, respectively. Overall, the sign and magnitude of these differences show that the directed cycle transmits and amplifies shocks more strongly than the directed line: the upstream sector in the line becomes involved in the cycle loop and experiences markedly lower real outcomes and positive marginal, higher total cost of debt, while downstream sectors display uniformly higher real activity and moderately higher financial risk.

\section{Summary and conclusions} \label{sec:conclu}

 This paper studies the effects of production shocks on rigid networks, namely networks in which decisions on intermediate goods and output cannot be shock-contingent, taking into consideration reliance on external debt. 
 
 In this economy, the Modigliani-Miller irrelevance does not hold: the way in which sector operations are funded, whether through debt or equity, becomes relevant. Default of single or network-related sectors may occur when real shocks occur, even in equilibrium, and the banks that finance defaulting sectors are affected and recover less than due.
 
 With perfect information, our economy returns a standard first-best equilibrium {\it\`a la} Acemoglu et al.\ \cite{acemoglu2012network}.
 
 Without perfect information,  the impact of production shocks on welfare is still represented by the Bonacich centrality vector. However, the elements of this vector  no longer represent the Domar weights. As a consequence, Hulten's theorem no longer holds and shock effects can be much bigger than one's own GDP share. This rationalizes the empirical evidence we have witnessed over recent war and pandemic episodes and warns about the future.  
 
 If there is also leverage, the rigid equilibrium, as expected, is a second-best and therefore reduces welfare. Consumption and labor usage  of sectors who are   suppliers of a levered one increases with respect to the unlevered case, while production of sectors which are not its suppliers goes down, and prices go up.  The equilibrium with leverage is profoundly different from the unlevered one even if shocks do not occur, ex post.
 
 So, rigidity and default have real and inflationary consequences, consistently with recent macro evidence.

 The main feature of the equilibrium with rigidity and leverage is the possibility of default. We distinguish the frequency from the severity of default.  The occurrence of sectors' default just depends on the real shocks and the network structure,  the magnitude of the losses depends on the  interest rates paid to financiers, which are endogenous, and the network structure.  Regarding occurrence,  sector $k$ is in default if its productive perturbation is strictly less than the sum of the perturbations on the neighboring nodes, weighted by their  importance in the production process. The shocks that enter this condition  already comprehend their network effects. 
 The magnitude of the losses also depends on the Bonacich centrality adjusted for debt.
 We deem these results quite important in the current context, to have a sense of how many default cascades disruptions of supply chains and other production shocks can generate and what circumstances can soothe or worsen them.
 
 The numerical examples give a sense of the richness of effects that the model permits one to study, once faced even with simple networks, and how much it can tell about the effects on real as well as financial variables. They allow us to realize how much of the interplay between the real and financial side we can investigate.

  Our analytical results could be used for designing policy interventions on the structure of the network or on its reliance on leverage to control inflation,  or consumption or labor demand in specific sectors, or to mitigate default probabilities. Thanks to the insights of the model, specific policy measures could be adopted to  address the consequences on maximal equilibrium quantities and prices, as well as on  default propagation and loss amplification.  Equilibrium prices and quantities, as well as default occurrence and propagation,  depend on  the input-output structure of the production network, and interventions in that sense can be designed based on our analytical results.  
 
 There is more than that, though: if the network structure cannot be changed, at least in the short run, our model suggests that effects on maximal quantities and prices, but also amplification and default cascades,  can still be controlled working on the financial side. Exogenous policy interventions that limit leverage or the cost of debt can be designed to prevent welfare impacts or loss cascades in case shocks hit. Changes in the network are probably more difficult and lengthy to implement in practice than a reduction in leverage of the most risky or most interconnected sectors.

 
 An attempt to design these short-run policy instruments, coupled with a calibration of the model to a specific country input-output matrix, is therefore in our agenda.

\begin{appendix}

\section{Appendix}
\subsection{Proof of Proposition \ref{prop:actual}}\label{sec:proof-proposition1}
For every sector $k\in\mc V$, define the function 
$$F_k(z,l)=\varsigma_kl_k^{\beta_k}\prod \limits_{j\in\mc V}z_{jk}^{A_{jk}}\,,$$
so that the Cobb-Douglas production relation \eqref{Cobb-Douglas-1} can equivalently be rewritten as \be\label{Cobb-Douglas-1bis}y_k^{\eta}=e^{\eta_k}F_k(z^{\eta},l)\,.\ee
for every industry $k\in\mc V$. 

(i) Every quadruple $(\tilde y,\tilde z,\tilde c,l)$  satisfying the proportional rationing rule \eqref{proportional-rationing-1} must satisfy Equation\eqref{actual-quantities} for some $\rho$. Moreover, if $(\tilde y,\tilde z,\tilde c,l)$ satisfies the Cobb-Douglas production relation \eqref{Cobb-Douglas-1bis} for every realization of the log-productivity shock vector $\eta$, then 
$$e^{\rho_k}y^0_k=y^\eta_k=e^{\eta_k}F_k(z^{\eta},l)=e^{\eta_k+\sum\limits_{j\in\mc V}\rho_jA_{jk}}F_k(z,l)=e^{\eta_k+\sum\limits_{j\in\mc V}\rho_jA_{jk}}y_k^0\,,$$
for every sector $k\in\mc V$, so that necessarily
$\rho=\eta+A'\rho$, which is equivalent to Equation \eqref{rho-def}. 

(i) For the quadruple $(y^{\eta}, z^{\eta},  c^{\eta},l)$,  Equation\eqref{actual-quantities} clearly implies the proportional rationing rule \eqref{proportional-rationing-1}-\eqref{proportional-rationing-2}. Now, observe that Equation \eqref{rho} implies that 
$$\eta+A'\rho=\eta+\rho-(I-A')\rho=\rho\,.$$
Then, the proportional rationing rule \eqref{proportional-rationing-1} implies that 
$$\frac{e^{\eta_k}F_k(z^{\eta},l)}{y^\eta_k}=\frac{e^{\eta_k}\varsigma_kl_k^{\beta_k}\prod \limits_{j\in\mc V}(z^0_{jk}e^{\rho_j})^{A_{jk}}}{e^{\rho_k}y^0_k}=\frac{e^{\eta_k+\sum_{j\in\mc V}\rho_jA_{jk}}F_k(z^0,l)}{e^{\rho_k}y^0_k}=\frac{F_k(z^0,l)}{y^0_k}\,,$$
for every industry $k\in\mc V$, from which it is clear that the Cobb-Douglas production relation \eqref{Cobb-Douglas-1bis} holds for every realization of the log-productivity shock vector $\eta$ if and only if it holds for $\eta=0$. 
Similarly, the proportional rationing rule \eqref{proportional-rationing-1} implies that 
$$y_k^\eta-\sum\limits_{j\in\mc V}z^\eta_{kj}-c^\eta_k=
e^{\rho_k}\left(y_k^0-\sum\limits_{j\in\mc V}z^0_{kj}-c^0_k\right)\,,$$
so that the market for goods clearing condition \eqref{market-clearing} is satisfied for every realization of the log-productivity shock vector $\eta$ if and only if it is satisfied for $\eta=0$. 
\qed

\subsection{Proof of Proposition \ref{prop-zeta}}\label{sec:prop-zeta}
Throughout this proof, it is convenient to adopt the compact notation \be\label{clump-notation}[a]_b^c:=\min\{\max\{a,b\},c\}\,.\ee
Consider the function $f:[1,+\infty)\times[0,1]\to\R$ defined by 
$$f(x,\theta)=\E\left[[x\tau_k]_{(1-\theta)\eps_k}^{x\eps_k}|\varphi(\eta)\right]-1
\,,$$ 
and observe that Equation \eqref{distortion-def} is equivalent to \be\label{distortion-def-bis} f(e^{\zeta_k},\theta_k)=0\,.\ee
For every fixed $\theta$ in $[0,1]$, Equation \eqref{Eepsk=1} implies that 
$$f(1,\theta)=\E\left[[\tau_k]_{(1-\theta)\eps_k}^{\eps_k}|\varphi(\eta)\right]-1
\le\E[\eps_k|\varphi(\eta)]-1=0\,,$$
whereas, since $\min\{\eps_k,\tau_k\}>0$, we have that 
$$f(x,\theta)\ge x\E[\min\{\tau_k,\eps_k\}|\varphi(\eta)]-1\stackrel{{x\to+\infty}}{\longrightarrow}+\infty\,.$$
Since $f(x,\theta)$ is continuous and non-decreasing in $x$,  it follows that, for every $\theta$ in $[0,1]$, there exists $x^*(\theta)\ge1$ such that 
$f(x,\theta)<0$ for every  $x$ in $[1,x^*(\theta))$, $f(x^*(\theta),\theta)=0$, and $f(x,\theta)\ge0$ for every  $x\ge x^*(\theta)$. In particular, $\zeta_k=\log x^*(\theta_k)$ is a solution of Equation \eqref{distortion-def-bis}. 
Since $f(x,\theta)$ is non-increasing in $\theta$, we get that $x^*(\theta)$ is non-decreasing in $\theta$, hence so is $\zeta_k$ as a function of $\theta_k$.

We now prove uniqueness. Towards this goal, observe that Equations \eqref{Etauk=1} and \eqref{Eepsk=1} imply $\E[\tau_k-\eps_k|\varphi(\eta)]=0$, so that  
$\P(\tau_k\ge\eps_k|\varphi(\eta))>0$.
Since $\eps_k>0$, this implies that 
$$\E\left[\eps_k\1_{\{\tau_k\ge\eps_k\}}|\varphi(\eta)\right]>0\,.$$
Hence, for $\theta$ in $[0,1]$ and $x_1>x_0\ge 1$, we have 
$$\ba{rcl}f(x_1,\theta)
&=&\E\left[\max\left\{x_1\min\{\tau_k,\eps_k\right\},(1-\theta)\eps_k\}|\varphi(\eta)\right]-1\\ 
&=&x_1\E\left[\eps_k\1_{\{\tau_k\ge\eps_k\}}|\varphi(\eta)\right]-1\\
&&+\E\left[\max\left\{x_1\min\{\tau_k,\eps_k\right\},(1-\theta)\eps_k\}\1_{\{\tau_k<\eps_k\}}|\varphi(\eta)\right]\\
&>&x_0\E\left[\eps_k\1_{\{\tau_k\ge\eps_k\}}|\varphi(\eta)\right]-1\\
&&+\E\left[\max\left\{x_0\min\{\tau_k,\eps_k\right\}\1_{\{\tau_k<\eps_k\}}|\varphi(\eta)\right]\\
&=&\E\left[\max\left\{x_0\min\{\tau_k,\eps_k\right\},(1-\theta)\eps_k\}|\varphi(\eta)\right]-1\\
&=&f(x_0,\theta)
\,,\ea$$ 
so that $f(x,\theta)$ is strictly increasing in $x\ge1$. In particular, this implies that $f(x,\theta)>0$ for every $x>x^*(\theta)$.  Therefore, Equation \eqref{distortion-def-bis} has a unique solution $\zeta_k=\log x^*(\theta_k)$. 

Observe that 
$$f(1,0)=\E\left[\eps_k|\varphi(\eta)\right]-1=0\,,$$
so that $x^*(0)=1$, which implies that $\zeta_k(0,\varphi(\eta))=\log x^*(0)=0$, proving Equation \eqref{zeta-min}. On the other hand, we have that 
$$f(x,1)=x\E\left[\min\{\eps_k,\tau_k\}|\varphi(\eta)\right]-1\,,$$ whose zero is $x^*(1)=1/\E[\min\{\eps_k,\tau_k\}|\varphi(\eta)]$. This implies that $\zeta_k(1,\varphi(\eta))=\log x^*(1)=-\log\E[\min\{\eps_k,\tau_k\}|\varphi(\eta)]$, thus proving Equation \eqref{zeta-max}.

Finally, recall that, in the special case of full information, i.e., when $\varphi(\eta)=\eta$, the variables $\tau_k$ and $\eps_k$ are identically equal to $1$, so that $f(x,\theta)=\max\{x,1-\theta\}-1=x-1$ for every $x\ge1$ and $\theta$ in $[0,1]$. Hence, in this case $x^*(\theta)=1$, so that $\zeta_k(\theta_k,\eta)=0$ for every $\theta_k$ in $[0,1]$, thus proving Equation \eqref{zeta=0}.  
\qed

\subsection{Proof of Lemma \ref{lemma:monotonicity-zeta}}
\label{sec:proof-lemma-monotonicity-zeta}

First, notice that $e^{-[\zeta]}A'$ is a matrix whose entries $(e^{-[\zeta]}A')_{jk}=e^{-\zeta_j}A_{kj}$ are non-increasing in $\zeta$ for every two sectors $j$ and $k\in\mc V$, and strictly decreasing in $\zeta_j$ if and only if  $A_{kj}>0$, i.e., industry $k$ is a direct supplier of industry $j$. It follows that, for every two sectors $j$ and $k\in\mc V$, the corresponding entry 
$$(L^{\zeta})_{jk}=1+(e^{-[\zeta]}A')_{jk}+(e^{-[\zeta]}A'e^{-[\zeta]}A')_{jk}+\ldots\,  ,$$
of the discounted Leontief matrix $L^\zeta$ is monotone non-increasing in $\zeta$ thus proving point (i). In fact, we see that $(L^{\zeta})_{jk}$ is strictly decreasing in $\zeta_o$, for a sector $o$ (including possibly node $j$ itself) if and only if, in the directed graph describing the production network, there exists a path from node $k$ to node $j$ passing through node $o$, i.e., if and only if $L_{jo}L_{Lok}>0$. In particular, if sector $k$ is not a supplier of sector $o$, i.e., if $L_{ok}=0$, then $(A^h)_{ko}=0$ for every $h\ge1$, so that $L_{jk}^{\zeta}$ is constant in $\zeta_o$ for every sector $j\in\mc V$, thus proving the first part of point (iv). 

Analogously, the normalization quantity 
$$\psi(\zeta):=\sum_{j\in\mc V}\sum_{k\in\mc V}\gamma_jL_{jk}^{\zeta}\beta_ke^{-\zeta_k}\,,$$ is a monotone non-increasing function of $\zeta$ (proving point (ii)). In particular, a strict increase in the primitive cost of debt $\zeta_o$ in any sector $o$ such that either $\gamma_o>0$ or $\gamma_o=0$ and $o$ is a supplier of some sector $j$ with $\gamma_j>0$ causes a strict decrease of $\psi(\zeta)$, thus proving point (v). 

To prove point (iii), simply observe that the non-normalized discounted centrality of a sector $k$
$$\psi(\zeta)v_k^{\zeta}=\sum_{j\in\mc V}L_{jk}^{\zeta}\gamma_j\,,$$
is monotone non-decreasing in $L^{\zeta}$, which in turn is monotone non-increasing in $\zeta$; . 

Finally, consider the case when sector $k$ is not a supplier of sector $o$, so that $L_{kj}^{\zeta}$ is constant in $\zeta_o$ for every $j$, hence so is the non-normalized discounted centrality $\psi(\zeta)v_k^{\zeta}=\sum_{j\in\mc V}L_{jk}^{\zeta}\gamma_j$. Since the normalization constant $\psi(\zeta)$ is monotone non-increasing in $\zeta$ by point (iii), we have that the normalized discounted centrality $v_k^{\zeta}$ is monotone non-decreasing in $\zeta_o$, thus proving the second part of point (iv). In this case, if additionally either $\gamma_o>0$ or $\gamma_o=0$ and sector $o$ is a supplier of some sector $j\ne k$ with $\gamma_j>0$,  we have that the normalization constant  $\psi(\zeta)$ is strictly increasing in $\zeta_o$, so that the normalized discounted centrality $v_k^{\zeta}$ is strictly increasing in $\zeta_o$, thus proving point (vi). \qed

\subsection{Proof of Theorem \ref{theo:Walrasian} and Corollary \ref{coro:Walrasian}}\label{sec:proof-theorem1}

 We first prove that given any $\varphi$-rigid Walrasian equilibrium $(y^0,z^0,l,c^0,r, p,w)$, all relations \eqref{max-y}, \eqref{max-z}, \eqref{labor}, \eqref{max-c}, \eqref{rkopt}, \eqref{prices}, \eqref{wage}, and \eqref{price-consbundle} of Theorem \ref{theo:Walrasian}, as well relations \eqref{actual-y}, \eqref{actual-z}, \eqref{actual-c}, \eqref{actual-profit}, \eqref{actual-budget}, and \eqref{actual-utility} of Corollary \ref{coro:Walrasian} hold true.

For every sector $k\in\mc V$, let
$$s_k=\E[\mc A_k(\eta)|\varphi(\eta)]\,,$$ 
be the conditional expected values of industry $k$'s assets given the public signal. 
Equation \eqref{assets} and the first identity in Equation \eqref{actual-quantities} 
imply that  
\be\label{sk}s_k=\E[\mc A_k(\eta)|\varphi(\eta)]
=\E[p_ky^\eta_{k}|\varphi(\eta)]
=p_ky_k^0\E[e^{\rho_k}|\varphi(\eta)]\,,\qquad \forall k\in\mc V\,.\ee
Using again the first identity in Equation \eqref{actual-quantities}, we get that the industries' actual assets are \be\label{actual-Ak}\mc A_k(\eta)=
p_ky^\eta_k=p_ky^0_ke^{\rho_k}=p_ky^0_k\E[e^{\rho_k}|\varphi(\eta)]\tau_k=s_k\tau_k\,,\ee
where $\tau_k$ is the normalized total shock defined in Equation \eqref{tau-def}. 

On the other hand, Equation \eqref{liabilities} and the second identity in \eqref{actual-quantities} allow us to write the conditional expected value of the industries' liabilities given the public signal as
$$\E[\mc L_k(\eta)|\varphi(\eta)]=\E\left[\left(\sum_{j\in\mc V}p_jz_{jk}^\eta+wl_k\right)|\varphi(\eta)\right]=
\sum_{j\in\mc V}p_jz_{jk}^0\E[e^{\rho_j}|\varphi(\eta)]+wl_k\,,
$$
for every sector $k\in\mc V$. 
Now, the Cobb-Douglas form of the production function \be\label{Cobb-Douglas-0}y_k^0=\varsigma_kl_k^{\beta_k}\prod_{j\in\mc V}(z^0_{jk})^{A_{jk}}\,,\ee
implies that 
$$\frac{\partial y_k^0}{\partial z^0_{jk}}=\frac{A_{jk}y^0_k}{z_{jk}^0}\,,\qquad\frac{\partial y_k^0}{\partial l_{k}}=\frac{\beta_{k}y^0_k}{l_{k}}\,.\qquad \forall j,k\in\mc V\,.$$
It follows that the industries' conditional expected profits given the public signal 
\begin{align*}
\E[\pi_k(\eta)|\varphi(\eta)]
&=\ds\E[\mc A_k(\eta)-(1+r_k\theta_k)\mc L_k(\eta)|\varphi(\eta)] \\
&=\ds p_ky_k^0
\E[e^{\rho_k}|\varphi(\eta)]-(1+r_k\theta_k)\left(\sum_{j\in\mc V}p_jz_{jk}^0\E[e^{\rho_j}|\varphi(\eta)]+wl_k\right)
\end{align*}
are differentiable concave functions of the decision variables $(z^0_{jk})_{j\in\mc V}$ and $l_k$, with first-order necessary and sufficient conditions for their maximization given by  
$$0=\frac{\partial}{\partial z^0_{jk}}\E[\pi_k(\eta)|\varphi(\eta)]=p_k\frac{A_{jk}y^0_k}{z_{jk}^0}\E[e^{\rho_k}|\varphi(\eta)]-(1+r_k\theta_k)p_j\E[e^{\rho_j}|\varphi(\eta)]\,,$$
that using \eqref{sk} can be rewritten as
\be\label{FOC-z}
p_jz^0_{jk}\E[e^{\rho_j}|\varphi(\eta)]=\frac{A_{jk}s_k}{1+r_k\theta_k}\,,\qquad \forall j,k\in\mc V\,,\ee
and, respectively,
$$0=\frac{\partial}{\partial l_k}\E[\pi_k(\eta)|\varphi(\eta)]
=p_k\frac{\beta_{k}y^0_k}{l_{k}}\E[e^{\rho_k}|\varphi(\eta)]-(1+r_k\theta_k)w\,, $$
that can be rewritten as
\be\label{FOC-l}
wl_k=\frac{\beta_ks_k}{1+r_k\theta_k}\,,\qquad \forall k\in\mc V\,.
\ee
It then follows from  Equation \eqref{liabilities}, the second identity in \eqref{actual-quantities}, and the first-order optimality conditions \eqref{FOC-z} and \eqref{FOC-l} that the actual liabilities of industry $k$ are \be\label{actual-Lk}
\ba{rcl}\ds\mc L_k(\eta)
&=&\ds\sum_{j\in\mc V}p_jz^{\eta}_{jk}+wl_k\\
&=&\ds\sum_{j\in\mc V}p_jz^{0}_{jk}e^{\rho_j}+wl_k\\
&=&\ds\sum_{j\in\mc V}\tau_j\frac {A_{jk}s_k}{1+r_k\theta_k}+\frac{\beta_ks_k}{1+r_k\theta_k}\\
&=&\ds\eps_k\frac{s_k}{1+r_k\theta_k}
\,,\ea\ee
where $\eps_k$ is the information-normalized total shock to the suppliers of sector $k$, as defined in Equation \eqref{epsk}. 

Now, observe that, by Equation \eqref{default-costs}, we can rewrite the default costs as 
$$\ba{rcl}\mc D_k(\eta)&=&
\left[(1+r_k)\theta_k\mc L_k(\eta) -[\mc A_k(\eta)-(1-\theta_k)\mc L_k(\eta)]_+\right]_+ \\
&=&\left[\min\{(1+r_k)\theta_k\mc L_k(\eta),(1+r_k\theta_k)\mc L_k(\eta) -\mc A_k(\eta)\}\right]_+\\
&=&\left[(1+r_k\theta_k)\mc L_k(\eta) -\mc A_k(\eta)\right]_0^{(1+r_k)\theta_k\mc L_k(\eta)}
\,,\ea$$
where we have used the notation \eqref{clump-notation}. 
By Equation \eqref{Bank-profit}, it follows that the banks' profits are given by 
$$
\ba{rcl}\mathcal I_k(\eta)
&=&r_k\theta_k\mc L_k(\eta)-\mc D_k(\eta)\\
&=&r_k\theta_k\mc L_k(\eta)-\left[(1+r_k\theta_k)\mc L_k(\eta) -\mc A_k(\eta)\right]_0^{(1+r_k)\theta_k\mc L_k(\eta)}\\
&=&r_k\theta_k\mc L_k(\eta)+\left[\mc A_k(\eta)-(1+r_k\theta_k)\mc L_k(\eta) \right]_{-(1+r_k)\theta_k\mc L_k(\eta)}^0\\
&=&\left[\mc A_k(\eta)-\mc L_k(\eta)\right]^{r_k\theta_k\mc L_k(\eta)}_{-\theta_k\mc L_k(\eta)}\ea$$
Substituting the above into the null conditional expected bank's profit condition \eqref{bank-optimality} yields  
\be\label{Emin}
\ba{rcl}0
&=&\ds\frac{1+r_k\theta_k}{s_k}
\E\left[\mc I_k(\eta)|\varphi(\eta)\right]\\[10pt]
&=&\ds\frac{1+r_k\theta_k}{s_k}
\E\left[\left[\mc A_k(\eta)-\mc L_k(\eta)\right]^{r_k\theta_k\mc L_k(\eta)}_{-\theta_k\mc L_k(\eta)}|\varphi(\eta)\right]\\[10pt]
&=&\E\left[\left[(1+r_k\theta_k)\tau_k-\eps_k\right]^{r_k\theta_k\eps_k}_{-\theta_k\eps_k}\big|\varphi(\eta)\right]\\[10pt]
&=&\E\left[\left[(1+r_k\theta_k)\tau_k\right]^{(1+r_k\theta_k)\eps_k}_{(1-\theta_k)\eps_k}\big|\varphi(\eta)\right]-\E[\eps_k|\varphi(\eta)]\\[10pt] 
&=&\E\left[\left[(1+r_k\theta_k)\tau_k\right]^{(1+r_k\theta_k)\eps_k}_{(1-\theta_k)\eps_k}\big|\varphi(\eta)\right]-1
\,,\ea\ee
where the last identity follows from Equation \eqref{Eepsk=1}. 
It then follows that 
\be\label{1+rk}1+r_k\theta_k=e^{\zeta_k}\,,\qquad \forall k\in\mc V\,,\ee
where $\zeta_k$ is the solution of Equation \eqref{distortion-def}, so that Equation \eqref{rkopt} holds true. 

Similarly, it follows from Equation \eqref{Cobb-Douglas-2} and the third identity in \eqref{actual-quantities} that the conditional expected value of the representative household's utility 
$$\E[U(c^{\eta})|\varphi(\eta)]=\E[\chi\prod_{k\in\mc V}(c_k^{\eta})^{\gamma_k}]=\chi\prod_{k\in\mc V}(c_k^{0})^{\gamma_k}\E[e^{\sum_{k\in\mc V}\gamma_k\rho_k}|\varphi(\eta)]\,,$$
is a concave function of $c^0$ and the first-order necessary and sufficient conditions for its maximization under the conditional expected budget constraint 
\be\label{expected-budget-constraint}\sum_{k\in\mc V}p_kc_k^0\E[e^{\rho_k}|\varphi(\eta)]=
\sum_{k\in\mc V}p_k\E[c_k^{\eta}|\varphi(\eta)]\le E\,,
\ee
read 
\be\label{FOC-c}
p_k\E[e^{\rho_k}|\varphi(\eta)]c^0_k=E\gamma_k\,,\qquad \forall k\in\mc V\,.
\ee
It then follows from Equation \eqref{sk}, the clearing condition \eqref{market-clearing} for the market for goods, the first-order optimality conditions \eqref{FOC-z} and \eqref{FOC-c}, and Equation \eqref{1+rk} that   
\be\label{s}
\ba{rcl}s_j
&=&\ds p_jy_j^0\E[e^{\rho_j}|\varphi(\eta)]\\
&=&\ds p_j\left(\sum_{k\in\mc V}z^0_{jk}+c_j^0\right)\E[e^{\rho_j}|\varphi(\eta)]\\
&=&\ds\sum_{k\in\mc V}\frac {A_{jk}}{1+r_k\theta_k}s_k+E\gamma_j\\
&=&\ds\sum_{k\in\mc V}{A_{jk}}e^{-\zeta_k}s_k+E\gamma_j\,,\ea\ee
for every $j\in\mc V$. 
Notice that Equation \eqref{s} is equivalent to 
\be\label{s-bis}s_k=E\sum_{j\in\mc V} L_{jk}^{\zeta}\gamma_j\,,\qquad \forall k\in\mc V\,,\ee
where $L^\zeta$ is the discounted Leontief matrix defined in \eqref{distorted-Leontief}. From the clearing condition \eqref{labor-clearing} for the labor market, the first-order optimality condition \eqref{FOC-l}, and Equations \eqref{1+rk}, \eqref{s-bis}, and \eqref{psi}, we get 
\be\label{w-E}w=w\sum_{k\in\mc V}l_k=\sum_{k\in\mc V}\frac{\beta_ks_k}{1+r_k\theta_k}=E\sum_{j\in\mc V}\sum_{k\in\mc V}\beta_ke^{-\zeta_k}L^{\zeta}_{jk}\gamma_j=E\psi(\zeta)\,.\ee
From Equations \eqref{s-bis} and \eqref{w-E}, we get 
\be\label{s-ter}s_k=E\sum_{j\in\mc V} L_{jk}^{\zeta}\gamma_j=\frac{w}{\psi(\zeta)}\sum_{j\in\mc V} L_{jk}^{\zeta}\gamma_j=wv^{\zeta}_k\,,\qquad \forall k\in\mc V\,,\ee
where 
$v^{\zeta}_k$ is the normalized discounted centrality of sector $k$, as defined in Equation
\eqref{centrality-def}.
It now follows from Equation \eqref{Cobb-Douglas-0} and the first-order optimality conditions \eqref{FOC-z} and \eqref{FOC-l} that 
\be\label{CB-more} \ba{rcl}
y_k^0
&=&\ds\varsigma_kl_k^{\beta_k}\prod_{j\in\mc V}(z^0_{jk})^{A_{jk}}\\
&=&\ds\varsigma_k\frac{\beta_k^{\beta_k}}{w^{\beta_k}}\left(\frac{s_k}{1+r_k\theta_k}\right)^{\beta_k+\sum_{j\in\mc V}A_{j_k}}\frac{\prod_{j\in\mc V}A_{jk}^{A_{jk}}}{\prod_{j\in\mc V}p_j^{A_{jk}}\prod_{j\in\mc V}\E[e^{\rho_j}|\varphi(\eta)]^{A_{jk}}}\\
&=&\ds\frac{y^0_k(p_k/w)\E[e^{\rho_k}|\varphi(\eta)]e^{-\zeta_k}}{\prod_{j\in\mc V}(p_j/w)^{A_{jk}}\prod_{j\in\mc V}\E[e^{\rho_j}|\varphi(\eta)]^{A_{jk}}}
\,,\ea\ee
where the last identity is a consequence of the choice of the normalization constant $\varsigma_k=\beta_k^{-\beta_k}\prod_{j\in\mc V}A_{jk}^{-A_{jk}}$, the constant returns-to-scale assumption \eqref{normalization-1} for the industries' production functions, and Equation \eqref{sk}. By taking the logarithm of both sides of Equation \eqref{CB-more} and rearranging terms, we get 
$$\log\frac{p_k\E[e^{\rho_k}|\varphi(\eta)]}{w}-\sum_{j\in\mc V}A_{jk}\log\frac{p_j\E[e^{\rho_j}|\varphi(\eta)]}{w}=\zeta_k\,,\qquad \forall k\in\mc V\,,$$ 
which is equivalent to 
$$\log\frac{p_k\E[e^{\rho_k}|\varphi(\eta)]}{w}=\sum_{j\in\mc V}L_{kj}\zeta_j=\xi_k\,,\qquad \forall k\in\mc V\,,$$
where $\xi_k$ is the total cost of debt for industry $k$, as defined in \eqref{xi-def}. Clearly, the above equation is equivalent to Equation \eqref{prices}. From Equations \eqref{sk}, \eqref{s-ter}, and \eqref{prices}, we then get 
$$y^0_k
=\frac{s_k}{p_k\E[e^{\rho_k}|\varphi(\eta)]}=\frac{wv_k^{\zeta}}{p_k\E[e^{\rho_k}|\varphi(\eta)]}=v_k^{\zeta}e^{-\xi_k}\,,\qquad\forall k\in\mc V\,,$$
thus proving Equation \eqref{max-y}. 

Equation \eqref{prices} is equivalent to  
\be\label{prices-equiv}\log w=\log p_k+\log\E[e^{\rho_k}]-\xi_k\,,\ee
for every sector $k\in\mc V$. 
Now, recall that, from the definitions \eqref{xi-def} of the total cost of debt and \eqref{centrality-def} of the discounted centrality, it follows that \be\label{sumgammaxi}\sum_{k\in\mc V}\gamma_k\xi_k= \sum_{j\in\mc V}\sum_{k\in\mc V}\gamma_jL_{jk}\zeta_k= \sum_{k\in\mc V}v_k^0\zeta_k\,.\ee
Multiplying both sides of the Equation \eqref{prices-equiv} by $\gamma_k$, summing up over $k\in\mc V$, using the constant returns-to-scale assumption \eqref{normalization-2}, and substituting Equation \eqref{sumgammaxi}, we get that 
$$\ba{rcl}\log w&=&\ds\sum_{k\in\mc V}\gamma_k\log w\\ &=&\ds\sum_{k\in\mc V}\gamma_k\log p_k+\ds\sum_{k\in\mc V}\gamma_k\log\E[e^{\rho_k}]-\ds\sum_{k\in\mc V}\gamma_k\xi_k\\ &=&\ds\sum_{k\in\mc V}\gamma_k\log p_k+\ds\sum_{k\in\mc V}\gamma_k\log\E[e^{\rho_k}]-\ds\sum_{k\in\mc V}v^0_k\zeta_k\,,\ea$$
that is equivalent to Equation \eqref{wage}. It follows from Equations \eqref{prices} and \eqref{wage} of Theorem \ref{theo:Walrasian} that the unit price $p_k$ of $k$-th good normalized by the one $\prod_jp_j^{\gamma_j}$ of the consumption good bundle
at the $\varphi$-rigid equilibrium satisfies 
$$\frac{p_k}{\prod_jp_j^{\gamma_j}}=e^{\xi_k}\prod_{j\ne k}\E[e^{\rho_j}|\varphi(\eta)]^{\gamma_j} e^{-\sum_{j}v_j^0\zeta_j}=
\prod_{j\ne k}\E[e^{\rho_j}|\varphi(\eta)]^{\gamma_j} e^{\sum_{j}(L_{kj}-v_j^0)\zeta_j}\,,$$
thus proving Equation \eqref{price-consbundle}. 

From Equations \eqref{FOC-z}, \eqref{1+rk}, \eqref{s-ter}, and \eqref{prices}, we then get 
$$z^0_{jk}
=\frac{A_{jk}s_ke^{-\zeta_k}}{p_j\E[e^{\rho_j}|\varphi(\eta)]}
=A_{jk}e^{-\zeta_k}v_k^{\zeta}\frac{w}{p_j\E[e^{\rho_j}|\varphi(\eta)]}
=e^{-\xi_j}A_{jk}e^{-\zeta_k}v_k^{\zeta}\,,$$
thus proving Equation \eqref{max-z}. 
Similarly, from Equations \eqref{FOC-l}, \eqref{1+rk}, and \eqref{s-ter}, we get 
$$l_k=\frac{\beta_ks_ke^{-\zeta_k}}{w}=\beta_kv_k^{\zeta}e^{-\zeta_k}\,,$$
thus proving Equation \eqref{labor}. Finally, from Equations \eqref{FOC-c}, \eqref{w-E}, and \eqref{s-ter}, we get 
$$c^0_k=\frac{E\gamma_k}{p_k\E[e^{\rho_k}|\varphi(\eta)]}
=\frac{w\gamma_k}{p_k\E[e^{\rho_k}|\varphi(\eta)]\psi(\zeta)}
=\frac{\gamma_ke^{-\xi_k}}{\psi(\zeta)}\,,$$
thus proving Equation \eqref{max-c}. Equations \eqref{actual-y}, \eqref{actual-z}, and \eqref{actual-c} follow by combining each of the three identities in \eqref{actual-quantities} with Equations \eqref{max-y}, \eqref{max-z}, and \eqref{max-c}, respectively.

Observe that 
$$\ba{rcl}\mc E(\eta)
&=&w+\sum_{k\in\mc V}\left(\mc A_k(\eta)-\mc L_k(\eta)\right) \\ 
&=&w+\sum_{k\in\mc V}\left(p_ky_k^{\eta}-\sum_{j\in\mc V}p_jz_{jk}^{\eta}\right)-w\sum_{k\in\mc V}l_k\\
&=&\sum_{k\in\mc V}\left(\frac{we^{\xi_k}e^{\rho_k}v^{\zeta}_ke^{-\xi_k}}{\E[e^{\rho_k}|\varphi(\eta)]}-\sum_{j\in\mc V}\frac{we^{\xi_j}e^{\rho_j}e^{-\xi_j}A_{jk}e^{-\zeta_k}v_k^{\zeta}}{\E[e^{\rho_j}|\varphi(\eta)]}\right)\\
&=&w\sum_{k\in\mc V}v_k^{\zeta}\left(\tau_k-\sum_{j\in\mc V}\tau_jA_{jk}e^{-\zeta_k}\right)\\
&=&\frac{w}{\psi(\zeta)}\gamma'L^{\zeta}(I-e^{-[\zeta]}A')\tau\\
&=&\frac{w}{\psi(\zeta)}\gamma'\tau\,,
\ea$$
where the first identity follows from \eqref{budget}, the second one from \eqref{assets} and \eqref{liabilities}, 
the third one from Equations \eqref{prices}, \eqref{actual-y}, \eqref{actual-z}, and the clearing condition \eqref{labor-clearing} for the labor market, 
the fourth one from the definition \eqref{tau-def} of the information-normalized total shock $\tau_k$, 
 the fifth one from the definition \eqref{centrality-def} of the discounted centrality $v_k^{\zeta}$, 
 and the last one from the definition \eqref{distorted-Leontief} of the discounted Leontief matrix. 
Hence, we have proved Equation \eqref{actual-budget}. 
 
Finally, we have  
$$\sum_{k\in\mc V}p_kc^{\eta}_k
=\sum_{k\in\mc V}\frac{we^{\xi_k}e^{\rho_k}\gamma_ke^{-\xi_k}}{E[e^{\rho_k}|\varphi(\eta)]\psi(\zeta)}=\frac{w}{\psi(\zeta)}\sum_{k\in\mc V}\gamma_k\tau_k
=\mc E(\eta)\,,$$
where the first identity follows from  Equations \eqref{prices} and \eqref{actual-c}, the second one from the definition \eqref{tau-def} of the information-normalized total shock $\tau_k$, and the last one from Equation \eqref{actual-budget}. This proves that not only the conditional expected budget constraint \eqref{expected-budget-constraint} is satisfied, but also the actual budget constraint \eqref{budget-constraint}  is satisfied (with equality).

Equation \eqref{actual-profit} for the actual profit of industry $k$ can be derived as follows: 
\be\label{prof}\ba{rcl}\pi_k(\eta)
&=&\ds\mc A_k(\eta)-(1+r_k\theta_k)\mc L_k(\eta)\\
&=&\ds p_ky^{\eta}_k-e^{\zeta_k}\sum_{j\in\mc V}p_jz_{jk}^\eta-e^{\zeta_k}wl_k\\
&=&\ds w\frac{v_k^{\zeta}e^{\rho_k}}{\E[e^{\rho_k}|\varphi(\eta)]}-w\sum_{j\in\mc V}\frac{v_k^{\zeta} A_{jk}e^{\rho_j}}{\E[e^{\rho_j}|\varphi(\eta)]}-wv_k^{\zeta}\beta_k\\
&=&\ds wv_k^{\zeta}\left(\tau_k-\sum_{j\in\mc V}A_{jk}\tau_j-\beta_k\right)
\\
&=&\ds wv_k^{\zeta}\left(\tau_k-\eps_k\right)
\,,
\ea\ee
where the first identity follows from Equations \eqref{profit}, the second one from \eqref{assets}, \eqref{liabilities}, and \eqref{rkopt}, the third one from \eqref{prices}, \eqref{actual-y}, \eqref{actual-z}, and \eqref{labor}, the forth one from \eqref{tau-def}, and the fifth one from \eqref{epsk}. 

On the other hand, Equation \eqref{actual-utility} for the actual welfare can be derived as follows: 
\eqref{actual-c}
$$\ba{rcl}U(c^{\eta})
&=&\ds\chi\prod_{k\in\mc V}(c_k^{\eta})^{\gamma_k}\\
&=&\ds\prod_{k\in\mc V}\gamma_k^{-\gamma_k} \prod_{k\in\mc V}\frac{\gamma_k^{\gamma_k}e^{\gamma_k(\rho_k-\xi_k)}}{\psi(\zeta)^{\gamma_k}}\\
&=&\ds\frac{\ds e^{\sum_l\sum_k\gamma_jL_{jk}(\eta_k-\zeta_k)}}{\psi(\zeta)^{\sum_{k}\gamma_k}}\\
&=&\ds\frac{\ds e^{\sum_kv^0_k(\eta_k-\zeta_k)}}{\ds\psi(\zeta)}\,,\ea
  $$
  where the first identity follows from \eqref{Cobb-Douglas-2}, the second one from the choice of the normalization constant $\chi$ and \eqref{actual-c}, the third one from \eqref{rho-def} and \eqref{xi-def},  and the last one from \eqref{Bonacich-def} and the constant returns-to-scale assumption \eqref{normalization-2} for the representative household's utility. 

  We are left with proving existence in Theorem \ref{theo:Walrasian}. This follows by choosing $(y^0,z^0,l,c^0,r, p,w)$ according to relations \eqref{max-y}, \eqref{max-z}, \eqref{labor}, \eqref{max-c}, \eqref{rkopt}, and\eqref{prices}. The fact that this is a $\varphi$-rigid Walrasian equilibrium follows by simply inverting previous reasoning.  
\qed

\subsection{Proof of Theorem \ref{prop:no-hulten}}\label{sec:proof-prop-no-hulten}
For every sector $k\in\mc V$, define
$$X_k=\frac{v^0_k}{\psi(\zeta)v^\zeta_k} \sum\limits_{j\in\mc V}  \gamma_j\tau_j-\tau_k\,.$$
It immediately follows from Equations \eqref{pd-welfare} and \eqref{DW} that Hulten's Theorem holds true if and only if \be\label{Xk=0all}X_k=0\,,\qquad\forall k\in\mc V\,.\ee

Observe that 
\be\label{EXk}\ba{rcl}\E[X_k|\varphi(\eta)]
&=&\ds\frac{v^0_k}{\psi(\zeta)v^\zeta_k} \sum\limits_{j\in\mc V}  \gamma_j\E[\tau_j|\varphi(\eta)]-\E[\tau_k|\varphi(\eta)]\\
&=&\ds\frac{v^0_k}{\psi(\zeta)v^\zeta_k} \sum\limits_{j\in\mc V}  \gamma_j-1\\
&=&\ds
\frac{v^0_k}{\psi(\zeta)v^\zeta_k}-1\,,\ea\ee
where the first identity follows from the fact that the Bonacich centrality $v^0_k$ is deterministic and the non-normalized discounted centrality $\psi(\zeta)v_k^\zeta$ is measurable with respect to the public signal $\varphi(\eta)$, the second one follows from Equation \eqref{Etauk=1}, and the third one from the constant returns-to-scale assumption \eqref{normalization-2}. 
Now, observe that, from Lemma \ref{lemma:monotonicity-zeta}(iii), we have that  
$$\psi(\zeta)v^\zeta_k\le \psi(0)v^0_k=v^0_k\,\qquad\forall k\in\mc V\,,$$
and that the above holds true as an equality for every sector $k\in\mc V$ if and only if $\zeta=0$, i.e., the cost of debt is null for every sector. 

If $\zeta\ne0$, then it immediately follows that, for some $k\in\mc V$,  $\E[X_k|\varphi(\eta)]>0$, so that $\P(X_k>0|\varphi(\eta))>0$ and $\P(X_k>0)=\E[\P(X_k>0|\varphi(\eta))]>0$. Hence, in this case, with positive probability, Equation \eqref{no hult} holds true for some sector $k\in\mc V$ and Hulten's Theorem does not hold true. 

On the other hand, if $\zeta=0$, then 
$$X_k=\sum_{j\in\mc V}\gamma_j\tau_j-\tau_k\,,\qquad\E[X_k|\varphi(\eta)]=0\,,\qquad \forall k\in\mc V\,.$$ 
Because of the constant returns-to-scale assumption \eqref{normalization-2}, it follows that, in this case, Equation \eqref{Xk=0all} holds true if and only if 
$$\tau_k=\tau_j\,,\qquad\forall j,k\in\mc V\,.$$
Using Equation \eqref{tau-def}, the above is equivalent to \be\label{rhojrhok}\rho_j=\rho_k+\log\frac{\E[e^{\rho_j}|\varphi(\eta)]}{\E[e^{\rho_k}|\varphi(\eta)]}\,,\qquad \forall j,k\in\mc V\,,\ee
i.e., 
\be\label{rho=rhoo}\rho=Y\1+h(\varphi(\eta))\delta^o\,,\ee
where, for an arbitrary sector  $o$, $Y=\rho_o$ and $h:\mc S\to\R^{\mc V}$ is defined as 
$$(h(\varphi(\eta)))_{j}=\log\frac{\E[e^{\rho_j}|\varphi(\eta)]}{\E[e^{\rho_o}|\varphi(\eta)]}\,,\qquad \forall j\in\mc V\,.$$
Recalling that $\rho=L\eta=(I-A')^{-1}\eta$ and multiplying both sides of equation \eqref{rho=rhoo} by $(I-A')$ from the left, we get that  
\be\label{eta=1}\eta=(I-A')\rho=\rho_o(I-A')\1+(I-A')h(\varphi(\eta))=
Y\beta+g(\varphi(\eta))\,,\ee
where $g(\varphi(\eta))=(I-A')h(\varphi(\eta))$. 
  Hence, if $\zeta=0$ and Hulten's theorem holds true, then Equation \eqref{eta=Y} holds true.  Conversely, if Equation \eqref{eta=Y} holds true for some scalar random variable $Y$ and measurable function $g:\mc S\to\R^{\mc V}$, then $$\rho=L\eta=YL\beta+Lg(\varphi(\eta))=Y\1+h(\varphi(\eta))\,,$$ 
  where $h(\varphi(\eta))=Lg(\varphi(\eta))$, so that $$\log\frac{\E[e^{\rho_j}|\varphi(\eta)]}{\E[e^{\rho_k}|\varphi(\eta)]}=\log\frac{e^{h_j(\varphi(\eta))}\E[e^{Y}|\varphi(\eta)]}{e^{h_k(\varphi(\eta))}\E[e^{Y}|\varphi(\eta)]}=h_j(\varphi(\eta))-h_k(\varphi(\eta))=\rho_j-\rho_k\,,$$
  for every two sectors $j$ and $k\in\mc V$, i.e., Equation \eqref{rhojrhok} holds true. 
  We have thus proved that Hulten's theorem holds true with probability one if and only if Equation \eqref{eta=Y} holds true. 
  
  Finally, notice that, if Hulten's theorem is violated with positive probability,  then there exists some sector $k$ such that $\E[X_k]=0$ and $\P(X_k=0)<1$, so that necessarily $\P(X_k>0)>0$, i.e., \eqref{dlogU>lambda} holds true with positive probability. 
\qed

\subsection{Proof of Theorem \ref{theo:welfare}}\label{sec:proof-theo-welfare}
The claim of Theorem \ref{theo:welfare} follows from Equation \eqref{actual-utility} in Corollary \ref{coro:Walrasian} and the following result. 

\begin{lemma}
For every $\zeta$ in $\R_+^{\mc V}$, 
\be\label{logpsi+v0zeta}\log\psi(\zeta)+\sum_{k\in\mc V}v^0_k\zeta_k\ge0\,.\ee
\end{lemma}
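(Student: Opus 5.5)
The plan is to read $\psi(\zeta)$ as an expectation over random walks in the production network and then apply Jensen's inequality. Expanding $L^{\zeta}=\sum_{h\ge0}(e^{-[\zeta]}A')^h$ in the definition \eqref{psi} gives
\[
\psi(\zeta)=\sum_{h\ge0}\ \sum_{i_0,\dots,i_h\in\mc V}\gamma_{i_0}A_{i_1i_0}A_{i_2i_1}\cdots A_{i_hi_{h-1}}\beta_{i_h}\,e^{-(\zeta_{i_0}+\zeta_{i_1}+\dots+\zeta_{i_h})}.
\]
Each term is a finite walk $\omega=(i_0,\dots,i_h)$ in the network. It starts at a consumer-facing sector $i_0$ and moves from each sector to one of its direct suppliers. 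It ends at a sector $i_h$ where labor enters. The factor $e^{-\zeta}$ is collected once at every node visited, the endpoint included, because of the extra factor $e^{-\zeta_k}$ multiplying $\beta_k$ in \eqref{psi}. The weights of the walks are
\[
P(\omega):=\gamma_{i_0}A_{i_1i_0}\cdots A_{i_hi_{h-1}}\beta_{i_h}\ge0 .
\]

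First I will check that these weights form a probability distribution on walks. Their total is exactly $\psi(0)$, and the text already computes $\psi(0)=\gamma'L\beta=1$ from $L\beta=\mathbf 1$. Write $N_o(\omega)$ for the number of visits of the walk $\omega$ to sector $o$. Then
\[
\psi(\zeta)=\E_P\bigl[e^{-\sum_o N_o\zeta_o}\bigr].
\]
Since the exponential is convex, Jensen's inequality gives
\[
\psi(\zeta)\ge e^{-\sum_o \E_P[N_o]\zeta_o}.
\]
The series converges absolutely for $\zeta\ge0$, because the spectral radius of $A$ is less than one, so the rearrangements are harmless.

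The key remaining step is to show $\E_P[N_o]=v^0_o$. I will decompose each visit to $o$ into two parts: the walk prefix from $i_0$ to that occurrence of $o$, and the walk suffix from $o$ to the endpoint. Summing the prefix weights over walks starting at $j$ gives $\gamma_jL_{jo}$. Summing the suffix weights gives $(L\beta)_o=1$. Hence
\[
\E_P[N_o]=\sum_j\gamma_jL_{jo}\cdot 1=v^0_o .
\]
Taking logarithms in the Jensen bound then yields \eqref{logpsi+v0zeta}.

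The bookkeeping of visits is the only delicate point, and I expect it to be the main obstacle. Walks may revisit $o$ when the network has cycles. The prefix/suffix splitting counts each occurrence of $o$ separately, which is exactly what $N_o$ requires, so it handles revisits correctly. For the equality case that Theorem \ref{theo:welfare} needs, I would use strict convexity of the exponential: equality forces $\sum_oN_o\zeta_o$ to be $P$-a.s. constant. The length-zero walks $(k)$ with $\gamma_k\beta_k>0$, together with their one-step extensions, should then force $\zeta=0$ on every sector with $v^0_o>0$; sectors with $v^0_o=0$ do not affect welfare in any case.
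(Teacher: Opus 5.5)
Your argument is correct and is essentially the paper's own proof. You expand $\psi(\zeta)$ as a sum over walks with weights $\gamma_j\alpha_\omega\beta_k$, which form a probability distribution since $\gamma'L\beta=1$, apply Jensen, and split each visit to $i$ into a prefix (summing to $\gamma_jL_{ji}$) and a suffix (summing to $(L\beta)_i=1$), which yields $\sum_i v^0_i\zeta_i$.

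The equality-case sketch you append goes beyond the lemma and is false in general, so it cannot support the ``only if'' direction of Theorem \ref{theo:welfare}. Take $n=1$, $A=0$, $\beta_1=\gamma_1=1$: then $\psi(\zeta)=e^{-\zeta_1}$, so equality holds for every $\zeta_1\ge0$ even though $v^0_1=1$. Equality only forces $\sum_o N_o\zeta_o$ to be constant on the support of $P$, and that need not imply $\zeta=0$.
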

\begin{proof}
For two sectors $j,k\in\mc V$, let 
$$\Omega_{jk}=\bigcup_{l=0}^{+\infty}\left\{\omega=(\omega_0,\omega_1,\ldots,\omega_l)\in\mc V^{l+1}:\,{\omega}_0=j,{\omega}_l=k\right\}\,.$$
For ${\omega}$ in $\Omega_{jk}$, define 
$$\alpha_{\omega}=\prod_{h=1}^{l_{\omega}}A_{{\omega}_h{\omega}_{h-1}}\,,$$
where $l_{\omega}$ is the length of ${\omega}$, i.e., the non-negative integer such that ${\omega}\in\mc V^{l_{\omega}+1}$, and where we adopt the usual convention that an empty product (in this case when $l_{\omega}=0$) is considered equal to $1$. 

From the definition of the distorted Leontief matrix \eqref{distorted-Leontief}, using Taylor expansion, for every $\zeta$ in $\R_+^{\mc V}$ we get
\be\label{expansion-Loentief}
\ba{rcl}\ds L_{jk}^{\zeta} e^{-\zeta_k} &=&
\sum\limits_{l=0}^{+\infty}([e^{-\zeta}]A')^l_{jk}e^{-\zeta_k}\\
&=&\ds\sum\limits_{\omega\in\Omega_{jk}}\alpha_\omega\prod_{h=0}^{l_\omega}e^{-\zeta_{\omega_h}}
\ea\ee
In particular, for $\zeta=0$, we obtain
\be\label{Lalpha}L_{jk}=\sum_{\omega\in\Omega_{jk}}\alpha_\omega\,.\ee
Observe that, thanks to the constant returns-to-scale assumption \eqref{normalization-1} for the production functions, we have that $A'\1=\1-\beta$, so that $\beta=(I-A')\1$, hence 
$L\beta=L(I-A')\1=\1$, i.e., 
$$\sum_{k\in\mc V}L_{jk}\beta_k=1\,,\qquad\forall j\in\mc V\,.$$
Substituting \eqref{Lalpha} in the above we get
\be\label{Lbeta} \sum_{k\in\mc V}\sum_{\omega\in\Omega_{jk}}\beta_k\alpha_\omega=1\,,\qquad\forall j\in\mc V\,.\ee
Using the constant returns-to-scale assumption \eqref{normalization-2} for the representative household's utility  and Equation \eqref{Lbeta}, we get that 
\be\label{prob-dist}\sum_{j\in\mc V}\sum_{k\in\mc V}\sum_{\omega\in\Omega_{jk}}\gamma_j\beta_k\alpha_\omega=1\,.\ee
Substituting now expression \eqref{expansion-Loentief} into \eqref{psi}, using \eqref{prob-dist} and the Jensen inequality, we obtain that 
\be\label{-logpsi}\ba{rcl}-\log\psi(\zeta)
&=&\ds-\log\sum_{j\in\mc V}\sum_{k\in\mc V}\gamma_jL_{jk}^{\zeta}\beta_ke^{-\zeta_k}\\
&=&\ds-\log\sum_{j\in\mc V}\sum_{k\in\mc V}\sum_{\omega\in\Omega_{jk}}\gamma_j\beta_k\alpha_\omega\prod_{h=0}^{l_\omega}e^{-\zeta_{\omega_h}}\\
&\le&\ds\sum_{j\in\mc V}\sum_{k\in\mc V}\sum_{\omega\in\Omega_{jk}}\gamma_j\beta_k\alpha_\omega\sum_{h=0}^{l_\omega}\zeta_{\omega_h}
\,.\ea
\ee
Now, observe that to every pair $(\omega, m)$ where $\omega$ is a walk in $\Omega_{jk}$ and $h$ is an integer in $\{0,\dots , l_{\omega}\}$, we can associate the triple $(i, \omega', \omega'')$ where $i=\omega_h$, $\omega'=\omega_{|\{0,\dots , m\}}$, and $\omega''=\omega_{|\{m,\dots , l_{\omega}\}}$. A direct check shows that the application 
$$(\omega, m)\mapsto (i, \omega', \omega'')$$
is injective and its image is given by the set of triples 
$$\{(i, \omega', \omega'')\,|\, i\in\mc V,\, \omega'\in\Omega_{ji},\, \omega''\in\Omega_{ik}\}$$
This yields the following identity
$$\sum_{\omega\in\Omega_{jk}}\alpha_\omega\sum_{h=0}^{l_\omega}\zeta_{\omega_h}=\sum_{i\in\mc V}\zeta_i\sum_{\omega\in\Omega_{ji}}\alpha_\omega\sum_{q\in\Omega_{ik}}\alpha_q$$
Substituting the above into the right-hand side of \eqref{-logpsi} yields 
$$
\ba{rcl}-\log\psi(\zeta)
&\le&\ds\sum_{j\in\mc V}\sum_{k\in\mc V}\gamma_j\beta_k\sum_{i\in\mc V}\zeta_i\sum_{\omega\in\Omega_{ji}}\alpha_\omega\sum_{q\in\Omega_{ik}}\alpha_q\\ 
&=&\ds\sum_{i\in\mc V}\zeta_i\sum_{j\in\mc V}\gamma_j\sum_{\omega\in\Omega_{ji}}\alpha_\omega\sum_{k\in\mc V}\sum_{q\in\Omega_{ik}}\beta_k\alpha_q\\
&=&\ds\sum_{i\in\mc V}\zeta_i\sum_{j\in\mc V}\gamma_j\sum_{\omega\in\Omega_{ji}}\alpha_\omega\\
&=&\ds\sum_{i\in\mc V}\zeta_i\sum_{j\in\mc V}\gamma_jL_{ji}\\
&=&\ds\sum_{i\in\mc V}\zeta_iv^0_{i}\,,
\ea$$
where the second identity follows from \eqref{Lbeta}, the third one from \eqref{Lalpha}, and the last one from definition \eqref{Bonacich-def}. 
We have thus proved Equation \eqref{logpsi+v0zeta}. \qed\end{proof}

\subsection{Proof of Proposition \ref{prop:lev-effect-qp}}\label{sec:proof-prop-lev-effect-qp}

First, by Proposition \ref{prop-zeta}, if the leverage $\theta_o$ of a sector $o$ increases and the leverage $\theta_k$ of every other sector $k\ne o$ remains the same, then the primary cost of debt $\zeta_o$ of that sector does not decrease, while the primary cost of debt $\zeta_k$ of all other sectors $k\ne o$ is unaltered. 

(i)   
 From Equation \eqref{xi-def} and the fact that the entries of the Leontief matrix $L$ are non-negative, we have that the total cost of debt $\xi_k=\sum_{j\in \mc V}L_{kj}\zeta_j$ of every sector $k$ does not decrease and, for every sector $k\ne o$ that is not a customer of sector $o$, the total cost of debt $\xi_k=\sum_{j\ne o}L_{kj}\zeta_j$ remains unaltered, when $\theta_o$ is increased. 

(ii) By Equation \eqref{labor} of Theorem \ref{theo:Walrasian}, we have that the labor employed by an industry $k\in\mc V$ is $l_k=v_k^{\zeta}\beta_ke^{-\zeta_k}$. If sector $k$ is not a supplier of sector $o$, then Lemma \ref{lemma:monotonicity-zeta}(iv) implies that the discounted centrality $v^{\zeta}_k$ is non-decreasing in $\zeta_o$. Hence, $l_k$ is non-decreasing in $\theta_o$ for every sector $k\ne o$ that is not a supplier of $o$. From the market for labor clearing condition \eqref{labor-clearing}, it follows that the employed labor $l_k$ by at least one sector $k$ among $o$ and its suppliers does not increase when the leverage $\theta_o$ is increased. 


(iii) 
By Equation \eqref{max-c} of Theorem \ref{theo:Walrasian}, we have that the maximal consumption of a sector $k\in\mc V$ at the $\varphi$-rigid equilibrium is given by 
$c^0_k=\gamma_ke^{-\xi_k}/\psi(\zeta)$. Since, by Lemma \ref{lemma:monotonicity-zeta}(ii), the normalization factor $\psi(\zeta)$ is non-increasing in $\zeta$, hence in $\theta$, point (i) of the claim implies that the  maximal consumption of every sector $k$ that is not a customer of sector $o$ does not decrease when $\theta_o$ is increased. 

(iv) The first statement follows from point (i). The second one follows from Equation \eqref{price-consbundle} of Theorem \ref{theo:Walrasian} that the unit price of $k$-the good normalized by the one of the consumption good bundle
at the $\varphi$-rigid equilibrium 
is non-decreasing in $\zeta_o$ (hence in $\theta_o$) if $L_{ko}\ge v^0_o$ and non-increasing in $\zeta_o$ (hence in $\theta_o$) if $L_{ko}\le v^0_o$. \qed

\subsection{Proof of Proposition \ref{prop:convexity}}\label{sec:proof-prop-convexity}

We introduce the function of two variables 
$$f(x, t)=\frac{e^{tx}}{\E[e^{t\eta_o}|\varphi(\eta)]}$$
Using the fact that $\rho=L\eta$, the definition of $\tau_k$ and formula \eqref{epsk}, we have that \eqref{default} is equivalent to the inequality
\be\label{default2} f(\eta_o, L_{ko})<(1-\sum\limits_{j\in\mc V} A_{jk})+\sum\limits_{j\in\mc V} A_{jk}f(\eta_o, L_{jo})\ee
When  $k$ is not reachable from $o$, we have that $L_{jo}=0$ for $j=k$ as well for every $j$ that is a direct supplier of $k$ (i.e., such that $A_{jk}>0$). This implies that expression
\eqref{default2} is false as it boils down to $1<1$. This proves the first item.
For $k\neq o$, we have $L_{ko}=\sum\limits_jA_{jk}L_{jo}$ so that \eqref{default2} can be related to the convexity properties of the function $t\mapsto f(x,t)$. A direct computation shows that
\be\label{derivatives}\frac{\partial f}{\partial t}=f(x, t)(\eta_o-m_1(t))\,,\qquad 
\frac{\partial^2f}{\partial t^2}=f(x, t)((\eta_o-m_1(t))^2-m_2(t))\,.\ee
When condition \eqref{inside} holds true, it follows from the second expression in \eqref{derivatives} that $f(\eta_o, t)$ is concave in $t$ for $t\in [0, \bar t]$. This implies that if $L_k^-<\bar t$, condition \eqref{default2} cannot hold and, thus, sector $k$ is not in default. Instead, under assumption \eqref{notinside}, $f(\eta_o, t)$ is strictly convex in $t$ and thus condition \eqref{default2} is always verified. This proves the result.
\qed

\subsection{Proof of Proposition \ref{prop:convexity2}}\label{sec:proof-prop-convexity2}
Considering that $L_{oo}=1+\sum_jA_{jo}L_{jo}$, the default condition for node $o$ can be written as 
\be\label{default2o} f(\eta_o, 1+\sum_jA_{jo}L_{jo})<(1-\sum\limits_{j\in\mc V} A_{jo})+\sum\limits_{j\in\mc V} A_{jo}f(\eta_o, L_{jo})\ee
When no supplier of $o$ can be reached from $o$ itself, we necessarily have that $A_{jo}L_{jo}=0$ for every node $j$ so that relation \eqref{default2o} coincides with expression \eqref{default-o}. This proves the first claim. 

We now notice that when \eqref{inside2} holds true, it follows from the computations \eqref{derivatives} that $f(\eta_o, t)$ is concave in $t$ for $t\in [0, L_o^-]$ and monotonically non-decreasing so that
$$f(\eta_o, 1+\sum_jA_{jo}L_{jo})\geq
f(\eta_o, \sum_jA_{jo}L_{jo})\geq (1-\sum\limits_{j\in\mc V} A_{jo})+\sum\limits_{j\in\mc V} A_{jo}f(\eta_o, L_{jo})$$
This implies that sector $o$ is not in default.

 Instead, under assumption \eqref{notinside2}, $f(\eta_o, t)$ is strictly convex in $t$ and monotonically non-increasing. This implies that condition \eqref{default2o} is always verified. This completes the proof.

\subsection{Proof for Section \ref{sec:ex-line} }\label{sec:proof-sec7}
On the other hand, Equations \eqref{default}, \eqref{epsk}, and \eqref{normalization-1} imply that, for $k=2,3,4$, industry $k$ defaults if and only if \be\label{default-cond-exp}\tau_k<\eps_k=\sum_{j\in\mc V}A_{jk}\tau_j+\beta_k=\alpha\tau_{k-1}+1-\alpha\,.\ee
In fact, in this case, information-normalized total shocks can be computed explicitly:
 \be\label{tau-4}\tau_1=1\,,\qquad
\tau_k=\frac{\alpha^{k-2}+\lambda}{\lambda}e^{\eta_2\alpha^{k-2}}\,,\quad 
k=2,3,4\,.\ee
Using Equation \eqref{tau-4}, condition \eqref{default-cond-exp} is equivalent to 
\begin{equation}\label{line}
\begin{array}{ll}
e^{\eta_2}(1+\lambda)<\lambda\qquad &k=2\\[5pt]
f_k(\eta_2,\alpha,\lambda)<\lambda(1-\alpha)\qquad &k=3,4\,,
\end{array}
\end{equation}
where 
$$f_k(x,\alpha,\lambda)=(\alpha^{k-2}+\lambda)e^{x\alpha^{k-2}}-(\alpha^{k-2}+\alpha\lambda)e^{x\alpha^{k-3}}\,.$$
Notice that 
$$f_k(0,\alpha,\lambda)=(1-\alpha)\lambda\,,\qquad \lim_{x\to-\infty}f_k(x,\alpha,\lambda)=0\,.$$
Moreover,
$$\frac{\partial f_k}{\partial x}(x,\alpha,\lambda)=\alpha^{k-2}\left(e^{\alpha^{k-2}x}(\alpha^{k-2}+\lambda)-e^{\alpha^{k-3}x}(\alpha^{k-3}+\lambda)\right)\,,$$
so that in particular
$$\frac{\partial f_k}{\partial x}(0,\alpha,\lambda)=\alpha^{k-5}(\alpha-1)<0\,.$$
Furthermore, 
$$\frac{\partial^2 f_k}{\partial x^2}(x,\alpha,\lambda)=\alpha^{2k-5}\left(e^{\alpha^{k-2}x}(\alpha^{k-2}+\lambda)\alpha-e^{\alpha^{k-3}x}(\alpha^{k-3}+\lambda)\right)\,,$$
so that whenever $\frac{\partial f_k}{\partial x}(x,\alpha,\lambda)=0$, we have that 
$$\frac{\partial^2 f_k}{\partial x^2}(x,\alpha,\lambda)=\alpha^{2k-5}e^{\alpha^{k-2}x}(\alpha^{k-2}+\lambda)(\alpha-1)<0\,.$$
This proves that there exists a unique $x^*_k(\alpha,\lambda)<0$ such that $f_k(x_k^*(\alpha,\lambda),\alpha,\lambda)=\lambda(1-\alpha)$ and sector $k$ defaults if and only if $\eta_k<x^*_k(\alpha,\lambda)$. 
In particular, we get $x^*_2(\alpha,\lambda)=\log{\lambda}/(1+\lambda)$ so that sector $o=2$ defaults if and only if $\eta_2<\log\frac{\lambda}{1+\lambda}$. 
In fact, we have that $x^*_k(\alpha,\lambda)\le\log{\lambda}/(1+\lambda)$ for $k\ge2$, so that if $\log{\lambda}/(1+\lambda)\leq \eta_o$, then no sector is default. The exact values $x^*_3(\alpha,\lambda)$ and $x^*_4(\alpha,\lambda)$ can be evaluated numerically as done in Section \ref{sec:ex-line}.

\end{appendix}

\bibliographystyle{abbrv}
\bibliography{Bibliography}

\end{document}